\documentclass{article}
\usepackage{graphicx} 
\usepackage{amsmath,amsfonts, bm, amsthm}
\usepackage[left=2.5cm,right=2.5cm,top=2.5cm,bottom=2cm]{geometry}
\usepackage{xcolor}
\usepackage{amsthm}         
\usepackage{dsfont}			
\usepackage{hyperref}
\usepackage{multirow}
\usepackage{graphicx}
\usepackage{subcaption}
\usepackage{epsfig}
\usepackage{mathbbol}
\usepackage{booktabs} 
\usepackage{subcaption}  
\usepackage{comment}
\usepackage{diagbox}
\usepackage{booktabs}
\usepackage{natbib}
\usepackage{placeins}

\def\ab{{\bm{a}}}
\def\Xb{{\bm{X}}}
\def\Zb{{\bm{Z}}}
\def\xb{{\bm{x}}}
\def\yb{{\bm{y}}}
\def\epb{{\bm{\varepsilon}}}
\def\etab{{\bm{\eta}}}
\def\E{\operatorname{E}}
\def\var{\operatorname{var}}
\newcommand\smallO{{{\scriptscriptstyle\mathcal{O}}}} 

\newtheorem{theorem}{Theorem}
\newtheorem{remark}{Remark}
\newtheorem{lemma}[theorem]{Lemma}

\newcommand*{\Prob}[2]{\operatorname{P}_{#1} \left( {#2} \right)}

\begin{document}

\title{Asymptotically exact threshold for detecting anomalies in multivariate Gaussian data with application to time series}

\author{Marie Turčičová$^{a0}$,
Patrícia Martinková$^{a}$ \\
		\\
\small {\textit{$^a$Institute of Computer Science, Czech Academy of Sciences,}}\\
\small{\textit{Pod Vodárenskou věží 271/2, Prague, 182 00,  Czech Republic}}}
\date{} 
	
\footnotetext{Corresponding author. Email address: turcicova@cs.cas.cz}
\maketitle

\begin{abstract}
In this paper, we propose a new thresholding technique for detecting anomalies in multivariate normal random samples, under the assumption that anomalous observations are sparse and differ from the rest of the data in their mean. The mean vector of the non-anomalous data is assumed to be zero, while the covariance matrix is unknown. We derive conditions on the mean shift of the anomalous observations, as well as on the covariance matrix and its estimator, under which the proposed procedure achieves asymptotically exact detection, meaning that the expected number of misclassified observations converges to zero as the sample size increases. In addition, we establish conditions under which exact anomaly detection is impossible for any procedure. The performance of the proposed method is illustrated through an extensive simulation study and compared with other widely used anomaly detection methods. Real-data analyses involving wearable activity measurements and air pollution time series provide an assessment of its performance in real-world settings. 
\end{abstract}

\bigskip
\noindent \textit{Keywords:} exact anomaly detection, {threshold}, {multivariate normal distribution}, {time series}, {Huber covariance estimator}

\section{Introduction}
Anomalies, often referred to as outliers in the literature, are observations that deviate significantly from the rest of the data (see, for example, \cite{Barnett-1998, Chandola-2009, Hawkins-1980, Schmidl-2022}). Such deviations can take various forms, but the most common is a change in the mean, as considered in this paper. 
More specifically, we focus on anomalies represented by individual measurements that differ from the rest of the data by their expected value. These typically arise as transient measurement errors or disturbances that do not imply a permanent change in the process mean.
Anomaly detection is then a task of identifying such nonconforming observations within the dataset and can be viewed as a binary classification problem. In statistical analysis, anomalous observations present a serious challenge, as they can lead to issues such as poor model fit, model misspecification, or biased parameter estimation, as noted by \cite{Laurikkala-2000, Rousseeuw-1987, Tsay-2000}. In practice, their detection is highly valuable because they often indicate critical events such as manufacturing defects, system failures, or health abnormalities (see, for example, \cite{Chandola-2009, Schmidl-2022, Hill-2010, Lu-2007} and references therein). 

Characterizing anomalies in a multivariate setting is considerably more challenging than in the univariate case. In addition to distorting measures of location and scale, anomalous observations may also affect the dependence structure of the data, as reflected in the correlations between variables (see, e.g., \cite{Gnanadesikan-1972}). 
A multivariate observation may be considered anomalous either due to a gross error in one of its components or because an unusual combination of otherwise typical component values makes it inconsistent with the rest of the data, as discussed by \cite{Barnett-1998, Gnanadesikan-1972, Laurikkala-2000}. Multivariate outlier detection aims to identify these atypical combinations of values, rather than focusing solely on extreme behavior in individual variables. A standard approach is to employ distance functions, which reduce multivariate responses to scalar measures \cite[Chapter~7]{Barnett-1998}. A widely used and effective class of such measures consists of positive semi-definite quadratic forms of the type $\boldsymbol{e}^\top M \boldsymbol{e}$, where $\boldsymbol{e}$ denotes a residual vector (or a vector of deviations) and $M$ is a matrix constructed from these residuals (see, for example, \cite[Chapter~7]{Barnett-1998} and \cite{Gnanadesikan-1972}).

Despite the wide range of existing detection techniques, thresholding methods remain important, particularly for detecting anomalies caused by shifts in the mean. However, determining an appropriate threshold is a statistically nontrivial task. In some cases, thresholds are chosen subjectively or based on domain-specific engineering knowledge, as discussed by \cite{Basu-2007}. Other methods define thresholds as fixed multiples of the standard deviation estimated from the data or from model residuals, an approach considered by \cite{Chandola-2009, Rousseeuw-1987, Shewhart-1931, Zhou-2016}. A related approach is the boxplot rule, which classifies observations as outliers if they lie more than 1.5 times the interquartile range below the first quartile or above the third quartile, as described in \cite{Chandola-2009, Laurikkala-2000}.

In several frameworks, the threshold is defined not as a fixed value but as a probability level or risk parameter chosen by the researcher. Such approaches typically require some a priori knowledge of the data distribution. A common practice is to select a probability level (for example, 0.05 or 0.01) and to use a quantile of a suitable distribution (e.g., normal, Student’s~t, or beta distribution) or a function thereof to determine the corresponding threshold value, as adopted by \cite{Chandola-2009, Gnanadesikan-1972, Siotani-1959, Liu-1991}. Similarly, in conformal prediction-based approaches considered by \cite{Laxhammar-2014}, a new observation is flagged as anomalous if its estimated p-value, computed from a nonconformity score, falls below a chosen threshold~$\epsilon$ (e.g., 0.005, 0.01, or 0.02).

Another popular thresholding approach is based on extreme value theory. In this approach, an initial threshold is selected as a high empirical quantile (e.g., 98\%), so no prior knowledge of the data distribution is required. Observations exceeding this threshold are modeled using an extreme-value distribution (e.g., generalized Pareto distribution), and its quantiles then serve as thresholds for identifying the actual anomalies. This method is commonly referred to as the Peaks-Over-Threshold (POT) approach (see, for example, \cite{Siffer-2017}). 

Most existing thresholding methods rely on subjective choices of probability levels of the corresponding distributional or empirical quantiles. Such choices play a crucial role in determining the number of detected anomalies and can substantially affect overall performance. In the absence of prior knowledge about the expected number of outliers, selecting an appropriate probability level becomes difficult. 

To address this limitation, we propose a thresholding procedure for detecting outliers in multivariate Gaussian samples that requires minimal subjective tuning. Our approach builds on the results of \cite{Butucea-2018} for univariate data coming from a normal distribution with known variance, where the authors introduced a threshold based solely on distributional properties, without requiring the specification of quantiles. 
That threshold was developed in the context of variable selection in a Gaussian sequence model and was shown (under certain conditions) to achieve asymptotically exact identification of the relevant components of an unknown parameter vector. In this paper, we extend these results to the multivariate setting with unknown covariance and adapt them to anomaly detection.

The proposed method requires only limited subjective input: the sole tuning parameter is constrained by an asymptotic condition, which significantly restricts its variability. Moreover, neither the form of the threshold nor the choice of this parameter depends on prior knowledge of the sparsity level of the anomalies and thus, the method is adaptive. We also address the problem of covariance estimation and specify conditions on the data and the covariance estimator under which the proposed method achieves asymptotically exact detection, i.e., the expected number of misclassified observations converges to zero as the sample size increases. The method can be applied directly to multivariate Gaussian samples with known mean, as well as to residuals from statistical models that are approximately Gaussian and centered. We illustrate its performance using residuals from ARIMA and VAR models. 

The paper is organized as follows. In Section~\ref{sec:method}, we review commonly used thresholding methods for outlier detection in univariate and multivariate settings. The main results are presented in Section \ref{sec:AOT}, where we propose an asymptotically exact threshold for detecting anomalies in normally distributed data, applicable to both univariate and multivariate settings. We also provide theoretical guarantees describing its properties (Theorems~\ref{theorem:upper_bound} and~\ref{theorem:lower_bound}), with proofs given in the Appendix. Section~\ref{Sec:simul} presents simulation studies, demonstrating theoretical properties of the proposed approach, and comparing it with existing methods. 
In Section~\ref{sec:realdata}, we investigate wearable-device measurements and air-pollution data using the proposed methodology, highlighting its performance relative to existing approaches.
The paper concludes with a discussion.


\section{Methods} \label{sec:method}

Assume that the standard data have zero expected value and the anomalies have an expected value $\ab=(a_1,\ldots,a_p)^\top \in \mathbb{R}^p\setminus \{\boldsymbol{0}\}$. In particular, consider that observations $\Xb_i = (X_{i1},\ldots,X_{ip})^\top$ are described by the model
\begin{equation} \label{model}
	\Xb_i =  \eta_i \ab + \epb_i, \quad i=1,\ldots,n,
\end{equation}
where $\epb_i \sim N_p(\bm{0},\Sigma)$ are independent random vectors representing noise (with positive definite $p \times p$ covariance matrix $\Sigma$) and $\eta_i$'s are anomaly indicators, i.e.  for $\eta_i=0$, $\Xb_i \sim N_p(\bm{0},\Sigma)$ is a standard observation and for $\eta_i =1$, $\Xb_i \sim N_p(\ab,\Sigma)$ is an anomaly. 
In \cite{Barnett-1998}, the distribution $N_p(\ab,\Sigma)$ is called a location-slippage alternative model. Such data $\Xb_1,\ldots,\Xb_n$ can represent residuals from regression models, Kalman filters, and many other statistical models, in which normality is a common assumption. 
Based on these data, our goal is to construct an estimator $\hat{\etab} = (\hat{\eta}_i)_{i=1}^n$ of $\etab = (\eta_i)_{i=1}^n$ that accurately identifies the anomalous observations.

In \cite{Butucea-2018}, the estimator $\hat{\etab}$ is called \textit{a selector} because it selects relevant variables of the unknown vector of parameters in a regression model, which can also be seen as a problem of signal recovery. In contrast, the term ``detection" is typically reserved for the task of deciding whether there are any relevant components or the signal is identically zero.
From this perspective, the problem considered in this paper is more precisely described as anomaly selection, since our goal is to select observations exhibiting anomalous behavior. However, the term ``anomaly detection'' is firmly established in the statistical literature and practical applications. Hence, within the framework of model~(\ref{model}), we adopt this terminology and refer to the vector $\hat{\etab}$ as \textit{a detector}.

\subsection{Anomaly detection thresholding methods for univariate data} \label{sec:existing_methods}
We begin by reviewing methods commonly used for anomaly detection in a univariate random sample $X_1,\ldots, X_n$ coming from the Gaussian distribution $N(\mu,\sigma^2)$. In the case of univariate observations, the detection process is straightforward, because the extremeness can be easily measured by the magnitude of the observation, and we only have to set a limit for this magnitude - the threshold. 

Probably the most widely used method is the \textit{$3\sigma$ rule}, a classical approach considered by \cite{Chandola-2009, Rousseeuw-1987, Shewhart-1931}. By using this rule, we mark as anomalous observations $X_i$ that are distant from $\mu$ by more than $3\sigma$, i.e.
\begin{equation} \label{3sigma}
	|X_i| > \bar{X} + 3s,
\end{equation}
where the unknown $\mu$ and $\sigma$ are estimated by sample mean $\bar{X}$ and sample standard deviation~$s$. This rule comes from the fact that a normally distributed random variable has only 0.3~\% probability of lying beyond~$3\sigma$. 

Another standard technique, commonly implemented in statistical software (e.g., R, SAS, SPSS) as part of basic descriptive statistics, is the \textit{boxplot rule} (see, e.g., \cite{Chandola-2009, Laurikkala-2000}). This rule is based on the interquartile range, defined as $IQR = Q_3 - Q_1$, where $Q_1$ and $Q_3$ denote the first and third quartiles, respectively. By this rule, an observation $X_i$ is classified as anomalous if it lies more than $1.5 \cdot IQR$ below $Q_1$ or more than $1.5 \cdot IQR$ above $Q_3$, that is,
\begin{equation} \label{boxplot_rule}
	X_i < Q_1 - 1.5 \cdot IQR \quad \text{or} \quad X_i > Q_3 + 1.5 \cdot IQR.
\end{equation}
This technique is applicable to data coming from an arbitrary distribution. A normal random variable lies in the region between $Q_1 - 1.5 \cdot IQR$ and $Q_3 + 1.5 \cdot IQR$ with probability 0.993, so for Gaussian data, this technique provides similar results as the $3\sigma$ rule. 

A technique known as the \textit{maximum normed residual test} is described in \cite{Grubbs-1969, Stefansky-1971, Stefansky-1972} and it marks an observation $X_i$ coming from a normal distribution as anomalous if 
\begin{equation} \label{Grubbs}
	e_i = \frac{|X_i-\bar{X}|}{s} > \frac{n-1}{\sqrt{n}} \sqrt{\frac{qt_{n-2}(\alpha/(2n))^2}{n-2+qt_{n-2}(\alpha/(2n))^2}},
\end{equation}
where $s$ is the standard deviation of the data and $qt_r(\alpha)$ denotes the $\alpha$ quantile of distribution~$t_r$. This threshold for a single normed residual $e_i$ was derived by directly using its monotone relationship with a Student $t$-statistic and using Bonferroni's method. Following the terminology of \cite{Chandola-2009}, we refer to this method as the \textit{Grubbs test} throughout this paper.

\subsection{Anomaly detection thresholding methods for multivariate data} \label{sec:existing_methods_multivariate}

For multivariate observations $\mathbf{X}_1,\ldots,\mathbf{X}_n$, the detection problem becomes more challenging because multivariate outliers are not necessarily univariate outliers. Even unusual combinations of moderate values may render an observation anomalous in the multivariate sense. Detecting such atypical value combinations requires introducing some ordering principle that measures the extremeness of the observed vectors. This principle is usually implemented using a distance measure that transforms the multi-response observations into a single scalar value. 
A popular choice of such a distance function is the Mahalanobis squared distance~(see, e.g., \cite{Anderson-2003})
\begin{equation} \label{Mahalanobis_true}
	M_{\Sigma}(\Xb_i) = (\Xb_i - \boldsymbol{\mu})^\top \Sigma^{-1}(\Xb_i - \boldsymbol{\mu}).
\end{equation}
One of the key advantages of the Mahalanobis distance is its scale invariance because each variable is standardized to a mean of zero and a variance of one. The sample Mahalanobis distance is defined by 
\begin{equation} \label{Mahalanobis}
	M_C(\Xb_i) = (\Xb_i - \bar{\Xb})^\top C^{-1}(\Xb_i - \bar{\Xb})
\end{equation}
with $C$ being a covariance estimator based on $\Xb_1,\ldots,\Xb_n$. 

Denote by $S_1$ the sample covariance matrix. Authors of
\cite{Gnanadesikan-1972} showed that for normally-distributed observations
$\frac{n}{(n-1)^2}M_{S_1}(\Xb_i) \sim  Beta\left( \frac{p}{2}, \frac{n-p-1}{2} \right).$
Thus, $\Xb_i$ can be classified as anomalous if 
\begin{equation} \label{qBeta}
	M_{S_1}(\Xb_i) > \frac{(n-1)^2}{n} qBeta\left(1-\alpha, \frac{p}{2}, \frac{n-p-1}{2} \right),
\end{equation}
where $qBeta\left(1-\alpha, d_1, d_2 \right)$ denotes an $(1-\alpha) 100\%$ quantile of beta distribution with $d_1$ and $d_2$ degrees of freedom. 

Authors of \cite{Laurikkala-2000} propose applying a standard boxplot-rule for $M_S(\Xb_1), \ldots,M_S(\Xb_n)$, and mark $\Xb_i$ as anomalous if
\begin{equation} \label{boxplot_forM}
	M_{S_1}(\Xb_i) > Q_3(M_{S_1}(\mathbb{X}))+1.5*IQR(M_{S_1}(\mathbb{X})),
\end{equation}
where $M_{S_1}(\mathbb{X}) = (M_{S_1}(\Xb_1),\ldots,M_{S_1}(\Xb_n))$ and $Q_3(M_{S_1}(\mathbb{X}))$ is its third quartile.

Another remarkable method was proposed by \cite{Siotani-1959}. An observation $\Xb_i$ is called anomalous if
\begin{equation} \label{Siotani}
	M_C(\Xb_i) > (n-1)\frac{\nu}{n} \left( \frac{1}{qBeta\left( \frac{\alpha}{d},\frac{\nu+1-p}{2},\frac{p}{2} \right)}-1 \right),
\end{equation}
where $C$ is an unbiased estimator of $\Sigma$ with $\nu$ degrees of freedom. Mahalanobis distance is proportional to Hotelling's $T^2$ statistic, which maps monotonically to an F (and hence Beta) pivot. For the maximum over $n$ observations, The author of \cite{Siotani-1959} applied the Bonferroni method to obtain the overall critical value.  
Since this derivation is analogous to that of (\ref{Grubbs}), we can view this technique as a generalization of the Grubbs test.

The performance of both univariate and multivariate methods described above is very dependent on the quality of the covariance estimator. In what follows, we also address the problem of estimating the covariance matrix in the presence of outliers. A broader overview of alternative covariance estimators can be found in \cite[Chapter~7]{Barnett-1998}.

\subsection{Setting an asymptotically exact threshold} \label{sec:AOT}
Most of the thresholding methods discussed above rely on distributional quantiles, but they do not specify an optimal choice of the probability level $\alpha$. The optimal choice of $\alpha$ depends on the number of outliers, which is usually unknown in practice. In what follows, we propose a thresholding approach that avoids the need to subjectively select the quantile probability.

\paragraph*{Sparsity conditions}
Assume that the anomalies occur in model (\ref{model}) only rarely so $\sum_{i=1}^n \eta_i = \lfloor n^{1-\beta} \rfloor$, where $\beta \in (0,1)$ is a~\textit{sparsity index}. The larger $\beta$ is, the less frequently anomalies occur. The index $\beta$ is considered unknown. Further, define sets $H_{n,\beta}$ and $H_{n,\beta}^\pm$ of $n$-dimensional indicators satisfying our sparsity condition as follows 
\begin{align*}
	H_{n,\beta} &= \left\{ \etab = (\eta_1,\ldots,\eta_n): \eta_i \in \{0,1\}, \sum_{i=1}^n \eta_i \leq c_1 n^{1-\beta} \right\}, \\
	H_{n,\beta}^\pm &= \left\{ \etab = (\eta_1,\ldots,\eta_n): \eta_i \in \{0,1\}, c_0 n^{1-\beta} \leq \sum_{i=1}^n \eta_i \leq c_1 n^{1-\beta} \right\},
\end{align*}
for some constants $0 < c_0 < 1 <c_1 < \infty.$ Clearly, $H_{n,\beta}^\pm \subset H_{n,\beta}$.

\paragraph*{Distance measure}
As mentioned in Section \ref{sec:existing_methods}, the multivariate observation $\Xb$ has to be transformed to a single number capturing the magnitude of individual components. Unlike the methods described in Section \ref{sec:existing_methods_multivariate}, which use the Mahalanobis distance (\ref{Mahalanobis}), 
we reduce $\Xb$ to a scalar quantity $R(\Xb)$ by function 
\begin{equation} \label{def:R}
	R(\Xb) = \lVert L^{-1}\Xb\rVert_1,
\end{equation}
where $L$ is a Cholesky factor of $\Sigma$, so $LL^\top = \Sigma$, and $\lVert \cdot \rVert_1$ is the $\ell_1$-norm. Hence, $R(\Xb)$ is a sum of absolute values of entries of the standardized version of $\Xb$. We can then order our sample $\Xb_1,\ldots,\Xb_n$ in terms of the values $R(\Xb_i)$, $i=1,\ldots,n$. Observations $\Xb_i$ corresponding to the largest values $R(\Xb_i)$ are candidates for being classified as anomalous. Due to multiplying each observation by the Cholesky factor $L$, the detection mechanism respects the dependence of the components of the random vector. 

\paragraph*{Covariance estimator} 
For an unknown $\Sigma$, which is usually the case in practice, we replace~(\ref{def:R}) by
\begin{equation} \label{def:Rstar}
	\hat{R}(\Xb) = \lVert \hat{L}^{-1}\Xb \rVert_1,
\end{equation}
where $\hat{L}$ is such that $\hat{L}\hat{L}^\top = S$, where $S$ is an estimator of $\Sigma$ based on $\Xb_1,\ldots,\Xb_n$. In order to be able to prove the asymptotical optimality of the new detection method in Theorem \ref{theorem:upper_bound} below, we need the estimator $S$ to possess the following property. 
Assume that the elements of $S$ allow for the decomposition 
\begin{equation} \label{cond_S}
	s_{jk} = \frac{1}{n} \sum_{i=1}^n g_{jk}(\Xb_i) + b_{jk}(n) + R_{jk},
\end{equation}
where 
\begin{itemize}
	\item[(A1)]  $g_{jk}:\mathbb{R}^p \to \mathbb{R}$ is a measurable function such that $\E g_{jk}(\Xb_1) = \sigma_{jk}$ and $\E |g_{jk}(\Xb_1) - \sigma_{jk}|^8 < \infty$,
	\item[(A2)] $b_{jk}(n) = O(n^{-1})$ is a deterministic bias, and
	\item[(A3)] $R_{jk}$ is a small centered random remainder, in particular, we require $\E R_{jk}=0$ and $\E |R_{jk}|^8 = O(n^{-4})$. 
\end{itemize}
Assumptions (A1)–(A3) are technical conditions required in the proof of Theorem~\ref{theorem:upper_bound} to ensure that $S$ converges to $\Sigma$ at a sufficiently fast rate.
Decompositions of the form (\ref{cond_S}) are common in the statistical literature. For instance, when $s_{jk}$ is a U-statistic generated by a~symmetric kernel and for a specific form of $g_{jk}, R_{jk}$ and $b_{jk}=0$, the representation (\ref{cond_S}) coincides with the Hoeffding decomposition (see, for example \cite[Chapter~11]{Vaart-1998}). Also, if $g_{jk} - \sigma_{jk}$ corresponds to the influence function of the covariance functional, then due to assumptions (A2) and (A3), the decomposition (\ref{cond_S}) can be interpreted as an influence function expansion (see, for example \cite[p. 58 and 278]{Vaart-1998}). While estimator decompositions are typically not written by explicitly isolating the three components $g_{jk},\;b_{jk}(n)$ and $R_{jk}$, as in (\ref{cond_S}), such a separation is essential here in order to obtain sharp asymptotic control over each part of the estimator.
Observe that a covariance estimator with components (\ref{cond_S}) is consistent (a proof is provided in Section A.1 in the Appendix). 

The decomposition (\ref{cond_S}) accommodates many of the commonly used covariance estimators, including centered and uncentered sample covariance, and certain robust estimators. 
For example, when $S_2=\frac{1}{n-1} \sum_{i=1}^n (\Xb_i - \boldsymbol{\mu})(\Xb_i - \boldsymbol{\mu})^\top$ is the standard sample covariance computed, for example, from some historical or training data with a known mean vector $\boldsymbol{\mu} = (\mu_1,\ldots,\mu_p)^\top$ and in the absence of outliers, then the components of the decomposition are $g_{jk}(\Xb_i) = (X_{ij}-\mu_j)(X_{ik}-\mu_k)$, $b_{jk}(n) = \frac{\sigma_{jk}}{n-1}$ and $R_{jk} = \frac{1}{n(n-1)}\sum_{i=1}^n \left( (X_{ij}-\mu_j)(X_{ik}-\mu_k) - \sigma_{jk} \right)$. Verification of assumptions (A1)--(A3) in this case is straightforward. 

When no historical or training data are available, and $\Sigma$ must be estimated from data that may contain outliers, a robust covariance estimator, such as a Huber-type estimator, can be employed.
The M-estimator $\hat{\Sigma}$ with Huber-type weights (see, e.g., \cite{Maronna-1976, Zu-2010}) based on a sample $\Xb_1,\ldots,\Xb_n$ with known mean $\boldsymbol{\mu}=(\mu_1,\ldots,\mu_p)^\top$ is defined as a~solution of the equation
\begin{align} \label{def:Huber_scatter}
	\hat{\Sigma} &= \frac{1}{n} \sum_{i=1}^n u_2 (M_{\hat{\Sigma}}(\Xb_i)) (\Xb_i - \boldsymbol{\mu}) (\Xb_i - \boldsymbol{\mu})^\top,
\end{align}
where $M_{\hat{\Sigma}}(\Xb_i) = (\Xb_i - \boldsymbol{\mu}) \hat{\Sigma}^{-1}(\Xb_i - \boldsymbol{\mu})^\top$ is the sample Mahalanobis distance of $\Xb_i$ from~$\boldsymbol{\mu}$ (cf. (\ref{Mahalanobis})), and $u_2$ is a Huber-type weight function of the following form
\begin{align}
	\begin{split}
		u_2(d) &= \frac{1}{c}, \quad d \leq r \\
		&= \frac{r^2}{cd}, \quad d>r, 
	\end{split} \label{def:Huber_u2}
\end{align}
for some $r>0$ given by $\operatorname{P}\left(\chi_p^2 > r^2\right) = \kappa$ such that $\kappa$ is the proportion of cases one wants to downweight. The quantity $c$ is a normalizing constant chosen so that the robust estimator $\hat{\Sigma}$ is unbiased, i.e.  $\E \left[ u_2(M_{\Sigma}(\Xb))(\Xb-\boldsymbol{\mu})(\Xb - \boldsymbol{\mu})^\top \right] = \Sigma$, under the assumption that $\Xb \sim N_p(\boldsymbol{\mu},\Sigma)$. At convergence, the $j,k$-th entry of the robust covariance estimator can be written in the form (\ref{cond_S}) with
\begin{align}
	g_{jk}(\Xb_i) &= u_2(M_{\Sigma}(\Xb_i))(X_{ij} - \mu_{j})(X_{ik}-\mu_k) \label{Huber_g}\\
	\begin{split}
		R_{jk} &= \frac{1}{n} \sum_{i=1}^n \left( u_2(M_{\hat{\Sigma}}(\Xb_i)) - u_2(M_{\Sigma}(\Xb_i)) \right) (X_{ij} - \mu_{j})(X_{ik}-\mu_k)- \\ 
		& \quad \quad - \frac{1}{n} \E \left[ \sum_{i=1}^n \left( u_2(M_{\hat{\Sigma}}(\Xb_i)) - u_2(M_{\Sigma}(\Xb_i)) \right) (X_{ij} - \mu_{j})(X_{ik}-\mu_k) \right] 
	\end{split} \label{Huber_R} \\
	b_{jk}(n) &= \frac{1}{n} \E \left[ \sum_{i=1}^n \left( u_2(M_{\hat{\Sigma}}(\Xb_i)) - u_2(M_{\Sigma}(\Xb_i)) \right) (X_{ij} - \mu_{j})(X_{ik}-\mu_k) \right], \label{Huber_b}
\end{align}
where $M_{\Sigma}(\Xb_i)$ is the Mahalanobis distance of $\Xb_i$ from $\boldsymbol{\mu}$ as in (\ref{Mahalanobis_true}). Note that quantities (\ref{Huber_g})--(\ref{Huber_b}) satisfy assumptions (A1)--(A3) (for a rigorous proof, see Section A.2 in the Appendix). 

\paragraph*{Asymptotically exact anomaly detection}
As specified in model (\ref{model}), we hereafter assume $\boldsymbol{\mu} = \boldsymbol{0}$. The estimator of vector~$\etab$, which selects the anomalous observations in model (\ref{model}), is built on the following tail probability property of multivariate normal distribution (a proof is provided in Section B in the Appendix).

\begin{lemma} \label{lemma:max_norm_N0I}
	Let $\Zb_1,\ldots, \Zb_n$ be a random sample from $N_p(\boldsymbol{0}, I)$. Then, for any $\Delta > 0$,
	\begin{equation} \label{prob:N0I}
		\operatorname{P}\left(\max_{j=1,\ldots,n} \lVert \Zb_j \rVert_1 \geq \sqrt{2 (p+\Delta) \log n}\right) = \smallO(1), \quad n \to \infty.
	\end{equation}
\end{lemma}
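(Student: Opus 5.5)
The plan is a union bound over the $n$ observations combined with a Chernoff (exponential Markov) bound for the $\ell_1$-norm of a single standard normal vector. Put $t_n = \sqrt{2(p+\Delta)\log n}$. Since $\Zb_1,\ldots,\Zb_n$ are identically distributed,
\begin{equation*}
\operatorname{P}\left(\max_{j=1,\ldots,n} \lVert \Zb_j \rVert_1 \geq t_n\right) \leq n\, \operatorname{P}\left(\lVert \Zb_1 \rVert_1 \geq t_n\right).
\end{equation*}
It therefore suffices to show that $\operatorname{P}(\lVert \Zb_1\rVert_1 \geq t_n) = \smallO(n^{-1})$.

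For the single-vector tail, write $\lVert \Zb_1\rVert_1 = \sum_{k=1}^p |Z_{1k}|$, where the $Z_{1k}$ are i.i.d. $N(0,1)$. For any $\lambda>0$, the moment generating function of $|Z|$ with $Z\sim N(0,1)$ is
\begin{equation*}
\E e^{\lambda |Z|} = 2 e^{\lambda^2/2}\,\Phi(\lambda) \leq 2 e^{\lambda^2/2},
\end{equation*}
where $\Phi$ is the standard normal distribution function. Independence of the components and Markov's inequality then give
\begin{equation*}
\operatorname{P}\left(\lVert \Zb_1\rVert_1 \geq t\right) \leq e^{-\lambda t}\left(\E e^{\lambda |Z|}\right)^p \leq 2^p \exp\left(\frac{p\lambda^2}{2} - \lambda t\right).
\end{equation*}
Choosing $\lambda = t/p$ yields the bound $2^p e^{-t^2/(2p)}$.

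Substituting $t = t_n$ gives $e^{-t_n^2/(2p)} = n^{-(p+\Delta)/p}$, so
\begin{equation*}
n\,\operatorname{P}\left(\lVert \Zb_1 \rVert_1 \geq t_n\right) \leq 2^p\, n^{-\Delta/p} \to 0, \qquad n\to\infty,
\end{equation*}
because $p$ is fixed and $\Delta>0$. This proves (\ref{prob:N0I}).

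I expect no real obstacle. The only point needing care is to get the constant in the exponent right. The matching factor $p$ in $t_n^2$ is exactly what makes the Chernoff exponent cancel the union-bound factor $n$ with a margin of $n^{-\Delta/p}$.

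As a fallback, one can use $\lVert \Zb_1\rVert_1 \leq \sqrt{p}\,\lVert \Zb_1\rVert_2$. This reduces the problem to the tail of $\chi^2_p$ at level $t_n^2/p$, which behaves like $(\log n)^{p/2-1} n^{-1-\Delta/p}$, and the same conclusion follows.
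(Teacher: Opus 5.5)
Your proof is correct, but your main route differs from the paper's. The paper's proof is essentially your fallback.

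The paper also begins with the union bound. It then applies Cauchy--Schwarz, $\lVert \Zb_1\rVert_1 \le \sqrt{p}\,\lVert \Zb_1\rVert_2$, to reduce to $n\operatorname{P}\bigl(\chi^2_p \ge \tfrac{2(p+\Delta)}{p}\log n\bigr)$. Next it invokes the asymptotic tail equivalence $\operatorname{P}(\chi^2_p>t)\sim t^{p/2-1}e^{-t/2}/\bigl(2^{p/2-1}\Gamma(p/2)\bigr)$ from Abramowitz--Stegun. This gives $O\bigl(n^{-\Delta/p}(\log n)^{p/2-1}\bigr)$, and a separate remark is needed to show the polylogarithmic factor is harmless for $p\ge 3$.

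Your Chernoff argument works directly with the $\ell_1$-norm through the exact moment generating function $\E e^{\lambda|Z|}=2e^{\lambda^2/2}\Phi(\lambda)$. It yields the fully non-asymptotic bound $2^p n^{-\Delta/p}$, valid for every $n$. There is no logarithmic factor, no appeal to an external asymptotic formula, and no case distinction in $p$. The price is the constant $2^p$, which is irrelevant for fixed $p$. Your bound is essentially what one gets from writing $\lVert \Zb_1\rVert_1=\max_{s\in\{\pm 1\}^p} s^\top \Zb_1$, where each $s^\top \Zb_1\sim N(0,p)$, and taking a union bound over the $2^p$ sign vectors.

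Both routes achieve the same rate $n^{-\Delta/p}$. The paper's version is slightly lossier for $p\ge 2$, because passing through the $\ell_2$-norm replaces a one-dimensional Gaussian tail by a $\chi^2_p$ tail. Its advantage is that it reuses a standard tabulated distribution and the same $\sqrt{p}$-Lipschitz inequality for the $\ell_1$-norm that the paper relies on later in the proof of Theorem~\ref{theorem:upper_bound}.
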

In other words, the probability that the maximum $\ell_1$-norm of a standard normal random vector exceeds the given threshold tends to zero with growing $n$.

Based on this lemma, we propose a detector of the form
\begin{equation} \label{def:eta}
	\hat{\etab} = (\hat{\eta}_1,\ldots,\hat{\eta}_n), \quad \quad \hat{\eta}_i = \mathds{1}\left(\hat{R}(\Xb_i) > \sqrt{2(p+\delta) \log n}\right),
\end{equation}
where $\delta>0$ as in Lemma \ref{lemma:max_norm_N0I} and we assume that 
\begin{equation} \label{cond:delta}
	\delta=\delta(n) \to 0 \quad \text{ and } \quad  \delta \log n \to \infty \text{ as } n \to \infty.  
\end{equation}
Notice that the estimator (\ref{def:eta}) does not depend on $\beta$ and that for $p=1$ it reduces to the corresponding result of \cite{Butucea-2018, Cui-2014}. The following theorem shows that under some mild conditions, the expected number of errors in our identification in a random sample of size~$n$ tends to zero with $n \rightarrow \infty $.  

\begin{theorem} \label{theorem:upper_bound}
	Let $\beta \in (0,1)$ and $p \in \mathbb{N}$ be fixed numbers. If the parameters $\ab=\ab(n)$ and $\Sigma=LL^\top >0$ in model~(\ref{model}) satisfy the condition
	\begin{equation} \label{cond:a_UB}
		\liminf_{n \to \infty} \frac{\lVert L^{-1}\ab \rVert_1}{\sqrt{p\log n}} > \sqrt{2}(1+\sqrt{1-\beta}),
	\end{equation}
	then the estimator (\ref{def:eta})--(\ref{cond:delta}) satisfies
	\begin{equation} \label{eq:Hamming_risk_T2}
		\liminf_{n \to \infty} \sup_{\etab \in H_{n,\beta}} \E_{\etab} |\etab - \hat{\etab}| = 0.
	\end{equation}
\end{theorem}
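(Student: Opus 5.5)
The plan is to bound the Hamming risk directly, term by term. Write $t_n=\sqrt{2(p+\delta)\log n}$ and $\Zb_i=L^{-1}\epb_i\sim N_p(\bm 0,I)$. Then
\begin{equation*}
\E_{\etab}|\etab-\hat{\etab}|=\sum_{i:\eta_i=0}\operatorname{P}\bigl(\hat R(\Xb_i)>t_n\bigr)+\sum_{i:\eta_i=1}\operatorname{P}\bigl(\hat R(\Xb_i)\le t_n\bigr).
\end{equation*}
The first sum has at most $n$ terms and the second at most $c_1n^{1-\beta}$. Every bound below depends on $\etab$ only through these counts, so the supremum over $H_{n,\beta}$ is handled automatically. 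I would first treat known $\Sigma$ (i.e.\ $R$ instead of $\hat R$) and then reduce the estimated case to it via a multiplicative perturbation argument.

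\emph{Known $\Sigma$.} The key tool is the sign-vector representation $\lVert\Zb\rVert_1=\max_{s\in\{\pm1\}^p}s^\top\Zb$, where each $s^\top\Zb\sim N(0,p)$. A union bound over the $2^p$ sign vectors gives $\operatorname{P}(\lVert\Zb\rVert_1>u)\le 2^p\,\bar\Phi(u/\sqrt p)\le 2^p e^{-u^2/(2p)}$. This is the quantitative form of Lemma~\ref{lemma:max_norm_N0I}.

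For a standard observation, $n\operatorname{P}(\lVert\Zb\rVert_1>t_n)\le 2^p n\cdot n^{-1-\delta/p}=2^p e^{-\delta\log n/p}\to0$ by (\ref{cond:delta}).

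For an anomaly, the triangle inequality gives $R(\Xb_i)\ge\lVert L^{-1}\ab\rVert_1-\lVert\Zb_i\rVert_1$. By (\ref{cond:a_UB}) and $\delta\to0$, for large $n$ we have $\lVert L^{-1}\ab\rVert_1-t_n\ge\sqrt{2p\log n}\,(\sqrt{1-\beta}+\varepsilon)$ for some $\varepsilon>0$. The failure probability is therefore at most $2^pn^{-(\sqrt{1-\beta}+\varepsilon)^2}$. Multiplying by $c_1n^{1-\beta}$ gives a bound that tends to zero.

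\emph{Estimated $\Sigma$.} Write $\hat R(\Xb_i)=\lVert(\hat L^{-1}L)L^{-1}\Xb_i\rVert_1$ and set $\kappa_n=\lVert\hat L^{-1}L-I\rVert_{1\to1}$. Then $(1-\kappa_n)R(\Xb_i)\le\hat R(\Xb_i)\le(1+\kappa_n)R(\Xb_i)$ deterministically. This avoids any independence issue between $S$ and $\Xb_i$, even though $S$ uses $\Xb_i$ and possibly outliers. Since the Cholesky factor and matrix inversion are smooth near $\Sigma>0$, on the event $E_n=\{\max_{j,k}|s_{jk}-\sigma_{jk}|\le n^{-1/4}\}$ we get $\kappa_n\le Cn^{-1/4}$.

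To control $E_n^c$, I would use (A1)--(A3). Rosenthal's inequality applied to the i.i.d.\ average of the $g_{jk}(\Xb_i)-\sigma_{jk}$ gives an eighth moment of order $O(n^{-4})$. The bias term (A2) is $O(n^{-1})$ and the remainder (A3) also has eighth moment $O(n^{-4})$. So $\E|s_{jk}-\sigma_{jk}|^8=O(n^{-4})$, and Markov's inequality yields $\operatorname{P}(E_n^c)=O(n^{-2})$. Even summed over all $n$ observations, this contributes only $O(n^{-1})$, which explains the eighth-moment assumption.

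On $E_n$, the standard-observation term becomes $n\operatorname{P}(\lVert\Zb\rVert_1>t_n/(1+\kappa_n))$. The exponent is now $(1+\delta/p)(1+\kappa_n)^{-2}\log n$, and $(1+\kappa_n)^{-2}=1-O(n^{-1/4})$, so the loss is $O(n^{-1/4}\log n)\to0$. This is negligible against $\delta\log n\to\infty$, and the earlier bound survives. The anomaly term similarly needs only $(1-\kappa_n)\lVert L^{-1}\ab\rVert_1-t_n$ to remain of order $\sqrt{2p\log n}(\sqrt{1-\beta}+\varepsilon/2)$. This holds because $\kappa_n\to0$ and the bound is multiplicative, so it remains valid even when $\ab(n)$ is unbounded.

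\emph{Main obstacle.} I expect the main work to be the estimated-covariance step: obtaining a deterministic, uniform-in-$i$ comparison between $\hat R$ and $R$, and showing that the covariance error is small with probability $1-o(n^{-1})$. Everything else is Gaussian tail bookkeeping.
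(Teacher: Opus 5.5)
Your proof is correct. It takes a genuinely different route from the paper's in both main steps, and it is tighter.

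\textbf{Gaussian tails.} The paper lowers the threshold by a fixed $\epsilon$ and then applies Lipschitz concentration to $\lVert\Zb\rVert_1$ around its mean $p\sqrt{2/\pi}$. The centering and the shift each create a cross term of order $\sqrt{\log n}$ in the exponent. The paper absorbs these into a factor $(1+\smallO(1))$ multiplying $\log n$. The false-positive margin, however, is only $(\delta/p)\log n$. So that step actually requires $\delta\sqrt{\log n}$ to stay above a constant depending on $p$, which (\ref{cond:delta}) does not guarantee; for $\delta=1/\sqrt{\log n}$, the paper's displayed bound on $I_1^b$ diverges. Your sign-vector union bound $\operatorname{P}(\lVert\Zb\rVert_1>u)\le 2^pe^{-u^2/(2p)}$ loses nothing to centering. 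It gives $2^pn^{-\delta/p}$ under exactly (\ref{cond:delta}).

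\textbf{Estimated covariance.} The paper splits additively and bounds $\operatorname{P}(\lVert(\hat L^{-1}-L^{-1})\Xb_i\rVert_1>\epsilon)$ by Markov and Cauchy--Schwarz, using the explicit Lipschitz estimate of Lemma~S1. It then treats two facts as holding deterministically for large $n$, although consistency only gives them with high probability: the stability hypothesis of Lemma~S1, and the lower bound on the random quantity $\lVert\hat L^{-1}\ab\rVert_1$. Your sandwich $(1-\kappa_n)R(\Xb_i)\le\hat R(\Xb_i)\le(1+\kappa_n)R(\Xb_i)$ on $E_n$ avoids all of this at once. Here $n\operatorname{P}(E_n^c)=O(n^{-1})$ follows from the same Rosenthal eighth-moment estimate the paper uses. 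On $E_n$, stability holds, the dependence of $S$ on $\Xb_i$ never enters, and the multiplicative loss $O(n^{-1/4}\log n)$ is negligible against $\delta\log n$. A growing $\ab(n)$ also causes no difficulty.

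What the paper's route adds is explicit constants via Lemma~S1. You rely only on local smoothness of $S\mapsto\hat L^{-1}$, which is all that is needed here.

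One caveat applies to both arguments. Invoking (A1) through Rosenthal's inequality treats every $g_{jk}(\Xb_i)$ as i.i.d.\ with mean $\sigma_{jk}$. This ignores that the anomalous $\Xb_i$ have mean $\ab$ when $S$ is computed from contaminated data.
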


Equivalently, the estimator (\ref{def:eta})--(\ref{cond:delta}) provides exact detection of the anomalous observations. The quantity $\E_{\etab} |\etab - \hat{\etab}|$ is called Hamming risk, and it represents the expected number of misclassified observations. 

\begin{remark}
	Theorem \ref{theorem:upper_bound} remains valid also for an estimator of $\etab$ based on $R$ defined in (\ref{def:R}). Moreover, the result is also valid for $\boldsymbol{\mu} \neq \boldsymbol{0}$ by considering the centered observations $\boldsymbol{Y}_i = \Xb_i - \boldsymbol{\mu}$, $i = 1,\ldots,n$, in place of $\boldsymbol{X}_i$, $i = 1,\ldots,n$. 
\end{remark}

The next theorem provides conditions under which such an exact identification cannot be achieved for any detector based on $\Xb_1,\ldots, \Xb_n$. Note that since $H^\pm_{n,\beta} \subset H_{n,\beta}$, the detector $\hat{\etab}$ is also exact with respect to the maximum risk when the maximum is taken over $H^\pm_{n,\beta}$. 

\begin{theorem} \label{theorem:lower_bound}
	Let $\beta \in (0,1)$ and $p \leq \lfloor \frac{8}{\beta} (1+\sqrt{1-\beta})-4 \rfloor$ be fixed numbers. If the parameters $\ab=\ab(n)$ and $\Sigma=LL^\top$ in model~(\ref{model}) satisfy the condition
	\begin{equation} \label{cond:a_LB}
		\liminf_{n \to \infty} \frac{\lVert L^{-1}\ab \rVert_1}{\sqrt{p\log n}} < \sqrt{2}(1+\sqrt{1-\beta}),
	\end{equation}
	then
	\begin{equation}
		\liminf_{n \to \infty} \inf_{\tilde{\etab}} \sup_{\etab \in H^\pm_{n,\beta}} \E_{\etab} |\etab - \tilde{\etab}| > 0,
	\end{equation}
	where the infimum is taken over all estimators $\tilde{\etab}$ of $\etab$ based on $\Xb_1,\ldots,\Xb_n$.
\end{theorem}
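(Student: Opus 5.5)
The plan is to follow the lower-bound strategy of \cite{Butucea-2018}: replace the minimax risk over $H^\pm_{n,\beta}$ by a Bayes risk under a product prior, and then reduce the multivariate problem to a family of one-dimensional Gaussian tests. Fix a subsequence $n_m$ along which $\lVert L^{-1}\ab\rVert_1 \le \sqrt{2}(1+\sqrt{1-\beta}-\epsilon)\sqrt{p\log n}$ for some $\epsilon>0$, which exists by (\ref{cond:a_LB}).

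\textbf{Step 1 (reduction to a Bayes risk).} Let $\eta_1,\ldots,\eta_n$ be i.i.d.\ Bernoulli with parameter $\pi_n = n^{-\beta}$. Since $\sum_i\eta_i$ is binomial with mean $n^{1-\beta}$, Chernoff's bound gives $\sum_i\eta_i \in [c_0 n^{1-\beta}, c_1 n^{1-\beta}]$ with probability $1-o(1)$; indeed this probability is super-polynomially close to one. The supremum over $H^\pm_{n,\beta}$ is therefore at least the Bayes risk minus a term of order $n\,\operatorname{P}(\etab\notin H^\pm_{n,\beta}) = o(1)$.

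Under the product prior, and because the $\Xb_i$ are independent given $\etab$, the Bayes Hamming risk splits into $n$ identical single-observation problems. In each of them we test $N_p(\boldsymbol 0,\Sigma)$ against $N_p(\ab,\Sigma)$ with prior weights $1-\pi_n$ and $\pi_n$. The Bayes rule is a likelihood-ratio test, which depends on $\Xb$ only through the projection $\ab^\top\Sigma^{-1}\Xb/\lVert L^{-1}\ab\rVert_2$. This projection is $N(0,1)$ under the null and $N(d,1)$ under the alternative, where $d=\lVert L^{-1}\ab\rVert_2$. Hence
\[
\inf_{\tilde\etab}\E|\etab-\tilde\etab| \ \ge\ n\inf_{t}\Big[(1-\pi_n)\bar\Phi(t) + \pi_n \Phi(t-d)\Big]-o(1).
\]

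\textbf{Step 2 (evaluating the one-dimensional risk).} Write $d=\lambda\sqrt{2\log n}$ and $t=\tau\sqrt{2\log n}$. Mills-ratio asymptotics give a first term of order $n^{1-\tau^2}$ and a second of order $n^{1-\beta-(\lambda-\tau)^2}$, up to logarithmic factors. Balancing the two exponents shows that $\max\{1-\tau^2,\,1-\beta-(\lambda-\tau)^2\}\ge 0$ for every $\tau$ exactly when $\lambda\le 1+\sqrt{1-\beta}$. A little care with the polynomial-in-$\log n$ prefactors, taken at the critical $\tau=(1-\beta)$-balanced value, then shows the risk stays bounded away from zero, and in fact tends to infinity, whenever $\lambda<1+\sqrt{1-\beta}$ strictly.

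\textbf{Step 3 (from the $\ell_1$ condition to the Mahalanobis norm), the main obstacle.} Step 2 is governed by $\lVert L^{-1}\ab\rVert_2$, whereas (\ref{cond:a_LB}) controls $\lVert L^{-1}\ab\rVert_1$. The inequality $\lVert v\rVert_2\le\lVert v\rVert_1$ yields $\lambda\le\sqrt p\,(1+\sqrt{1-\beta}-\epsilon)$, which in general lies on the wrong side of the critical value. My plan is not to require $\lambda<1+\sqrt{1-\beta}$. Instead I would redo the exponent computation of Step 2 with the optimal threshold, allowing $\lambda$ up to $\sqrt p$ times the critical level. I would then track the polynomial factor $n^{1-\beta}$ coming from the number of anomalies against the extra decay produced by the larger $\lambda$.

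I expect the restriction $p\le\lfloor \frac{8}{\beta}(1+\sqrt{1-\beta})-4\rfloor$, equivalently $\beta(p+4)/8\le 1+\sqrt{1-\beta}$, to be precisely the inequality that keeps the resulting exponent nonnegative. If this direct route fails, the fallback is to use the freedom in the lower-bound construction, namely a prior supported on the configurations allowed by $H^\pm_{n,\beta}$ at the boundary sparsity $c_0 n^{1-\beta}$. Verifying this exponent inequality carefully, including the logarithmic corrections, is the step I expect to require the most work.
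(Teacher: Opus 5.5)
\textbf{Step 3 cannot be closed, because the statement is false for $p\ge 2$.}

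Your Steps 1 and 2 are sound and match the paper's reduction: a Bernoulli$(n^{-\beta})$ product prior, restriction to $H^\pm_{n,\beta}$, and a single-observation Bayes risk $J_1+J_2$ that depends on $\ab$ only through $d=\lVert L^{-1}\ab\rVert_2$.

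Your own Step 2 already shows why Step 3 fails. After the reduction, $p$ has disappeared. For $\lambda>1+\sqrt{1-\beta}$, any $\tau\in(1,\lambda-\sqrt{1-\beta})$ makes both exponents negative, so the risk tends to zero. Redoing the bookkeeping with $\lambda$ up to $\sqrt p\,(1+\sqrt{1-\beta})$ can only reproduce this, and no restriction on $p$ can intervene.

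Concretely, take $p=2$ (admissible for every $\beta$), $\Sigma=I$, and $\ab=(\lambda\sqrt{2\log n},0)^\top$ with $1+\sqrt{1-\beta}<\lambda<\sqrt2\,(1+\sqrt{1-\beta})$. Then $\lVert L^{-1}\ab\rVert_1/\sqrt{p\log n}=\lambda$, so the hypothesis holds. Yet the rule $\tilde\eta_i=\mathds{1}(X_{i1}>\tau\sqrt{2\log n})$ with $1<\tau<\lambda-\sqrt{1-\beta}$ satisfies $\sup_{\etab\in H_{n,\beta}}\E_{\etab}|\etab-\tilde\etab|\le n^{1-\tau^2}+c_1n^{1-\beta-(\lambda-\tau)^2}\to0$. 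Thresholding $\max_k|X_{ik}|$ at $\sqrt{2(1+\delta)\log n}$, with $\delta\to0$ and $\delta\log n\to\infty$, also works and does not use $\ab$.

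Any prior on $H^\pm_{n,\beta}$ has Bayes risk at most this quantity. So your fallback, a different prior at sparsity $c_0n^{1-\beta}$, fails for the same reason.

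What Steps 1--2 do prove is the lower bound under $\limsup_n\lVert L^{-1}\ab\rVert_2/\sqrt{2\log n}<1+\sqrt{1-\beta}$, with no restriction on $p$. This is implied, for example, by the stated condition with $\sqrt{p\log n}$ replaced by $\sqrt{\log n}$. Also, working along a subsequence $n_m$, as you propose, yields only a positive $\limsup$ of the minimax risk, not the claimed $\liminf$. The $\liminf$ needs the hypothesis for all large $n$.

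For orientation, the paper splits on $\gamma=\lVert L^{-1}\ab\rVert_1/\sqrt{p\log n}$.
\begin{itemize}
\item For $\gamma\le\sqrt{\beta/2}$ it bounds $J_1$ using $d\le\lVert L^{-1}\ab\rVert_1$. Here the restriction on $p$ amounts to $\sqrt{\beta p}/2<1+\sqrt{1-\beta}$, which is exactly the condition keeping $\lambda$ subcritical in that regime.
\item For $\gamma>\sqrt{\beta/2}$ it bounds $J_2$. The passage from (\ref{eq:J2_v2}) to (\ref{eq:J2_v3}) replaces $\tfrac12\ab^\top\Sigma^{-1}\ab/\sqrt{\ab^\top\Sigma^{-1}\ab}=d/2\ge\gamma\sqrt{\log n}/2$ by the constant $\sqrt p/2$.
\end{itemize}
That second step is invalid, and it is where the $\ell_1$-versus-$\ell_2$ mismatch you correctly identified gets silently absorbed. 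So there is no valid argument there for you to reconstruct.
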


Proofs of Theorems \ref{theorem:upper_bound} and \ref{theorem:lower_bound} are given in Section B in the Appendix. 

\begin{remark}
	For $p=1$, known $\sigma^2$ and $a>0$, Theorems \ref{theorem:upper_bound} and \ref{theorem:lower_bound} coincide with the results of \cite{Butucea-2018} and \cite{Cui-2014}. 
\end{remark}

\begin{remark}
	The condition $p \leq \lfloor \frac{8}{\beta} (1+\sqrt{1-\beta})-4 \rfloor$ is necessary in the proof of Theorem \ref{theorem:lower_bound}, and it says that the closer $\beta$ is to 1, the closer the maximal admissible value of $p$ approaches 4. The restriction on $p$ ensures that detection cannot rely on high-dimensional aggregation effects. If the dimension was too large, many weak components could accumulate and together form a detectable signal. To keep the minimax risk bounded away from zero for larger values of $p$, condition~(\ref{cond:a_LB}) would need to be strengthened so that anomalies remain difficult to detect even when more components are present. Another possibility would be to restrict the subset of components allowed to carry the mean shift and thus to contain anomalies. This phenomenon, namely the need for sparser (or weaker) anomalies as the dimension increases, was also observed by \cite{Laurent-2018, Verzelen-2017}.
\end{remark}

\subsection{Thresholding methods for time series} \label{sec:time_series}
The proposed method, as well as the approaches described in Section \ref{sec:existing_methods} and \ref{sec:existing_methods_multivariate}, can be applied directly to detect anomalies in a centered Gaussian random sample. However, observations $\Xb_1,\ldots,\Xb_n \sim N_p(\bm{0},\Sigma)$ may also arise as residuals from a variety of statistical models, including regression models, Kalman filters, and other filtering algorithms, as well as time series models.
In that case, anomaly detection typically proceeds as follows:
\begin{enumerate}
	\item A model is fitted to the observed data.
	\item The covariance matrix of the residuals is estimated, usually as a part of the fitting procedure.
	\item For each residual $\mathbf{X}_i$, the indicator $\hat{\eta}_i$ defined in~(\ref{def:eta})--(\ref{cond:delta}) is computed to determine whether the observation is anomalous. 
\end{enumerate}

In an adaptive (online) setting, each incoming observation is classified using an updated threshold. The matrix $\hat{L}$ used to compute $\hat{\boldsymbol{\eta}}$ can be updated at each step, or it may be estimated once from a sufficiently large training sample and kept fixed thereafter.


\section{Simulation study}\label{Sec:simul}

The performance of the detector $\hat{\etab}$ defined in (\ref{def:eta})--(\ref{cond:delta}) is demonstrated by a set of simulations in a multivariate ($p=4$) case. Its performance is compared with other commonly used methods for multivariate anomaly detection presented in Section~\ref{sec:existing_methods_multivariate}. In both thresholds (\ref{qBeta}) and (\ref{Siotani}), we chose $\alpha = 0.05$, the value $\delta$ in (\ref{def:eta}) was chosen as $\delta = 1/\sqrt{\log n}$ in order to satisfy condition (\ref{cond:delta}). Besides method comparison, the statement of Theorem~\ref{theorem:upper_bound} is illustrated as well. The simulation study was conducted in the statistical software~\texttt{R}. 
To ensure comparability, the same covariance estimator was used across all detection methods. The sample covariance matrix is not a good estimator in the presence of outliers, so the covariance matrix was estimated by the robust Huber-type estimator (\ref{def:Huber_scatter})--(\ref{def:Huber_u2}) with $\boldsymbol{\mu} = \boldsymbol{0}$, and its practical computation was based on an \texttt{R}~function \texttt{cov\_Huber()} from package \texttt{robmed} with $r$ equal to 95\%-quantile of $\chi^2_p$ distribution, $\kappa = 0.05$ and $c = (-2f_{\chi^2_p}(r)+\kappa)\frac{r}{p} + 1- \kappa$, where $f_{\chi^2_p}$ stands for the density of $\chi^2_p$, which is the default setting in  \texttt{cov\_Huber()} (for details, see \cite{Zu-2010}).


\subsection{Methods comparison}
In the first simulation study, we illustrate that the Hamming risk $\E_{\etab} |\etab - \hat{\etab}|$ of the proposed asymptotically exact detector (\ref{def:eta})--(\ref{cond:delta}) converges to zero as stated in Theorem~\ref{theorem:upper_bound}. We also compare the magnitude of the estimated Hamming risk with that of the methods described in Section~\ref{sec:existing_methods_multivariate}. As mentioned previously, we consider a~multivariate case with $p=4$. In each of 100 runs of this simulation,  we draw a random sample $\Xb_1,\ldots, \Xb_n,$ of size $n \in \{300, 310, 320, \ldots, 1500 \}$, where each observation $\Xb_i, i=1,\ldots,n$, follows $N(\eta_i \ab_n, \Sigma)$ with parameters
\begin{align}  \label{sim_distr}
	\ab_n&= \begin{bmatrix}
		a_{n1} \\
		a_{n2} \\
		a_{n3} \\
		a_{n4}
	\end{bmatrix} &
	\Sigma &= \begin{bmatrix}
		1.12& -0.44 & 0.17& -0.73\\
		-0.44 & 1.97 &-0.26 &-0.42\\
		0.17& -0.26 & 0.56& -0.12\\
		-0.73 &-0.42& -0.12 & 1.13
	\end{bmatrix},
\end{align}
where $\ab_n$ depends on $n$ as seen in Figure \ref{fig:a_n}. Values of $a_{n1},a_{n2}$ and $a_{n3}$ were sampled from the uniform distribution on the interval $[-c_n,c_n]$, where $c_n=\sqrt{2p\log(n)}(1+\sqrt{1-\beta})/3$, and $a_{n4} = 1.3(c_n - \sum_{k=1}^3 |a_{nk}|)$, so the condition~(\ref{cond:a_UB}) was satisfied. The value of $\ab_n$ was kept the same in all simulation runs for each value of $n$. The anomalies (indicated by $\eta_i=1$) were placed at locations randomly sampled from $\{1,\ldots,n \}$ and their total number was equal to the integer part of $n^{1-\beta}$ with the sparsity parameter $\beta=0.4$. 
Recall that $\Sigma$ was estimated by the Huber-type estimator $\hat{\Sigma}$ defined by (\ref{def:Huber_scatter})--(\ref{def:Huber_u2}) and note that the stability condition $\lVert \Sigma^{-1} \rVert_2 \lVert \Sigma - \hat{\Sigma} \rVert_2 <1$ was satisfied for all simulation runs under all settings. 

\begin{figure}[h]
	\centering
	\includegraphics[width=\textwidth]{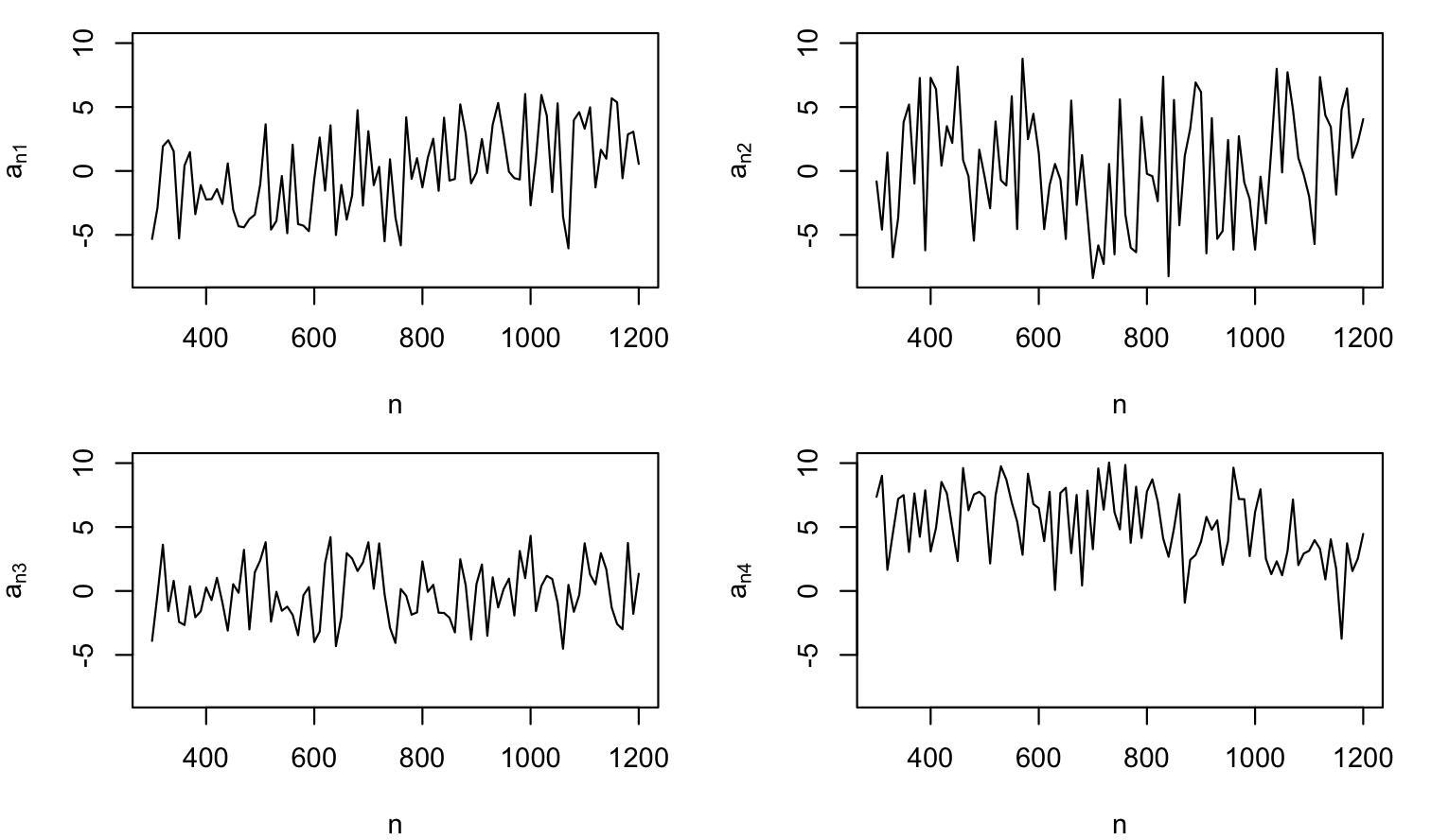}
	\caption{Components of the expected value $\ab_n = (a_{n1},a_{n2},a_{n3},a_{n4})^\top$ of the anomalies depending on the sample size $n$.}
	\label{fig:a_n}
\end{figure}

For each sample, all the detection methods from Section~\ref{sec:existing_methods_multivariate} were applied, and the total number of misclassified items (false negatives plus false positives) was computed for each method. The estimate of the Hamming risk $\E_\etab |\etab - \hat{\etab}|$ for each $n$ was calculated as the ratio of the number of misidentified observations and the number of replications (100). The resulting estimated Hamming risks for all methods are shown in Figure \ref{fig:Hamming_risk}. 
\begin{figure}[h]
	\centering
	\includegraphics[width=\textwidth]{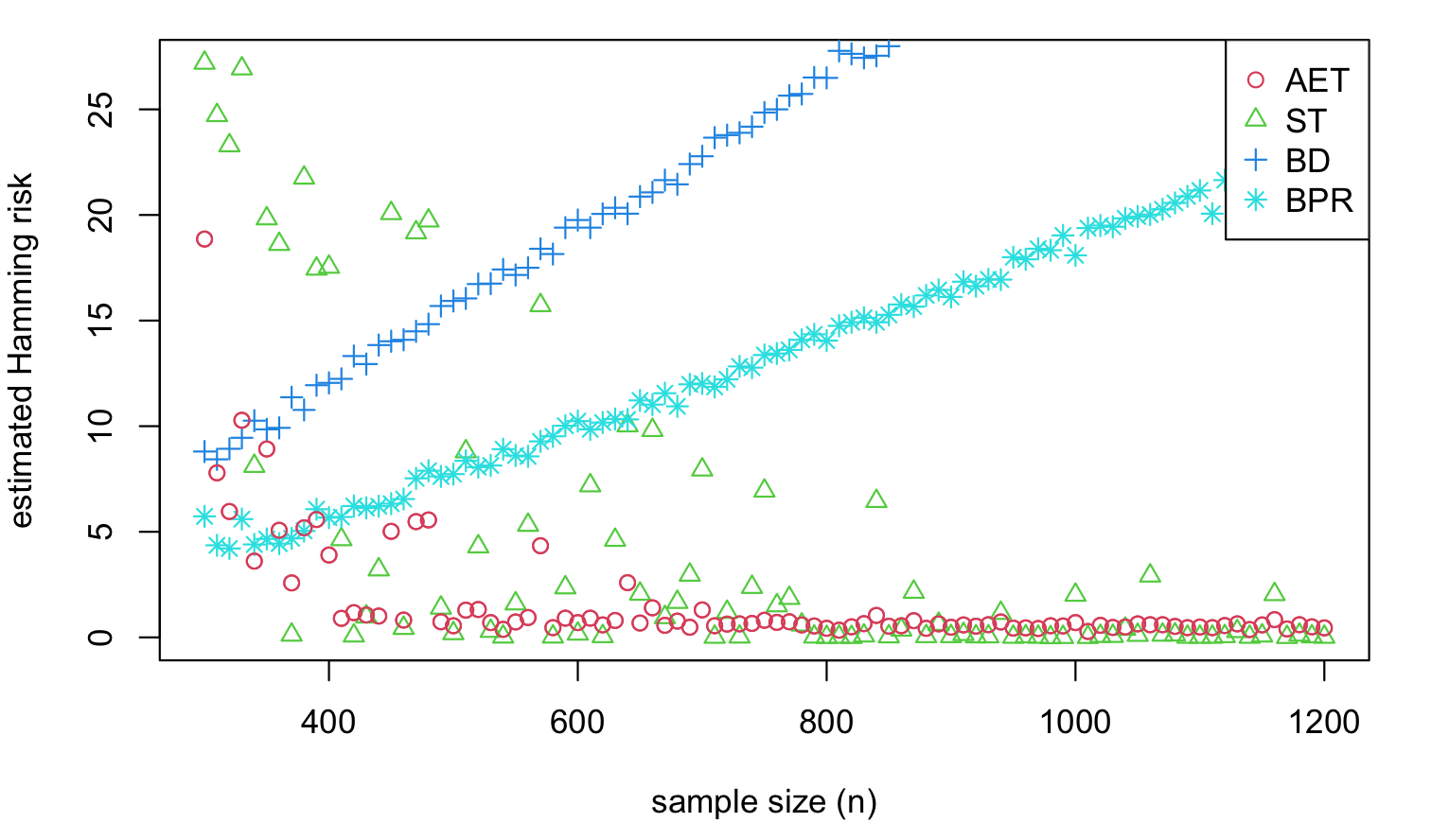}
	\caption{Average number of misclassified observations (false positives + false negatives), which estimates the Hamming risk, based on 100 replications. The samples were drawn from normal distributions with parameters given by (\ref{sim_distr}), where $\ab_n$ is displayed in Figure \ref{fig:a_n}. Covariance is considered unknown and estimated by the Huber-type estimator (\ref{def:Huber_scatter})--(\ref{def:Huber_u2}). The detection methods: AET: proposed asymptotically exact threshold (\ref{def:eta})--(\ref{cond:delta}), BD: threshold based on beta distribution (\ref{qBeta}), BPR: boxplot rule threshold (\ref{boxplot_rule}), ST: Siotani threshold (\ref{Siotani}). The simulation illustrates the statement of Theorem \ref{theorem:upper_bound}.}
	\label{fig:Hamming_risk}
\end{figure}

In Figure \ref{fig:Hamming_risk}, we observe that the estimated Hamming risk of the proposed asymptotically exact threshold (AET) defined in (\ref{def:eta})--(\ref{cond:delta}) decreases with increasing sample size $n$, in agreement with Theorem \ref{theorem:upper_bound}. Overall, the proposed AET method performs very well, yielding consistently low detection error. For smaller samples ($n<400$), it is occasionally outperformed by the threshold (\ref{qBeta}) based on beta distribution (BD) and the boxplot rule (BPR) method (\ref{boxplot_forM}); however, the estimated Hamming risk of these two methods increases as $n$ grows, and for $n>600$, their performance deteriorates substantially. The Siotani threshold (ST) defined in~(\ref{Siotani}) also exhibits a decreasing number of misidentified components, but it performs worse than AET for $n<800$. For larger sample sizes, it achieves very good detection accuracy — often slightly better than AET — although isolated peaks of higher error suggest some instability in specific scenarios. For such large $n$, AET appears to be highly robust and stable.

\subsection{Detection in one particular realization of a Gaussian random sample} \label{subsec:One_realization}

In the second simulation, the performance of our method and methods from Section \ref{sec:existing_methods_multivariate} is demonstrated using a single realization of a Gaussian random sample of size $n = 800$. For comparison, the univariate methods from Section \ref{sec:existing_methods} are also considered, applied to each dimension separately. 
We consider two scenarios differing in the structure of the mean shift of the anomalies. In the first scenario, the anomalies are caused by a large magnitude of one component (the mean shift is set to the value $\boldsymbol{a}_1$ below). In the second scenario, the anomalies are caused by mild deviations in all components (the mean shift equals $\boldsymbol{a}_2$ below) that might be missed by the univariate methods. 
The sparsity index was set to $\beta = 0.5$ and hence the number of anomalies was $\lfloor n^{1-\beta} \rfloor = 28$.

Specifically, we consider a multivariate random sample $\Xb_1,\ldots,\Xb_n$  of size $n = 800$ with each observation $\Xb_j=(X_{j1}, \ldots, X_{j4})^\top$ coming from the distribution $N_4(\eta_j \ab, \Sigma)$, where the mean shift vector $\ab$ was chosen as either
\begin{align}  \label{def:norm_simul_multiD}
	\ab_1= \begin{bmatrix}
		1.38 \\ 
		1.20 \\
		0.95\\
		5.33 
	\end{bmatrix}, \qquad \text{or} \qquad 
	\ab_2= \begin{bmatrix}
		4.14 \\ 
		3.61 \\
		2.87\\
		-2.87 
	\end{bmatrix},
\end{align}
and the covariance matrix $\Sigma$ was the same as in (\ref{sim_distr}).
The parameter values were chosen so that the condition (\ref{cond:a_UB}) is fulfilled. Throughout the simulation study, the covariance matrix was considered unknown and estimated by the Huber-type covariance estimator (\ref{def:Huber_scatter})--(\ref{def:Huber_u2}) as described at the beginning of Section \ref{Sec:simul}. 

In total, eight anomaly detection methods ware compared. We considered four multivariate methods as in the previous simulation study: BD: threshold (\ref{qBeta}) based on beta distribution, BPR: boxplot rule (\ref{boxplot_forM}), AET: asymptotically exact threshold (\ref{def:eta})--(\ref{cond:delta}), ST: Siotani's threshold~(\ref{Siotani}), and four univariate methods, applied separately to each component: AET1: asymptotically exact threshold (\ref{def:eta})--(\ref{cond:delta}) with $p=1$, GT: Grubbs test (\ref{Grubbs}), 3s: 3$\sigma$ rule (\ref{3sigma}), and BPR1: boxplot rule (\ref{boxplot_rule}). 
Before applying the univariate methods to each component, the multivariate sample was standardized using the same Huber-type covariance matrix estimate as that employed in the multivariate detection methods.
Note that the univariate version of the proposed method given by (\ref{def:eta})--(\ref{cond:delta}) with $p = 1$ generalizes the result of \cite{Butucea-2018} to the case of unknown variance.

The methods were compared in the total number of misclassified observations (false negatives + false positives). 
The results are listed in Table \ref{tab:N_I_multi} for multivariate methods and Table~\ref{tab:N_I_uni} for univariate methods.

\begin{table}[h!]
	\centering
	\begin{tabular}{l|cc}
		\toprule
		Method & $\mathbf{a}_1$ & $\mathbf{a}_2$ \\
		\midrule
		AET  & 0  & 0 \\
		ST   & 0  & 0 \\
		BD   & 32 (32/0) & 28 (28/0)\\
		BPR  & 18  (18/0) & 12 (12/0) \\
		\bottomrule
	\end{tabular}
	\caption{Number of incorrectly classified observations obtained by multivariate methods in a random sample $\Xb_1,\ldots,\Xb_n$ drawn from $N_p(\ab_1, \Sigma)$ and $N_p(\ab_2, \Sigma)$ with $n=800$, $p=4$. The numbers in parentheses denote false positives / false negatives.}
	\label{tab:N_I_multi}
\end{table}

\begin{table}[ht]
	\centering
	\begin{tabular}{l|cccc|cccc}
		\toprule
		& \multicolumn{4}{c|}{$\mathbf{a}_1$}
		& \multicolumn{4}{c}{$\mathbf{a}_2$} \\
		\cmidrule(lr){2-5} \cmidrule(lr){6-9}
		Method
		& $k=1$ & $k=2$ & $k=3$ & $k=4$
		& $k=1$ & $k=2$ & $k=3$ & $k=4$ \\
		\midrule
		AET1 & 28 (0/28) & 28 (0/28) & 28 (0/28) & 0 & 19 (0/19) & 23 (0/23) & 25 (0/25)& 26 (0/26)\\
		GT   & 28 (0/28) & 28 (0/28)& 28 (0/28) & 0 & 19 (0/19) & 23 (0/23) & 25 (0/25)& 26 (0/26)\\
		3s & 29 (2/27)& 26 (0/26) & 27 (1/26)& 0 & 8 (2/6)& 14 (0/14) & 18 (0/18)& 19 (1/18) \\
		BPR1 & 28 (3/25) & 26 (0/26) & 29 (3/26) & 7 (7/0) & 6 (3/3)& 14 (0/14) & 21 (5/16)& 20 (8/12)\\
		\bottomrule
	\end{tabular}
	\caption{Number of incorrectly classified observations in each component $k=1,\ldots,p$ obtained by univariate methods for a sample $\Xb_1,\ldots,\Xb_n$ drawn from $N_p(\ab_1, \Sigma)$ and $N_p(\ab_2, \Sigma)$ with $n=800$, $p=4$. The numbers in parentheses denote false positives / false negatives.}
	\label{tab:N_I_uni}
\end{table}

The sample $\Xb_1,\ldots,\Xb_n \sim N_p(\ab_t,\Sigma)$ is displayed in Figure~\ref{fig:sim_1realization_a1} for $t=1$ and in Figure~\ref{fig:sim_1realization_a2} for $t=2$. For clarity, only the subset $(X_{600,k},\ldots,X_{700,k})$ of each component $k=1,\ldots,p$ is shown, which keeps the figures uncluttered while ensuring that several anomalies are visible.

In the first scenario (see Figure \ref{fig:sim_1realization_a1} and the first column of Table \ref{tab:N_I_multi}), we see that AET and ST classify all observations correctly. Methods BPR and BD performs substantially worse - in both cases, all misclassified observations were false positives.  These results are consistent with Figure \ref{fig:Hamming_risk}. As expected, the univariate methods (see Figure \ref{fig:sim_1realization_a1} and the left-hand part of Table~\ref{tab:N_I_uni}) detected only anomalies in the fourth component of $\Xb$, where the largest mean shift was present.

In the second scenario of the mild deviations in all components of $\Xb$ (see Figure \ref{fig:sim_1realization_a2} and the right-hand part of Table \ref{tab:N_I_uni}), the univariate methods AET1 and GT are able to identify only a~few anomalies, but most of them remained undetected. Methods 3s and BPR1 perform a~bit better but the error is still substantial. This is not surprising, as univariate methods cannot identify composite anomalies. The performance of the multivariate methods (see Figure \ref{fig:sim_1realization_a2} and the second column of Table \ref{tab:N_I_multi}) is similar to that in the first scenario and in accordance with the results displayed in Figure \ref{fig:Hamming_risk}.

\begin{figure}[p]
	\centering
	\begin{subfigure}[t]{0.98\textwidth}
		\centering
		\includegraphics[width=\textwidth]{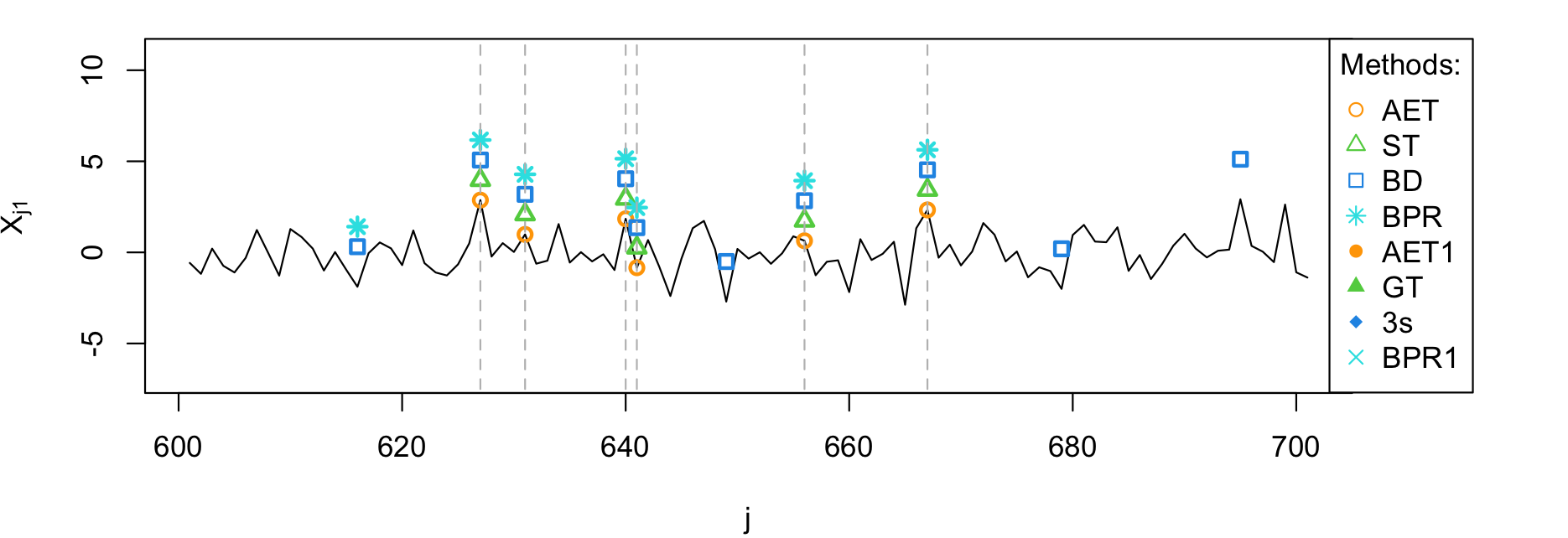}
	\end{subfigure}
	~ \vspace{-8mm} \\
	\begin{subfigure}[t]{0.98\textwidth}
		\centering
		\includegraphics[width=\textwidth]{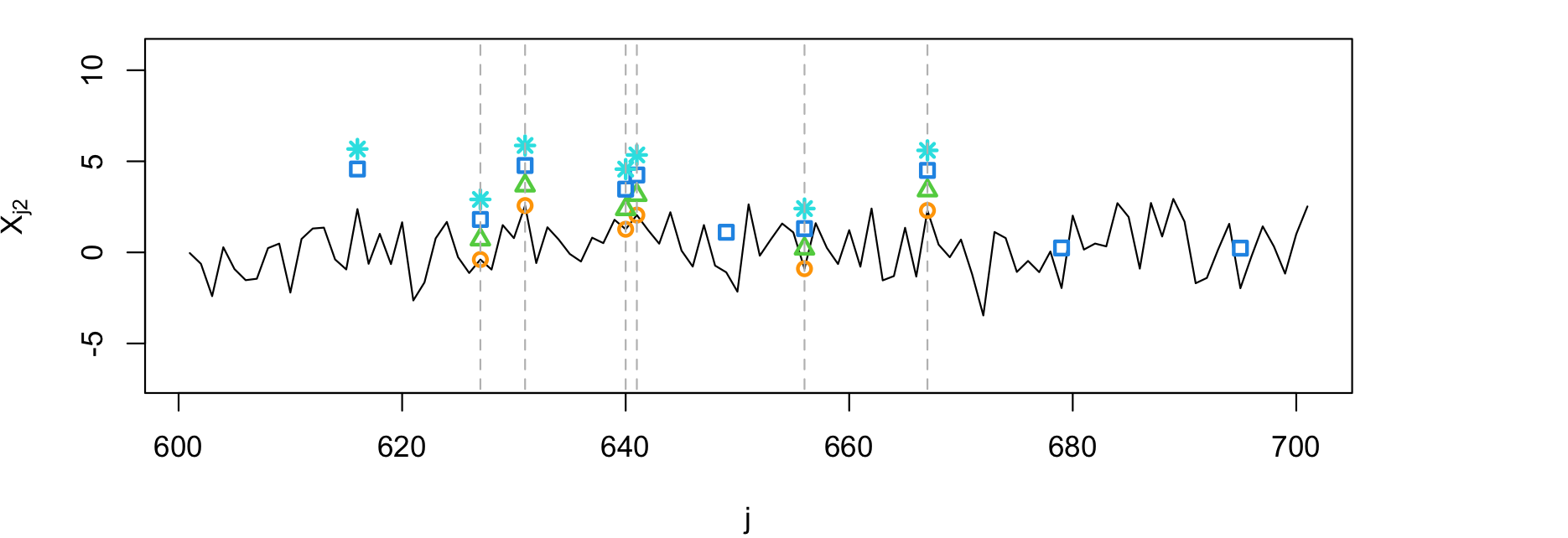}
	\end{subfigure}
	~ \vspace{-8mm} \\
	\begin{subfigure}[t]{0.98\textwidth}
		\centering
		\includegraphics[width=\textwidth]{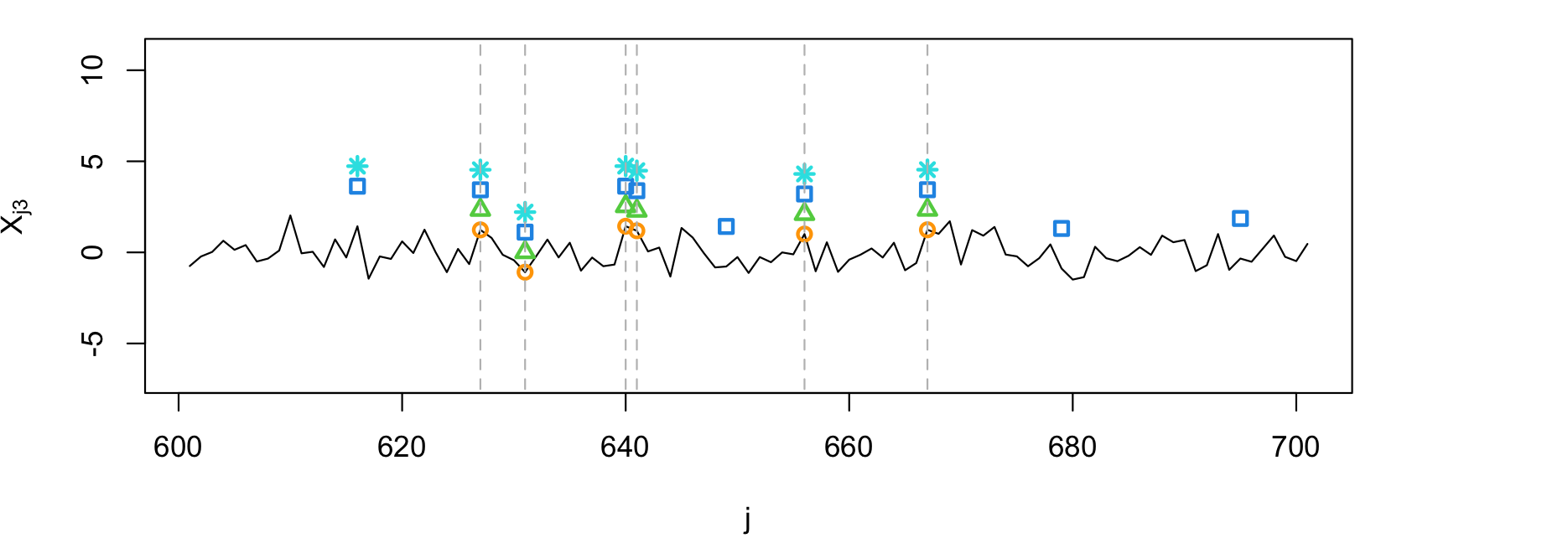}
	\end{subfigure}
	~ \vspace{-8mm} \\
	\begin{subfigure}[t]{0.98\textwidth}
		\centering
		\includegraphics[width=\textwidth]{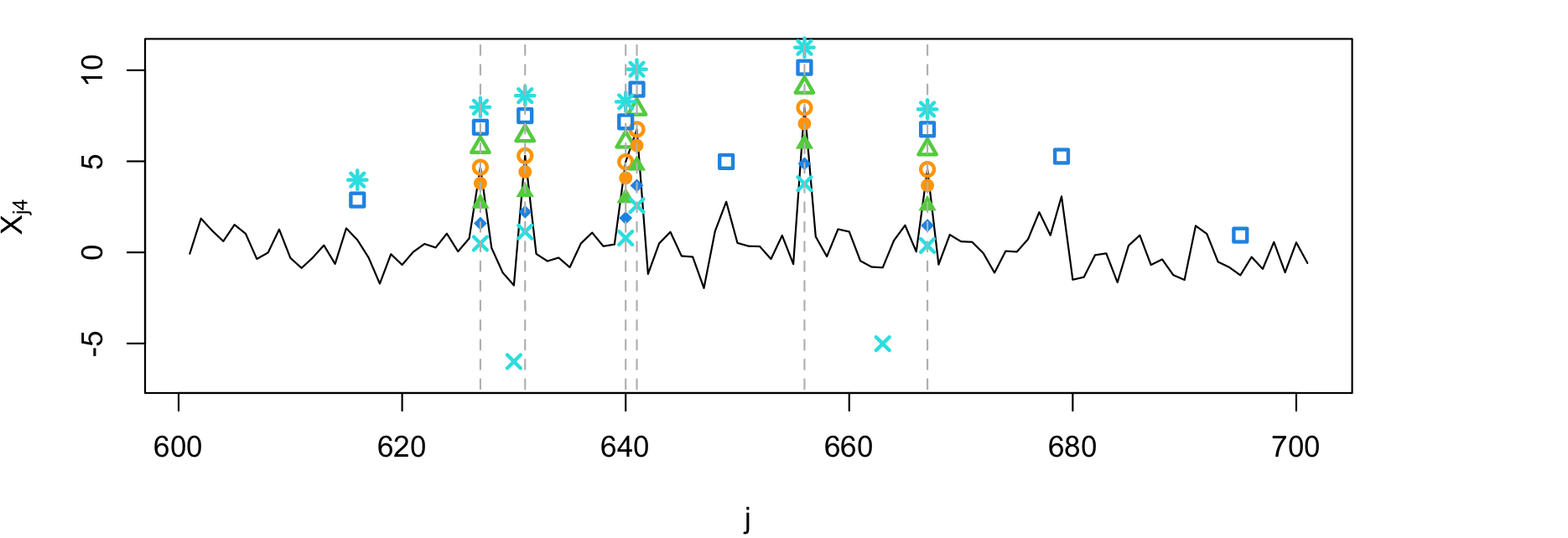}
	\end{subfigure}
	\caption{Detection in a Gaussian sample $\{\Xb_j = (X_{j1},\ldots,X_{j4})^\top, j=1,\ldots,800\},$ with mean $\boldsymbol{a}_1$ and covariance specified in (\ref{def:norm_simul_multiD}). Compared multivariate methods: BD: threshold (\ref{qBeta}) based on beta distribution, BPR: boxplot rule (\ref{boxplot_forM}), AET: asymptotically exact threshold (\ref{def:eta})--(\ref{cond:delta}), ST: Siotani's threshold (\ref{Siotani}), and univariate methods: AET1: asymptotically exact threshold (\ref{def:eta})--(\ref{cond:delta}) with $p=1$, GT: Grubbs test (\ref{Grubbs}), 3s: 3$\sigma$ rule (\ref{3sigma}), and BPR1: boxplot rule (\ref{boxplot_rule}). The true anomalies are marked by a dashed gray line.}
	\label{fig:sim_1realization_a1}
\end{figure}

\begin{figure}[p]
	\centering
	\begin{subfigure}[t]{0.98\textwidth}
		\centering
		\includegraphics[width=\textwidth]{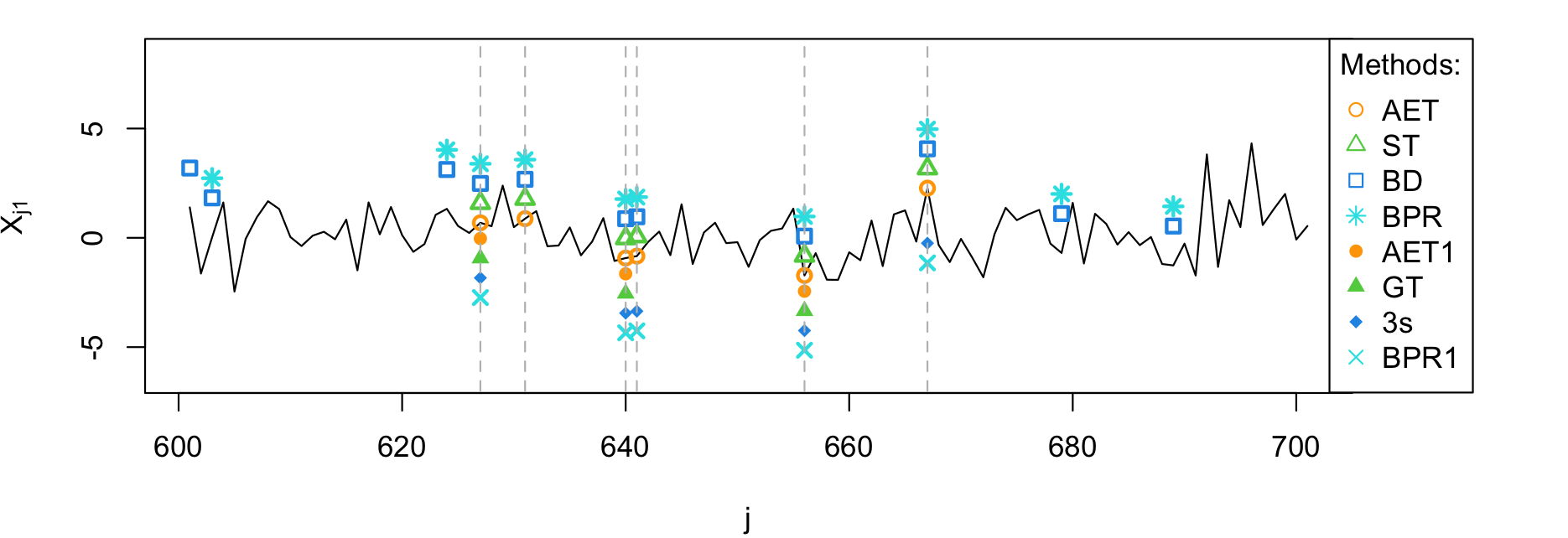}
	\end{subfigure}
	~ \vspace{-8mm} \\
	\begin{subfigure}[t]{0.98\textwidth}
		\centering
		\includegraphics[width=\textwidth]{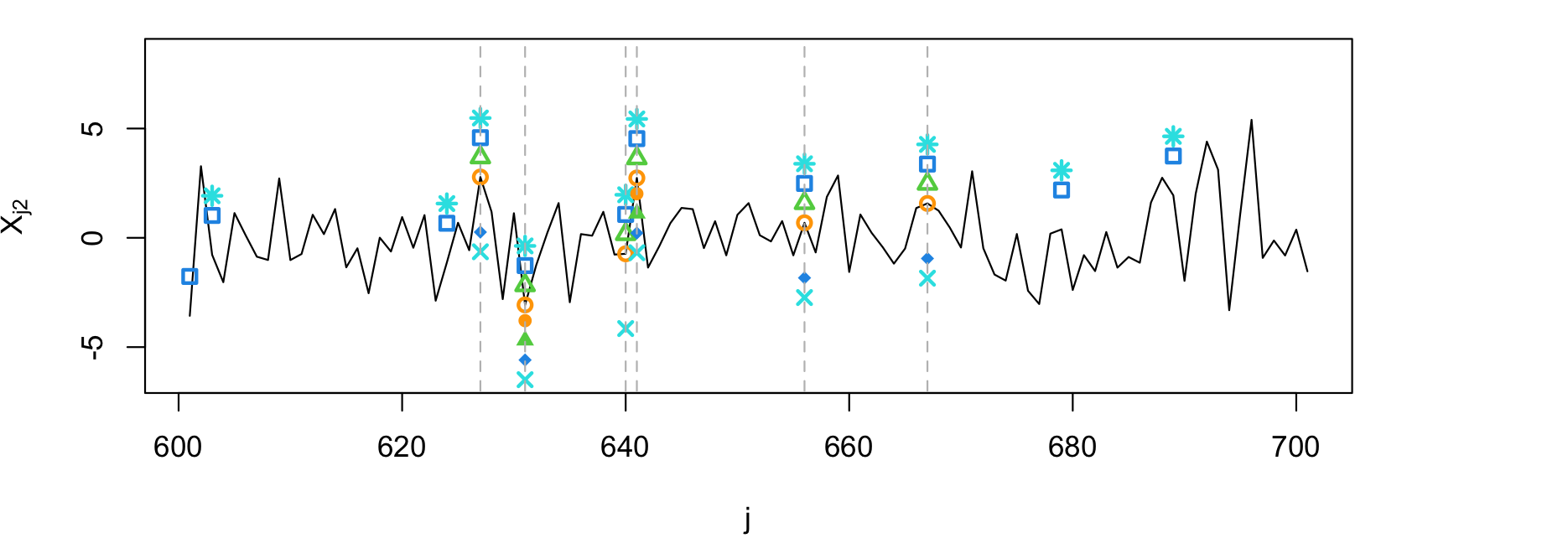}
	\end{subfigure}
	~ \vspace{-8mm} \\
	\begin{subfigure}[t]{0.98\textwidth}
		\centering
		\includegraphics[width=\textwidth]{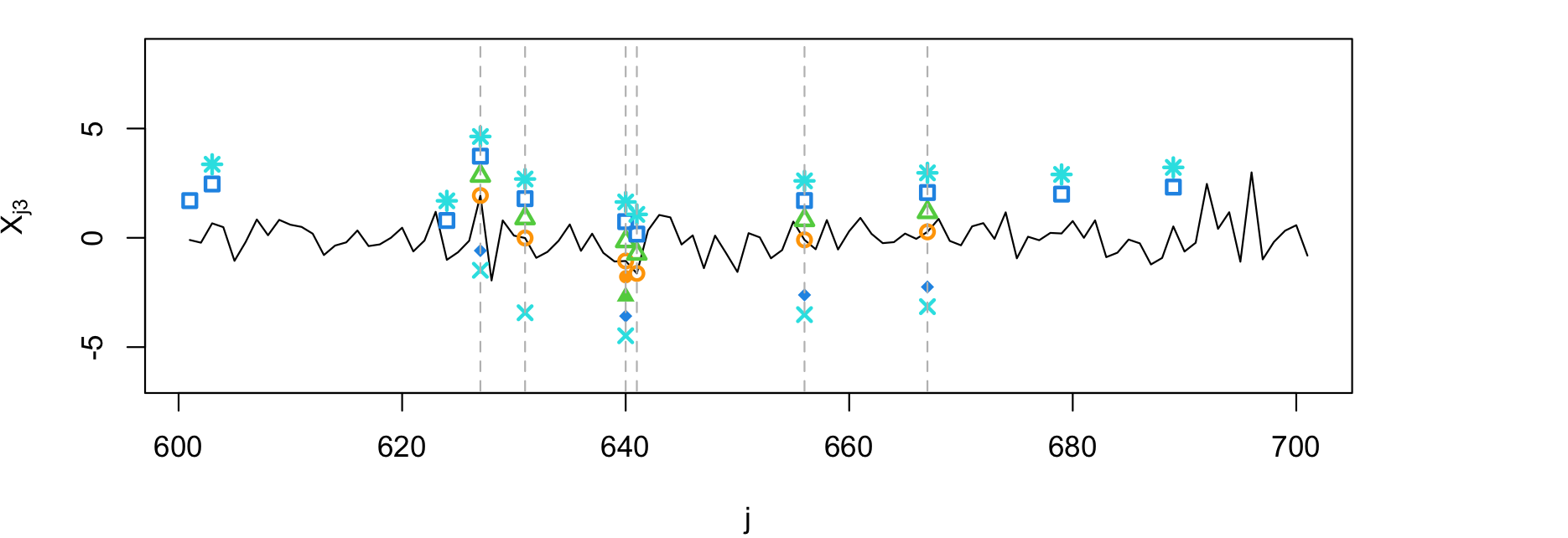}
	\end{subfigure}
	~ \vspace{-8mm} \\
	\begin{subfigure}[t]{0.98\textwidth}
		\centering
		\includegraphics[width=\textwidth]{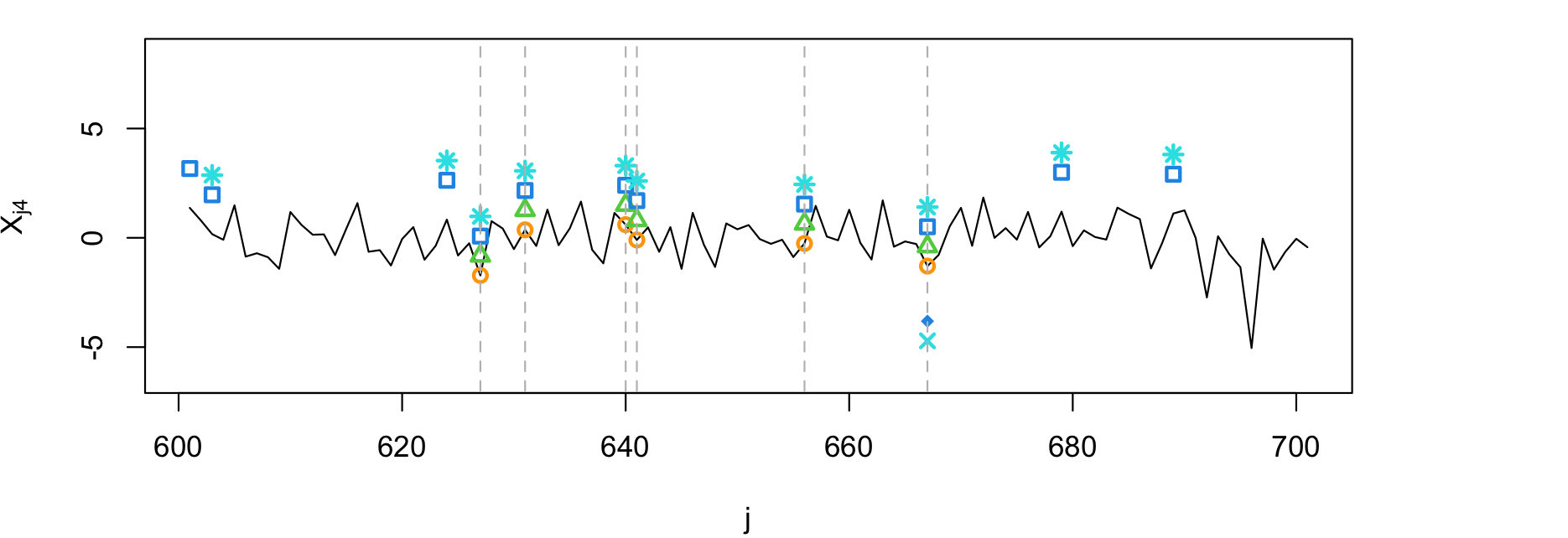}
	\end{subfigure}
	\caption{Detection in a Gaussian sample $\{\Xb_j = (X_{j1},\ldots,X_{j4})^\top, j=1,\ldots,800\},$ with mean $\boldsymbol{a}_2$ and covariance specified in (\ref{def:norm_simul_multiD}). Compared multivariate methods: BD: threshold (\ref{qBeta}) based on beta distribution, BPR: boxplot rule (\ref{boxplot_forM}), AET: asymptotically exact threshold (\ref{def:eta})--(\ref{cond:delta}), ST: Siotani's threshold (\ref{Siotani}) and univariate methods: AET1: asymptotically exact threshold (\ref{def:eta})--(\ref{cond:delta}) with $p=1$, GT: Grubbs test (\ref{Grubbs}), 3s: 3$\sigma$ rule (\ref{3sigma}) and BPR1: boxplot rule (\ref{boxplot_rule}). The true anomalies are marked by a dashed gray line.}
	\label{fig:sim_1realization_a2}
\end{figure}


\section{Real data examples}\label{sec:realdata}
\subsection{Daily step counts (univariate time series)}
In this subsection, we analyze physical activity data collected from a wearable device as part of the study described in detail by \cite{Elavsky-2025}. The analysis serves to assess the practical performance of the proposed anomaly detection procedure and to compare its findings with those of existing approaches. 
The dataset contains daily step counts from 1315 participants. For illustration, we selected three participants (IDs 999, 1429, and 2110) with long observation periods and no missing data. The recorded periods span 409 days for ID~999, 412 days for ID~1429, and 373 days for ID~2110.

Since daily step counts form a univariate time series, we applied the procedure described in Section~\ref{sec:time_series} for $p=1$ in combination with the univariate methods from Section~\ref{sec:existing_methods}. Each time series was modeled using an ARIMA model fitted via the~\texttt{R} function \texttt{auto.arima()}, which selects the best ARIMA model according to the corrected Akaike information criterion (AICc). The anomaly detection methods were subsequently applied to the resulting model residuals. As in Section \ref{Sec:simul}, we chose $\alpha=0.05$ in (\ref{Grubbs}), and $\delta=1/\sqrt{\log n}$ in (\ref{def:eta}).
The residuals were normalized by using the Huber variance estimator defined in~(\ref{def:Huber_scatter})–(\ref{def:Huber_u2}), with $\mu = 0$ and $\hat{\Sigma} = \hat{\sigma} > 0$. The computation of the estimator was based on the~\texttt{cov\_Huber()} function from the \texttt{robmed} package, with parameters set as described at the beginning of Section~\ref{Sec:simul}. Some of the considered detection methods assume normality and independence, which are also standard theoretical assumptions for ARIMA residuals (after accounting for daily and weekly seasonality). These assumptions were assessed using QQ-plots and autocorrelation function plots (see Section C.1 in the Appendix). In all presented cases, the QQ-plots showed mild tail deviations from normality, which is acceptable in a real-data setting. The autocorrelation functions revealed no significant dependence in any of the three cases.

The analyses of time series are displayed in Figures~\ref{fig:steps_id999}–\ref{fig:steps_id2110}, together with the detected anomalous days. The detection methods used were AET1: asymptotically exact threshold (\ref{def:eta})--(\ref{cond:delta}) for $p=1$, GT: Grubbs test (\ref{Grubbs}), 3s: $3\sigma$ rule (\ref{3sigma}) and BPR1: boxplot rule (\ref{boxplot_rule}). Although the true outliers are unknown, the identified observations appear reasonable. Methods AET1 and GT identified the highest peaks, methods 3$\sigma$ and especially BPR1 also flagged many smaller peaks. As was seen in Section \ref{Sec:simul}, the latter methods have performance comparable to AET1 and GT for $n \approx 400$, but they tend to produce some false positives. 

\begin{figure}[h!]
		\centering
		\includegraphics[width=\textwidth]{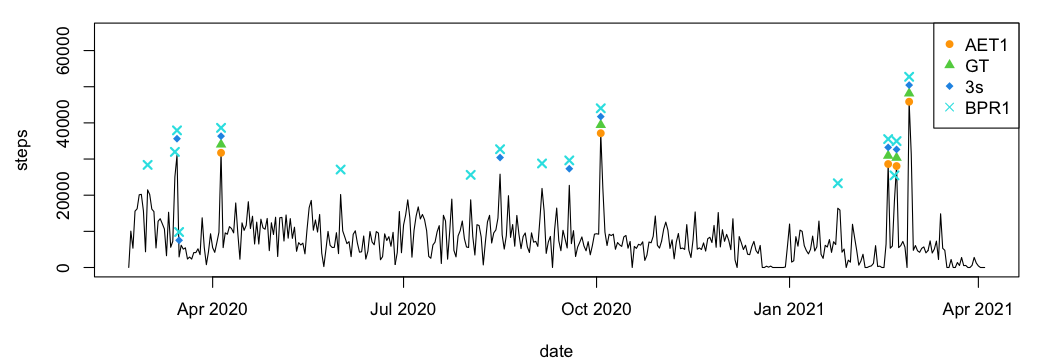}
	\caption{Daily step counts with detected anomalies for the participant with ID 999. Time period 409 days long. Time series modeled by ARIMA(1,1,2)(0,0,1)[7]. Detection methods: AET1: asymptotically exact threshold (\ref{def:eta})--(\ref{cond:delta}) for $p=1$, GT: Grubbs test (\ref{Grubbs}), 3s: $3\sigma$ rule (\ref{3sigma}) and BPR1: boxplot rule (\ref{boxplot_rule}).}
	\label{fig:steps_id999}
\end{figure}

\begin{figure}[h]
		\centering
		\includegraphics[width=\textwidth]{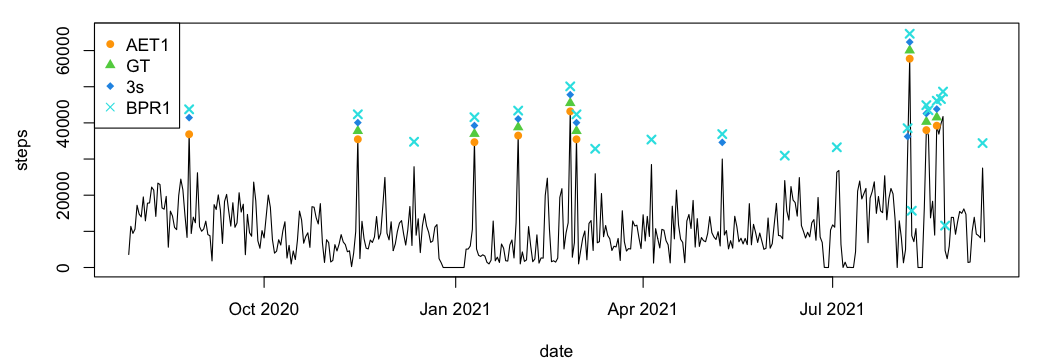}
	\caption{Daily step counts with detected anomalies for the participant with ID 1429. Time period 412 days long. Time series modeled by ARIMA(1,1,1)(0,0,1)[7]. Detection methods: AET1: asymptotically exact threshold (\ref{def:eta})--(\ref{cond:delta}) for $p=1$, GT: Grubbs test (\ref{Grubbs}), 3s: $3\sigma$ rule (\ref{3sigma}) and BPR1: boxplot rule (\ref{boxplot_rule}).}
	\label{fig:steps_id1429}
\end{figure}

\begin{figure}[h]
		\centering
		\includegraphics[width=\textwidth]{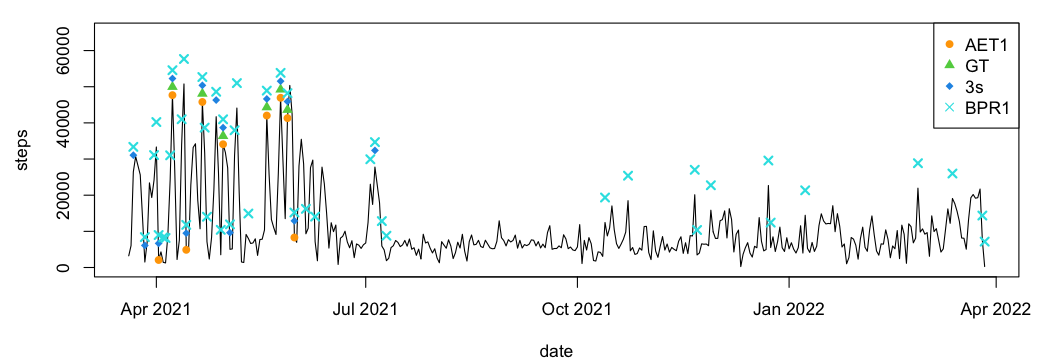}
	\caption{Daily step counts with detected anomalies for the participant with ID 2110.  Time period 373 days long. Time series modeled by ARIMA(5,1,0)(2,0,0)[7]. Detection methods: AET1: asymptotically exact threshold (\ref{def:eta})--(\ref{cond:delta}) for $p=1$, GT: Grubbs test (\ref{Grubbs}), 3s: $3\sigma$ rule (\ref{3sigma}) and BPR1: boxplot rule (\ref{boxplot_rule}).}
	\label{fig:steps_id2110}
\end{figure}

\subsection{Air pollution data (multivariate time series)}

To demonstrate the proposed anomaly detection method on a multivariate dataset, we used concentrations of five pollutants SO$_2$, NO$_2$, O$_3$, PM$_{10}$ (particles with a diameter of 10 micrometers or less), and PM$_{2.5}$ (particles with a diameter of 2.5 micrometers or less), which play a major role in air quality. The data were collected from January 1, 2015, to December 31, 2017, in the meteorological station Praha 4-Libuš. 
This period was chosen because it covers two important episodes of a smog situation (August 2015 and January/February 2017) and also a period of standard air quality (2016). The observation period covers three years and contains 11 \% of missing values. Because the VAR model used for subsequent time-series analysis does not accommodate missing values, the missing observations were first imputed using a state-space Kalman smoothing approach that explicitly accounts for a time-varying trend and weekly and annual seasonal effects. These imputed values were excluded from the subsequent anomaly detection procedure.

Since the data represent a five-dimensional time series, we model them using a vector autoregression (VAR) model fitted to the complete dataset after imputing the missing observations. The lag order was selected using the \texttt{VARselect()} function, which reports several information criteria and the final prediction error for increasing lag orders. According to the Akaike information criterion (AIC), a lag of 4 days was selected. 
Then, the anomaly detection methods described in Section~\ref{sec:existing_methods_multivariate}, together with the proposed method~(\ref{def:eta})--(\ref{cond:delta}) with $p=5$, were applied to the residuals of the fitted model (after excluding those corresponding to the imputed values), following the procedure outlined in Section~\ref{sec:time_series}. As above, we chose $\alpha=0.05$ in (\ref{qBeta}) and (\ref{Siotani}), and $\delta=1/\sqrt{\log n}$ in (\ref{def:eta}).

All five components of the time series, together with the detected anomalous days, are displayed in Figure~\ref{fig:pollutants_withAQI}(a). For clarity of presentation, we display only the results obtained using the proposed asymptotically exact threshold~(\ref{def:eta})--(\ref{cond:delta}) and Siotani's threshold~(\ref{Siotani}). The threshold~(\ref{qBeta}) based on the beta distribution and the boxplot rule~(\ref{boxplot_forM}) applied to the Mahalanobis distance identified nearly 10 \% of observations as anomalous, which would make the figure cluttered. This is consistent with the findings from the simulation study in Section \ref{Sec:simul} shown in Figure \ref{fig:Hamming_risk}, where these methods exhibited an increasing number of false positives as the sample size grew. This effect is already visible in Table \ref{tab:N_I_multi} for $n=800$. Since the present dataset contains $n=1096$ observations, the increase in false positives is expected to be strong. 
Since the true anomalies are unknown in this real-data example, we include for comparison the Air Quality Index (AQI) used by \cite{CHMI_IKO} 
and computed for the same location (Praha 4-Libuš). 
The AQI is shown as a time series in Figure~\ref{fig:pollutants_withAQI}(b), together with the official air quality levels used in meteorological classification. This comparison should be interpreted with caution, as the AQI is computed from a slightly different set of pollutants: PM$_{10}$, NO$_2$, SO$_2$, and O$_3$ during April–September, and PM$_{10}$, NO$_2$, and SO$_2$ during October–March. The exact formula is 
\begin{equation} \label{def:IKO}
	AQI = \frac{\sum_{i:c_i>C_i} \frac{c_i}{C_i}}{I} + \frac{\sum_{j:c_j \leq C_j} \frac{c_j}{C_j}}{J},
\end{equation}
where $I$ denotes the number of pollutants whose concentrations $c_i$ exceed their reference values $C_i$ and $J$ denotes the number of pollutants whose concentrations $c_j$ are less than or equal to $C_j$. If $J$ equals zero, the second term is set to zero. In the case $I=0$, the pollutant $j$ with the largest ratio $c_j/C_j$ is taken from the second term and moved to the first term. When the required set of pollutants (PM$_{10}$, NO$_2$, SO$_2$, and O$_3$ during April–September, and PM$_{10}$, NO$_2$, and SO$_2$ during October–March) is not available, the index cannot be computed.
The reference values are 200 $\mu$g/m$^3$/hour for NO$_2$, 350~$\mu$g/m$^3$/hour for SO$_2$, 120 $\mu$g/m$^3$/hour for O$_3$ and 90 $\mu$g/m$^3$/hour for PM$_{10}$. 
In Figures~\ref{fig:pollutants_withAQI}(a)--(b), smog situations are highlighted in gray. These are declared by the Czech Hydrometeorological Institute when threshold values specified by the Air Protection Act are exceeded.

As noted above, the AQI is calculated using only three or four pollutants, while PM$_{2.5}$ is not considered separately, and O$_3$ is not included throughout the entire year. However, concentrations of both these pollutants are known to contribute substantially to overall air quality, and we therefore based our anomaly detection on all five measured pollutants. This choice is also supported by the set of pollutants included in the European Air Quality Index (see \cite{EEA_AQI, Gonzalez-2025}), which is used to enable comparisons of air quality across the European Union. The index values are accessible only via an online interface and are not available for download. 

The detection methods assume that the residuals are independent and approximately normally distributed. These assumptions were assessed using QQ-plots and autocorrelation function plots of the residuals (see Section C.2 of the Appendix). The QQ-plots exhibit mild deviations from normality in the tails, while the autocorrelation function plots do not indicate any significant dependence.

We observe that the asymptotically exact threshold and Siotani’s threshold yield similar results, and both capture well the smog situation at the beginning of 2017. However, the asymptotically exact threshold also detects the smog situation in August 2015 and a peak of elevated AQI at the beginning of 2015, whereas Siotani's threshold misses these events. 
Note that anomaly detection procedures identify extreme observations in both directions, and therefore some days with exceptionally low pollution levels were also labeled as anomalous. Some short smog situations in 2015 and 2017 were not detected by either method, while several regular days were identified as anomalies. 
These discrepancies may be explained by the fact that the AQI is based on a slightly different set of pollutants, as well as by the influence of the fitted time-series model. More accurate results would likely require a larger dataset, allowing a more flexible model that captures the seasonal component more precisely. Overall, however, the anomalous days identified by the asymptotically exact threshold in Figure~\ref{fig:pollutants_withAQI}(a) appear to correspond well to periods of poor air quality.

\begin{figure}[h!]
	\centering
	\begin{subfigure}[t]{\textwidth}
		\centering
		\includegraphics[width=\textwidth]{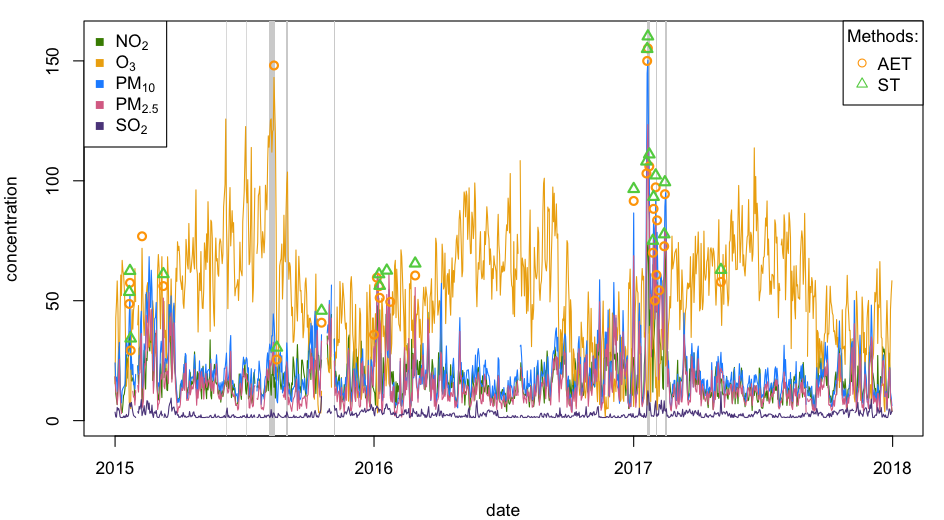}
		\caption{Anomalies detected by AET: asymptotically exact threshold (\ref{def:eta})--(\ref{cond:delta}) and ST: Siotani's threshold~(\ref{Siotani}). Smog situations indicated by gray color.}
	\end{subfigure}
	~ 
	\begin{subfigure}[t]{\textwidth}
		\centering
		\includegraphics[width=\textwidth]{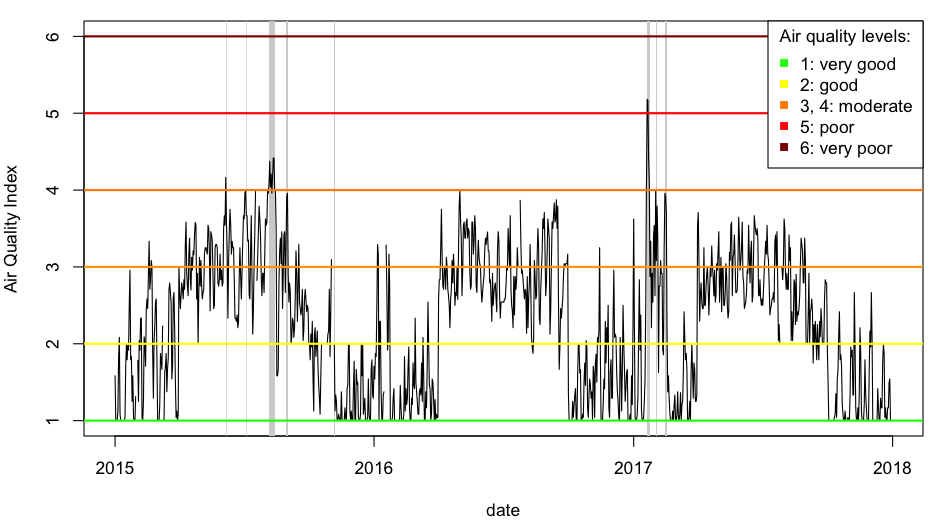}
		\caption{Air Quality Index including levels and smog situations (gray color).}
	\end{subfigure}
	\caption{Air pollutant concentrations with detected anomalous days and the official Air Quality Index in Praha 4-Libuš (Czech Republic),
		 January 1, 2015 – December 31, 2017.}
	\label{fig:pollutants_withAQI}
\end{figure}

\section{Discussion}
In this article, we proposed a method for detecting anomalies in samples of multivariate Gaussian observations that asymptotically achieves exact identification. The procedure is based on a simple thresholding rule, which makes it easy to implement and computationally efficient. For both the sample covariance matrix (computed from historical data without anomalies) and the Huber-type estimator (based on data contaminated by anomalies), we showed that the expected number of misclassified observations converges to zero, provided the anomaly signal is sufficiently strong (as specified in Theorem~\ref{theorem:upper_bound}). In addition, we established conditions under which exact identification cannot be achieved, which clarifies the limits of the detection problem. These theoretical findings were supported by a simulation study, in which the number of observations incorrectly classified by the proposed method decreased towards zero, in agreement with Theorem~\ref{theorem:upper_bound}. Moreover, the proposed method was compared with other commonly used thresholding approaches, and, for sample sizes greater than 400, it showed the best overall performance in terms of the number of misidentified observations while remaining very stable. The threshold based on the beta distribution and the boxplot rule applied to the Mahalanobis distance produced stable results, but with an increasing number of errors as the dimension increased. Siotani’s threshold exhibited a decreasing average number of errors but also showed considerable instability, occasionally producing a larger number of misclassified observations in certain scenarios. The proposed method was also used to analyze two real-data examples: a univariate time series of daily step counts and a multivariate time series of air pollutant concentrations.

Recall that if $p=1$ and the variance is known, then Theorem~\ref{theorem:upper_bound} and Theorem~\ref{theorem:lower_bound} reduces to results of \cite[Section 5]{Butucea-2018}. These results were further extended to the functional case in Theorems 1 and 2 of \cite{Stepanova-2025_EJS} and for an even more general case in Theorems 2 and 3 of \cite{Stepanova-2025_SISP}. The case of almost full selection, when the number of misidentifications does not tend to zero but remains small compared to the total number of active components, is addressed in~\cite{Stepanova-2025_ArXive}.

Detecting anomalies in a multivariate setting is a challenging problem, since a vector observation may appear extreme either due to a large deviation in a single component or through moderate deviations across many components. Our approach is very effective in identifying both types of anomalies: those occurring in individual components and those arising from their composition. The proposed statistic is based on the $\ell_1$-norm, which aggregates contributions from all coordinates and transforms numerous small shifts into a substantial overall signal. 
By standardizing the observations using the estimated covariance, the dependence structure of the data is removed, which makes the statistic insensitive to the choice of coordinate system and allows for detecting deviations in any direction. In contrast, marginal statistics respond only to shifts aligned with individual coordinates. The method can be applied either online, with the threshold and covariance estimate updated at each step, or in an offline setting using previously recorded data.  
Another practical advantage is that new variables can be incorporated into the procedure with minimal effort.

A potential limitation of the proposed method is its reduced sensitivity to anomalies that affect only a small subset of components. Because the statistic based on $\ell_1$-norm spreads the effect of each coordinate across the entire sum, it is naturally less influenced by a single outlying value unless that value is exceptionally large. While this property contributes to a more stable and robust overall measure of deviation, it may also cause certain sparse anomalies to be overlooked. Conditions (\ref{cond:a_UB}) and (\ref{cond:a_LB}) describe the minimal strength of the anomaly that is required for detection in terms of $\lVert L^{-1}\ab \rVert_1$ and so define the demarcation line between what is possible and impossible in this detection problem. A practical way to address this issue is to pair the $\ell_1$-based detector with its one-dimensional version applied, for example, to $\lVert \hat{L}^{-1}\Xb \rVert_{\infty}$, thereby increasing sensitivity to large deviations occurring in individual coordinates.

Our results offer an efficient method for identifying outlying observations not only in independent data but also in time series, when applied to model residuals. We illustrate this approach on real datasets from biomechanics (measurements from a wearable device) and meteorology. In the biomechanics example, we analyze the daily number of steps as a univariate time series and show that our method successfully highlights days with unusually high or low activity. 
When applied to a univariate time series, the proposed method can serve as an attractive alternative to the procedure of \cite{Shewhart-1931} or the EWMA~procedure of \cite{Roberts-1959} and can support, for instance, just-in-time adaptive interventions (JITAI) by providing timely warnings of atypical behavior, similarly to their use in~\cite{Schat-2026}. Moreover, in the multivariate setting, it offers a natural extension of classical statistical process control methods.

In the second real-data example, we analyzed a~multivariate time series of measurements of five air pollutants. The anomalies detected by our procedure matched the days classified as highly polluted according to the Air Quality Index commonly used in meteorological practice, and also identified the main smog situations.

Model (\ref{model}) is a standard additive noise model that can be applied to both individual observations and residuals from a wide range of statistical models, making our method broadly applicable. The assumption of a Gaussian distribution is common in statistical practice and is routinely imposed on residuals in many modeling settings. Other models that may be worth exploring include regression frameworks with non-Gaussian, heavy-tailed noise, as well as models designed for noncontinuous outcomes such as count or binary data.

The performance of each method depends strongly on the quality of the covariance estimator, since inaccurate covariance estimates can lead to a higher number of misidentified anomalies. In practice, however, estimating covariance in the presence of outliers is challenging. The optimality properties of the proposed detector are guaranteed for estimators satisfying decomposition~(\ref{cond_S}) together with assumptions (A1)--(A3). We showed that both the sample covariance (computed from uncontaminated historical data) and the Huber-type robust estimator fulfill these requirements. 

The main limitation of the proposed method is the assumption of independence and normality of the underlying observations. Although these assumptions are common in many standard statistical procedures, they may not always hold in practical applications, particularly in settings with temporal dependence or heavy-tailed noise. Future work could therefore focus on extending the methodology to more flexible distributional frameworks, such as heavy-tailed models based on the Student $t$-distribution with low degrees of freedom. 

In summary, the proposed method offers a flexible, computationally efficient framework for anomaly detection in multivariate settings while avoiding strong subjective choices, such as the specification of probability levels required by methods based on empirical or distributional quantiles. The method is supported by theoretical results and validated through extensive simulations and real-world applications. Its ability to capture both component-wise and composite deviations makes it broadly applicable across domains, while its simplicity allows for easy integration into existing procedures and routine screening of large datasets.



\section*{Acknowledgements}
We would like to thank the project “Healthy Aging in Industrial Environment HAIE \\ CZ.02.1.01/0.0/0.0/16\_019/0000798”, co-financed by the European Union, for the use of the data.
The air pollution data were provided by the Czech hydrometeorological institute (CHMI). The authors also gratefully acknowledge Ing. Hana Škáchová from CHMI for valuable consultations on the meteorological aspects of the study and for reviewing the corresponding section of the manuscript.
\section*{Funding}
The study is from the project ``Research of Excellence on Digital Technologies and Wellbeing \\ CZ.02.01.01/00/22\_008/0004583", which is co-financed by the European Union.

\bibliographystyle{plain} 
\bibliography{references}       

@book{Abra-1972,
  author    = {Abramowitz, Milton and Stegun, Irene A.},
  title     = {Handbook of Mathematical Tables with Formulas, Graphs, and Mathematical Tables},
  edition   = {10th},
  series    = {Applied Mathematics Series},
  number    = {55},
  publisher = {National Bureau of Standards},
  address   = {Washington, DC, USA},
  year      = {1972}
}

@book{Anderson-2003,
  author    = {Anderson, Theodore Wilbur},
  title     = {An Introduction to Multivariate Statistical Analysis},
  edition   = {3rd},
  publisher = {John Wiley \& Sons},
  address   = {New Jersey},
  year      = {2003}
}

@book{Barnett-1998,
  author    = {Barnett, Vic and Lewis, Toby},
  title     = {Outliers in Statistical Data},
  edition   = {3rd},
  publisher = {John Wiley \& Sons},
  address   = {Chichester},
  year      = {1998}
}

@article{Basu-2007,
  author  = {Basu, Sabyasachi and Meckesheimer, Martin},
  title   = {Automatic outlier detection for time series:
an application to sensor data},
  journal = {Knowledge and Information Systems},
  volume  = {11},
  number  = {2},
  pages   = {137--154},
  year    = {2007}
}

@mastersthesis{Cui-2014,
  author       = {Cui, Xin},
  title        = {Optimal Component Selection in High Dimension},
  school       = {Carleton University},
  address      = {Ottawa, Canada},
  year         = {2014}
}

@book{Massart-2013,
  author    = {Boucheron, St{\'e}phane and Lugosi, G{\'a}bor and Massart, Pascal},
  title     = {Concentration Inequalities: A Nonasymptotic Theory of Independence},
  edition   = {1st},
  publisher = {Oxford University Press},
  address   = {Oxford},
  year      = {2013}
}

@article{Butucea-2018,
  author  = {Butucea, Cristina and Ndaoud, Mohamed and Stepanova, Natalia and Tsybakov, Alexandre},
  title   = {Variable Selection with {H}amming Loss},
  journal = {The Annals of Statistics},
  volume  = {46},
  number  = {5},
  pages   = {1837--1875},
  year    = {2018}
}

@article{Chandola-2009,
  author  = {Chandola, Varun and Banerjee, Arindam and Kumar, Vipin},
  title   = {Anomaly Detection: A Survey},
  journal = {ACM Computing Surveys},
  volume  = {41},
  number  = {3},
  pages   = {15},
  year    = {2009}
}

@misc{CHMI_IKO,
  author       = {{Czech Hydrometeorological Institute}},
  title        = {Index kvality ovzduší},
  year         = {2026},
  url          = {https://www.chmi.cz/kvalita-ovzdusi/imise-informacni-system-hodnoceni-kvality-ovzdusi/podklady-pro-hodnoceni-ko/hodnoceni-ko-index-kvality-ovzdusi},
  note         = {Accessed: 2026-04-20}
}

@article{Elavsky-2025,
	author = {Elavsky, Steriani and Brabec, Marek and Maly, Marek and Knapova, Lenka and Kastovska, Barbora and Sebera, Michal and Ely, Marcela and Jandackova, Vera K and Keller, Jan and Pavel, Misha},
	title = {The temporal dynamics of the association between daily physical activity and life satisfaction},
	journal = {Annals of Behavioral Medicine},
	volume = {59},
	number = {1},
	pages = {1--14},
	year = {2025}
}

@misc{EEA_AQI,
  author       = {{European Environment Agency}},
  title        = {European Air Quality Index},
  year         = {2026},
  url         = {https://airindex.eea.europa.eu/AQI/index.html},
  note = {Accessed: 2026-04-20}
}

@article{Gnanadesikan-1972,
  author  = {Gnanadesikan, Ramanathan and Kettenring, John R.},
  title   = {Robust Estimates, Residuals, and Outlier Detection with Multiresponse Data},
  journal = {Biometrics},
  volume  = {28},
  number  = {1},
  pages   = {81--124},
  year    = {1972}
}

@techreport{Gonzalez-2025,
  author       = {Gonz{\'a}lez Ortiz, Alberto and Soares, Joana and Targa, Jaume and Colina, Maria and Azc{\'o}n, Laia and Banyuls, Lorena and Tobollik, Myriam and Monge, Silvia and Finnbjornsdottir, Ragnhildur},
  title        = {EEA's Revision of the European Air Quality Index Bands},
  institution  = {European Topic Centre on Human Health and the Environment},
  type         = {ETC HE Report},
  number       = {2024/17},
  year         = {2025}
}

@article{Grubbs-1969,
  author  = {Grubbs, Frank E.},
  title   = {Procedures for Detecting Outlying Observations in Samples},
  journal = {Technometrics},
  volume  = {11},
  number  = {1},
  pages   = {1--21},
  year    = {1969}
}

@book{Hawkins-1980,
  author    = {Hawkins, Douglas M.},
  title     = {Identification of Outliers},
  publisher = {Chapman \& Hall},
  year      = {1980}
}

@book{Higham-2008,
  author    = {Higham, Nicholas J.},
  title     = {Functions of Matrices: Theory and Computation},
  publisher = {Society for Industrial and Applied Mathematics},
  address   = {Philadelphia},
  year      = {2008}
}

@article{Hill-2010,
  author  = {Hill, David J. and Minsker, Barbara S.},
  title   = {Anomaly detection in streaming environmental sensor data: A~data-driven modeling approach},
  journal = {Environmental Modelling \& Software},
  volume  = {25},
  number = {9},
  pages   = {1014--1022},
  year    = {2010}
}

@book{Horn-2013,
  author    = {Horn, Roger A. and Johnson, Charles R.},
  title     = {Matrix Analysis},
  edition   = {2nd},
  publisher = {Cambridge University Press},
  address   = {United States of America},
  year      = {2013}
}

@article{Laurent-2018,
  author  = {Laurent, Béatrice and Marteau, Clément and Maugis-Rabusseau, Cathy},
  title   = {Multidimensional two-component Gaussian mixtures detection},
  journal = {Annales de l’Institut Henri Poincaré - Probabilités et Statistiques},
  volume  = {54},
  number = {2},
  pages   = {842–-865},
  year    = {2018}
}

@inproceedings{Laurikkala-2000,
  author    = {Laurikkala, Jorma and Juhola, Martti and Kentala, Erna},
  title     = {Informal Identification of Outliers in Medical Data},
  booktitle = {Proceedings of the Fifth International Workshop on Intelligent Data Analysis in Medicine and Pharmacology},
  year      = {2000}
}

@article{Laxhammar-2014,
  author  = {Laxhammar, Rikard and Falkman, G\"{o}ran},
  title   = {Online Learning and Sequential Anomaly Detection in Trajectories},
  journal = {IEEE Transactions on Pattern Analysis and Machine Intelligence},
  volume  = {36},
  number  = {6},
  pages   = {1158--1173},
  year    = {2014}
}

@book{Lehmann-2005,
  author    = {Lehmann, Erich L. and Romano, Joseph P.},
  title     = {Testing Statistical Hypotheses},
  edition = {3rd},
  publisher = {Springer},
  address   = {New York},
  year      = {2005}
}

@article{Liu-1991,
  author  = {Liu, Jen-Pei and Weng, Chung-Sing},
  title   = {Detection of outlying data in bioavailability/bioequivalence studies},
  journal = {Statistics in medicine},
  volume  = {10},
  pages   = {1375--1389},
  year    = {1991}
}

@article{Lu-2007,
  author  = {Lu, Chang-Tien and Kou, Yufeng and Zhao, Jiang and Chen, Li},
  title   = {Detecting and tracking regional outliers in meteorological data},
  journal = {Information Sciences},
  volume  = {177},
  pages   = {1609--1632},
  year    = {2007}
}

@article{Maronna-1976,
  author  = {Maronna, Ricardo A.},
  title   = {Robust {M}-Estimators of Multivariate Location and Scatter},
  journal = {The Annals of Statistics},
  volume  = {4},
  number  = {1},
  pages   = {51--67},
  year    = {1976}
}

@book{Petrov-1995,
  author    = {Petrov, Valentin V.},
  title     = {Limit Theorems of Probability Theory: Sequences of Independent Random Variables},
  publisher = {Clarendon Press},
  address   = {Oxford},
  year      = {1995}
}

@article{Roberts-1959,
  author  = {Roberts, S.W.},
  title   = {Control chart tests based on geometric moving averages},
  journal = {Technometrics},
  volume  = {1},
  number  = {3},
  pages   = {239--250},
  year    = {1959}
}

@book{Rousseeuw-1987,
  author    = {Rousseeuw, Peter J. and Leroy, Annick M.},
  title     = {Robust Regression and Outlier Detection},
  publisher = {John Wiley \& Sons},
  address   = {New York},
  year      = {1987}
}

@article{Schat-2026,
  author  = {Schat, Evelien and Schreuder, Marieke J. and Ceulemans, Eva},
  title   = {Statistical process control for real-time monitoring in clinical psychology: State of the art and future research agenda},
  journal = {Neuroscience Applied},
  volume  = {5},
  eid = {106991},
  year = {2026},
}

@article{Schmidl-2022,
  author  = {Schmidl, Sebastian and Wenig, Phillip and Papenbrock, Thorsten},
  title   = {Anomaly Detection in Time Series: A Comprehensive Evaluation},
  journal = {Proceedings of the VLDB Endowment},
  volume  = {15},
  number  = {9},
  pages   = {1779--1797},
  year    = {2022}
}

@book{Shewhart-1931,
  author    = {Shewhart, Walter A.},
  title     = {Economic Control of Quality of Manufactured Product},
  publisher = {D. Van Nostrand Company, Inc.},
  address   = {New York},
  year      = {1931}
}

@article{Siffer-2017,
  author  = {Siffer, Alban and Fouque, Pierre-Alain and Termier, Alexandre and Largouet, Christine},
  title   = {Anomaly Detection in Streams with Extreme Value Theory},
  journal = {Proceedings of the 23rd ACM SIGKDD International Conference on Knowledge Discovery and Data Mining},
  volume  = {},
  pages   = {1067--1075},
  year    = {2017}
}

@article{Siotani-1959,
  author  = {Siotani, Minoru},
  title   = {The Extreme Value of the Generalized Distances of the Individual Points in the Multivariate Normal Sample},
  journal = {Annals of the Institute of Statistical Mathematics},
  volume  = {10},
  pages   = {183--208},
  year    = {1959}
}

@article{Stefansky-1971,
  author  = {Stefansky, Wilhelmine},
  title   = {Rejecting Outliers by Maximum Normed Residual},
  journal = {The Annals of Mathematical Statistics},
  volume  = {42},
  number  = {1},
  pages   = {35--45},
  year    = {1971}
}

@article{Stefansky-1972,
  author  = {Stefansky, Wilhelmine},
  title   = {Rejecting Outliers in Factorial Designs},
  journal = {Technometrics},
  volume  = {14},
  number  = {2},
  pages   = {469--479},
  year    = {1972}
}

@article{Stepanova-2025_EJS,
  author  = {Stepanova, Natalia and Turcicova, Marie},
  title   = {Exact Variable Selection in Sparse Nonparametric Models},
  journal = {Electronic Journal of Statistics},
  volume  = {19},
  pages   = {2001--2032},
  year    = {2025}
}

@article{Stepanova-2025_SISP,
  author  = {Stepanova, Natalia and Turcicova, Marie},
  title   = {Adaptive Exact Recovery in Sparse Nonparametric Models},
  journal = {Statistical Inference for Stochastic Processes},
  volume  = {28},
  eid = {15},
  pages   = {1--26},
  year    = {2025}
}

@misc{Stepanova-2025_ArXive,
  author       = {Stepanova, Natalia and Turcicova, Marie and Zhao, Xiang},
  title        = {Adaptive Almost Full Recovery in Sparse Nonparametric Models},
  howpublished = {Preprint on arXiv:2512.10488},
  year         = {2025}
}

@book{Stewart-1990,
  author    = {Stewart, Gilbert W. and Sun, Ji-guang},
  title     = {Matrix Perturbation Theory},
  publisher = {Academic Press},
  edition = {1st},
  address   = {San Diego},
  year      = {1990}
}

@article{Sun-1991,
  author  = {Sun, Ji-guang},
  title   = {Perturbation Bounds for the Cholesky and {QR} Factorizations},
  journal = {BIT Numerical Mathematics},
  volume = {31},
  pages   = {341--352},
  year    = {1991}
}

@article{Tsay-2000,
  author  = {Tsay, Ruey S. and Pe\~{n}a, Daniel and Pankratz, Alan E.},
  title   = {Outliers in multivariate time series},
  journal = {Biometrika},
  volume  = {87},
  number  = {4},
  pages   = {789--804},
  year    = {2000}
}

@article{Tyler-1983,
  author  = {Tyler, David E.},
  title   = {Robustness and Efficiency Properties of Scatter Matrices},
  journal = {Biometrika},
  volume  = {70},
  number  = {2},
  pages   = {411--420},
  year    = {1983}
}

@book{Vaart-1998,
  author    = {van der Vaart, Adrianus Willem},
  title     = {Asymptotic Statistics},
  publisher = {Cambridge University Press},
  edition = {1st},
  year      = {1998}
}

@book{Vershynin-2018,
  author    = {Vershynin, Roman},
  title     = {High-Dimensional Probability: An Introduction with Applications in Data Science},
  publisher = {Cambridge University Press},
  year      = {2018}
}

@article{Verzelen-2017,
  author  = {Verzelen, Nicolas and Arias-Castro, Ery},
  title   = {Detection and Feature Selection in Sparse Mixture Models},
  journal = {The Annals of Statistics},
  volume  = {45},
  number = {5},
  pages   = {1920--1950},
  year    = {2017}
}

@INPROCEEDINGS{Zhou-2016,
  author={Zhou, Zeng-Guang and Tang, Ping},
  booktitle={2016 IEEE International Geoscience and Remote Sensing Symposium (IGARSS)}, 
  title={Improving time series anomaly detection based on exponentially weighted moving average (EWMA) of season-trend model residuals}, 
  year={2016},
  volume={},
  number={},
  pages={3414--3417},
}

@article{Zu-2010,
  author  = {Zu, Jiyun and Yuan, Ke-Hai},
  title   = {Local Influence and Robust Procedures for Mediation Analysis},
  journal = {Multivariate Behavioral Research},
  volume  = {45},
  number  = {1},
  pages   = {1--44},
  year    = {2010}
}

\appendix

\section{Decomposition (12) of a covariance estimator}
Consider a sample $\Xb_1,\ldots,\Xb_n$ from $N_p(\boldsymbol{\mu},\Sigma)$, where $\Sigma=(\sigma_{jk})_{j,k=1}^p$ is unknown.	In Section 2.3, we assume the covariance estimator $S=(s_{jk})_{j,k=1}^p$ to allow for the decomposition \begin{equation} \label{cond_S}
	s_{jk} = \frac{1}{n} \sum_{i=1}^n g_{jk}(\Xb_i) + b_{jk}(n) + R_{jk},
\end{equation}
where 
\begin{itemize}
	\item[(A1)]  $g_{jk}:\mathbb{R}^p \to \mathbb{R}$ is a measurable function such that $\E g_{jk}(\Xb_1) = \sigma_{jk}$ and $\E |g_{jk}(\Xb_1) - \sigma_{jk}|^8 < \infty$,
	\item[(A2)] $b_{jk}(n) = O(n^{-1})$ is a deterministic bias, and
	\item[(A3)] $R_{jk}$ is a small centered random remainder, in particular, we require $\E R_{jk}=0$ and $\E |R_{jk}|^8 = O(n^{-4})$. 
\end{itemize}

\subsection{Consistency of estimator (\ref{cond_S})}
Here, we show that such a covariance estimator is consistent. Indeed, by applying Markov's inequality and the inequality relating the spectral norm $\lVert \cdot \rVert_2$ and the Frobenius norm $\lVert \cdot \rVert_F$, we obtain, for all $\epsilon > 0$, 
\begin{align} \label{eq:Sconsistent_v1}
	\begin{split}
	\Prob{}{\lVert S-\Sigma \rVert_2 > \epsilon} \leq  \Prob{}{\lVert S-\Sigma \rVert_F > \epsilon} &=  \Prob{}{\lVert S-\Sigma \rVert_F^2 > \epsilon^2} \\
	& \leq \frac{\E \lVert S-\Sigma \rVert_F^2}{\epsilon^2} = \frac{\sum_{j,k=1}^p \E ( s_{jk}-\sigma_{jk} )^2}{\epsilon^2}.
		\end{split}
\end{align}
From assumptions (A1) and (A2), we have
\begin{align} \label{eq:Sconsistent_v2}
	\E ( s_{jk}-\sigma_{jk} )^2 &= \var (s_{jk}-\sigma_{jk}) = \var \left( \frac{1}{n} \sum_{i=1}^n (g_{jk}(\Xb_i) - \sigma_{jk}) + b_{jk}(n) + R_{jk} \right) \nonumber \\
	&= \var \left( \frac{1}{n} \sum_{i=1}^n (g_{jk}(\Xb_i) - \sigma_{jk} )+ R_{jk} \right) \nonumber \\
	& \leq 2 \var \left( \frac{1}{n} \sum_{i=1}^n (g_{jk}(\Xb_i) - \sigma_{jk} )\right) + 2 \var \left( R_{jk} \right).
\end{align}
Because our observations are iid, we can further write
\begin{align} \label{eq:Sconsistent_v3}
	\var \left( \frac{1}{n} \sum_{i=1}^n (g_{jk}(\Xb_i) - \sigma_{jk} )\right) & = \frac{1}{n^2} \sum_{i=1}^n \var \left( g_{jk}(\Xb_i) - \sigma_{jk} \right) \nonumber \\ 
	&= \frac{1}{n} \E \left( g_{jk}(\Xb_i) - \sigma_{jk} \right)^2 = \mathcal{O}(n^{-1}),
\end{align}
since by Lyapunov inequality and assumption (A1), we have $$\E \left( g_{jk}(\Xb_i) - \sigma_{jk} \right)^2 \leq \left( \E \left( g_{jk}(\Xb_i) - \sigma_{jk} \right)^8 \right)^{1/4} < \infty.$$
By using the same arguments and assumption (A3), we get
\begin{align} \label{eq:Sconsistent_v4}
	\var \left( R_{jk} \right) = \E R_{jk}^2 \leq \left( \E |R_{jk}|^8 \right)^{1/4} = \left( \mathcal{O}(n^{-4})\right)^{1/4} = \mathcal{O}(n^{-1}).
\end{align}
By substituting (\ref{eq:Sconsistent_v3}) and (\ref{eq:Sconsistent_v4}) into (\ref{eq:Sconsistent_v2}) and then into (\ref{eq:Sconsistent_v1}), we obtain for all $\epsilon >0$
\begin{align} \label{eq:Sconsistent}
	\Prob{}{\lVert S-\Sigma \rVert_2 > \epsilon} \leq  \frac{p^2}{\epsilon^2} \mathcal{O} \left( \frac{1}{n} \right) = \mathcal{O}(n^{-1}) = \smallO(1), 
\end{align}
and therefore, $S$ with entries given by (\ref{cond_S}) is a consistent estimator of $\Sigma$ in the spectral norm.

\subsection{Huber-type estimator allows for decomposition (\ref{cond_S})}
Recall that, for a sample $\Xb_1,\ldots,\Xb_n$ with known mean $\boldsymbol{\mu}=(\mu_1,\ldots,\mu_p)^\top$, the M-estimator $\hat{\Sigma}$ with Huber-type weights as in \cite{Maronna-1976, Zu-2010} is defined as a solution to the equation
\begin{align} \label{def:Huber_scatter}
	\hat{\Sigma} &= \frac{1}{n} \sum_{i=1}^n u_2 (M_{\hat{\Sigma}}(\Xb_i)) (\Xb_i - \boldsymbol{\mu}) (\Xb_i - \boldsymbol{\mu})^\top,
\end{align}
where $M_{\hat{\Sigma}}(\Xb_i) = (\Xb_i - \boldsymbol{\mu}) \hat{\Sigma}^{-1}(\Xb_i - \boldsymbol{\mu})^\top$ is the sample Mahalanobis distance of $\Xb_i$ from $\boldsymbol{\mu}$, and $u_2$ is the Huber-type weight function of the following form
\begin{align}
	\begin{split}
		u_2(d) &= \frac{1}{c}, \quad d \leq r \\
		&= \frac{r^2}{cd}, \quad d>r, 
	\end{split} \label{def:Huber_u2}
\end{align}
for some $r>0$ given by $\operatorname{P}\left(\chi_p^2 > r^2\right) = \kappa$ such that $\kappa$ is the proportion of cases one wants to downweight. The quantity $c$ is a normalizing constant chosen so that the robust estimator $\hat{\Sigma}$ is unbiased, i.e.  $\E \left[ u_2(M_{\Sigma}(\Xb))(\Xb-\boldsymbol{\mu})(\Xb - \boldsymbol{\mu})^\top \right] = \Sigma$, under the assumption that $\Xb \sim N_p(\boldsymbol{\mu},\Sigma)$. At convergence, the $j,k$-th entry of the robust covariance estimator can be written in the form (\ref{cond_S}) with
\begin{align}
	g_{jk}(\Xb_i) &= u_2(M_{\Sigma}(\Xb_i))(X_{ij} - \mu_{j})(X_{ik}-\mu_k) \label{Huber_g}\\
	\begin{split}
		R_{jk} &= \frac{1}{n} \sum_{i=1}^n \left( u_2(M_{\hat{\Sigma}}(\Xb_i)) - u_2(M_{\Sigma}(\Xb_i)) \right) (X_{ij} - \mu_{j})(X_{ik}-\mu_k)- \\ 
		& \quad \quad - \frac{1}{n} \E \left[ \sum_{i=1}^n \left( u_2(M_{\hat{\Sigma}}(\Xb_i)) - u_2(M_{\Sigma}(\Xb_i)) \right) (X_{ij} - \mu_{j})(X_{ik}-\mu_k) \right] 
	\end{split} \label{Huber_R} \\
	b_{jk}(n) &= \frac{1}{n} \E \left[ \sum_{i=1}^n \left( u_2(M_{\hat{\Sigma}}(\Xb_i)) - u_2(M_{\Sigma}(\Xb_i)) \right) (X_{ij} - \mu_{j})(X_{ik}-\mu_k) \right], \label{Huber_b}
\end{align}
where $M_{\Sigma}(\Xb_i)$ is the Mahalanobis distance of $\Xb_i$ from $\boldsymbol{\mu}$. 
In this section, we will verify that the covariance estimator (\ref{def:Huber_scatter}) decomposed as in (\ref{Huber_g})--(\ref{Huber_b}) satisfies the assumptions (A1)--(A3) above.

Assumption (A1) holds due to the choice of the calibration constant $c$ (see (\ref{def:Huber_u2})), which ensured unbiasedness under a multivariate normal distribution, and also since all moments of the normal distribution are finite. The first part of assumption (A3), $\E R_{jk}=0$, yields automatically from the definition of $R_{jk}$. Therefore, it remains to show that $b_{jk}(n) = O(n^{-1})$ and $\E |R_{jk}|^8=O(n^{-4})$. 

Denote
\begin{equation}
	B_{jk} := \frac{1}{n} \sum_{i=1}^n \left( u_2(M_{\hat{\Sigma}}(\Xb_i)) - u_2(M_{\Sigma}(\Xb_i)) \right) (X_{ij} - \mu_{j})(X_{ik}-\mu_k),
\end{equation}
then from (\ref{Huber_R}) and (\ref{Huber_b}), we see that $b_{jk}(n) = \E B_{jk}$ and $R_{jk} = B_{jk}-\E B_{jk}$. By using the relation 
\begin{equation} \label{eq:a_plus_b_na_8}
	(a+b)^8 \leq 2^7(a^8 + b^8)
\end{equation}
for $a=|B|$ and $b=\E |B|$, we obtain
\begin{align}
	\E |B-\E B|^8 \leq \E (|B| + \E |B|)^8 \leq 2^7 \E \left( |B|^8 + (\E |B|)^8 \right) \leq 2^8 \E |B|^8,
\end{align}
where $(\E |B|)^8 \leq \E |B|^8$ comes from the Jensen inequality. 
Hence, we have
\begin{align}
	\E |R_{jk}|^8 
	& \leq 2^8 \E |B_{jk}|^8 \nonumber \\
	&= 2^8  \E \left[ \frac{1}{n} \sum_{i=1}^n \left( u_2(M_{\hat{\Sigma}}(\Xb_i)) - u_2(M_{\Sigma}(\Xb_i)) \right) (X_{ij} - \mu_{j})(X_{ik}-\mu_k) \right]^8.
\end{align}
With this notation, it remains to prove that $\E |B_{jk}|^8=O(n^{-4})$ and $b_{jk}= \E B_{jk}=O(n^{-1})$. First, let's have a look at the later relation.

Denote $\boldsymbol{\theta}_0 := \operatorname{vec} \Sigma$ and $\boldsymbol{\hat{\theta}} := \operatorname{vec} \hat{\Sigma}$ and note that M-estimators can be alternatively computed as a solution of an equation $0=\sum_{i=1}^n \psi(\Xb_i, \boldsymbol{\hat{\theta}})$ for some criterion function $\psi$. By computing the Taylor expansion of $\psi$ around $\boldsymbol{\theta}_0$, we obtain an equation
\begin{equation} \label{eq:Taylor_of_psi}
	0= \frac{1}{n} \sum_{i=1}^n \psi(\Xb_i, \boldsymbol{\theta_0}) + \frac{1}{n} \sum_{i=1}^n A_{i} (\boldsymbol{\hat{\theta}} - \boldsymbol{\theta_0}) + \frac{1}{2} (\boldsymbol{\hat{\theta}} - \boldsymbol{\theta_0})^\top \left[\frac{1}{n} \sum_{i=1}^n B_i \right] (\boldsymbol{\hat{\theta}} - \boldsymbol{\theta_0})+ \text{higher orders},
\end{equation}
where $A_i = \frac{\partial}{\partial \boldsymbol{\theta} }\psi(\Xb_i, \boldsymbol{\theta_0})$ is assumed to exist and be nonsingular and $B_i$ collects the second derivatives of $\psi$. Univariate version of this expansion can be found in \cite[rel. (5.18)]{Vaart-1998}. By the Law of large numbers, for large enough $n$, we have $\frac{1}{n} \sum_{i=1}^n \frac{\partial}{\partial \boldsymbol{\theta} }\psi(\Xb_i, \boldsymbol{\theta_0}) \approx \E_{\boldsymbol{\theta}_0} \left[ \frac{\partial}{\partial \boldsymbol{\theta} }\psi(\Xb, \boldsymbol{\theta_0}) \right] =:A$ and analogously $\frac{1}{n} \sum_{i=1}^n B_i \approx \E_{\boldsymbol{\theta}_0} B_1 =: B$. From this, applying expectation to (\ref{eq:Taylor_of_psi}) yields
\begin{equation*}
	0 \approx \E \psi(\Xb_1, \boldsymbol{\theta_0}) + A \E (\boldsymbol{\hat{\theta}} - \boldsymbol{\theta_0}) + \frac{1}{2} \E \left[ (\boldsymbol{\hat{\theta}} - \boldsymbol{\theta_0})^\top B (\boldsymbol{\hat{\theta}}  - \boldsymbol{\theta_0})\right]+ \E (\text{higher orders}).
\end{equation*}
Due to the choice of the calibration constant $c$ in the Huber estimator, we have $\E \psi(\Xb_1, \boldsymbol{\theta_0}) = 0$ under the assumption of multivariate normal distribution, and hence
\begin{align}
	\E (-\boldsymbol{\hat{\theta}}  + \boldsymbol{\theta_0}) &\approx \frac{1}{2} A^{-1} \E \left[ (\boldsymbol{\hat{\theta}} - \boldsymbol{\theta_0})^\top B (\boldsymbol{\hat{\theta}}  - \boldsymbol{\theta_0})\right] = \frac{1}{2} A^{-1} \E \left[ \operatorname{tr} \left( B  (\boldsymbol{\hat{\theta}}  - \boldsymbol{\theta_0})(\boldsymbol{\hat{\theta}} - \boldsymbol{\theta_0})^\top \right) \right] \nonumber \\
	& = \frac{1}{2} A^{-1} \operatorname{tr} \left[ B \E \left( (\boldsymbol{\hat{\theta}}  - \boldsymbol{\theta_0})(\boldsymbol{\hat{\theta}} - \boldsymbol{\theta_0})^\top \right) \right] = O(1/n),
\end{align}
since $\sqrt{n} (\boldsymbol{\hat{\theta}} - \boldsymbol{\theta_0})$ have normal distribution for large $n$ and $\E \left( (\boldsymbol{\hat{\theta}}  - \boldsymbol{\theta_0})(\boldsymbol{\hat{\theta}} - \boldsymbol{\theta_0})^\top \right) = \var (\boldsymbol{\hat{\theta}}  - \boldsymbol{\theta_0}) + \E (\boldsymbol{\hat{\theta}}  - \boldsymbol{\theta_0}) \E (\boldsymbol{\hat{\theta}}  - \boldsymbol{\theta_0})^\top$, where $\E (\boldsymbol{\hat{\theta}}  - \boldsymbol{\theta_0})= \boldsymbol{0}$ and $\var (\boldsymbol{\hat{\theta}}  - \boldsymbol{\theta_0})$ is a finite constant matrix (see, for example, \cite[p. 418]{Tyler-1983}).
Hence, we showed that the bias of $\boldsymbol{\hat{\theta}} = \operatorname{vec} \hat{\Sigma}$ is of order $1/n$ and so $b_{jk}(n) = O(1/n).$

Now, let us prove that $\E |B_{jk}|^8=O(n^{-4})$. A Taylor expansion of $u_2(d_i)$ around $d_{0i}$ yields
\begin{equation*}
	u_2(d_i) = u_2(d_{0i} + \Delta d_i) = u_2(d_{0i}) + u_2'(d_{0i})\Delta d_i + \frac{1}{2} u_2''(\xi_i)(\Delta d_i)^2
\end{equation*}
for some $\xi_i$ between $d_i$ and $d_{0i}$, where $u_2'(d)$ and $u_2''(d)$ are computed with respect to $d$. Hence
\begin{equation*}
	u_2(M_{\hat{\Sigma}}(\Xb_i)) -  u_2(M_{\Sigma}(\Xb_i)) = u_2'(M_{\Sigma}(\Xb_i))\Delta M_i + \frac{1}{2} u_2''(\xi_i)(\Delta M_i)^2
\end{equation*}
and then from H\"{o}lder's inequality, relation (\ref{eq:a_plus_b_na_8}) and the boundedness of the derivatives of $u_2$ (observe $u_2'(d) = O(d^{-2}) \leq m_1$ and $u_2''(d) = O(d^{-3}) \leq m_2$ for some finite constants $m_1, m_2$), we obtain
\begin{align}
	\E |B_{jk}|^8 &= \E \left\lvert \frac{1}{n} \sum_{i=1}^n \left( u_2(M_{\hat{\Sigma}}(\Xb_i)) - u_2(M_{\Sigma}(\Xb_i)) \right) (X_{ij} - \mu_{j})(X_{ik}-\mu_k) \right\rvert^8 \nonumber \\
	& \leq \frac{1}{n^8}n^7 \sum_{i=1}^n \E \left\lvert \left( u_2(M_{\hat{\Sigma}}(\Xb_i)) - u_2(M_{\Sigma}(\Xb_i)) \right) (X_{ij} - \mu_{j})(X_{ik}-\mu_k) \right\rvert^8 \nonumber \\
	& =  \E \left\lvert \left( u_2'(M_{\Sigma}(\Xb_1))\Delta M_1 + \frac{1}{2} u_2''(\xi_1)(\Delta M_1)^2 \right) (X_{1j} - \mu_{j})(X_{1k}-\mu_k) \right\rvert^8 \nonumber \\
	& \leq 2^7 \E \left \lvert   u_2'(M_{\Sigma}(\Xb_1))\Delta M_1 (X_{1j} - \mu_{j})(X_{1k}-\mu_k) \right\rvert^8 + \nonumber \\
	& \qquad + 2^7 \E \left\lvert \frac{1}{2} u_2''(\xi_1)(\Delta M_1)^2 (X_{1j} - \mu_{j})(X_{1k}-\mu_k) \right\rvert^8 \nonumber \\
	\begin{split}
	&\leq 2^7 m_1^8 \E \lvert \Delta M_1 (X_{1j} - \mu_{j})(X_{1k}-\mu_k) \rvert^8 \\
& \qquad + \frac{1}{2} m_2^8 \E \lvert (\Delta M_1)^2 (X_{1j} - \mu_{j})(X_{1k}-\mu_k) \rvert^8 := K_1 + K_2. \label{def:K1_K2_hub}
	\end{split}
\end{align}
Note that \cite[rel (2.4), p. 118]{Stewart-1990}
\begin{equation*}
	\hat{\Sigma}^{-1} = (\Sigma + \Delta)^{-1} = \Sigma^{-1}+\Sigma^{-1}\Delta \Sigma^{-1}
\end{equation*}
and so
\begin{align*}
	M_{\hat{\Sigma}}(\Xb_i) &= (\Xb_i - \boldsymbol{\mu})^\top \hat{\Sigma}^{-1} (\Xb_i - \boldsymbol{\mu}) = (\Xb_i - \boldsymbol{\mu})^\top (\Sigma^{-1} - \Sigma^{-1} \Delta \Sigma^{-1}) (\Xb_i - \boldsymbol{\mu}) \\
	&= M_{\Sigma}(\Xb_i)-(\Xb_i - \boldsymbol{\mu})^\top \Sigma^{-1} \Delta \Sigma^{-1} (\Xb_i - \boldsymbol{\mu}),
\end{align*}
which yields
\begin{equation} \label{def:Delta_di}
	\Delta M_i = M_{\hat{\Sigma}}(\Xb_i)  - M_{\Sigma}(\Xb_i) = - (\Xb_i - \boldsymbol{\mu})^\top \Sigma^{-1} \Delta \Sigma^{-1} (\Xb_i - \boldsymbol{\mu}).
\end{equation}
By substituting (\ref{def:Delta_di}) into $K_1$ from (\ref{def:K1_K2_hub}), we obtain
\begin{align} \label{eq:K1_v1}
	K_1 &= 2^7 m_1^8 \E \lvert \Delta M_1 (X_{1j} - \mu_{j})(X_{1k}-\mu_k) \rvert^8 \nonumber \\
	& = 2^7 m_1^8 \E \left\lvert (\Xb_i - \boldsymbol{\mu})^\top \Sigma^{-1} (\hat{\Sigma} - \Sigma) \Sigma^{-1} (\Xb_i - \boldsymbol{\mu}) (X_{1j} - \mu_{j})(X_{1k}-\mu_k) \right\rvert^8 \nonumber \\
	& = 2^7 m_1^8 \E \left\lvert \boldsymbol{v}_{\Xb_1}^\top \Delta_{\mathbb{X}} \boldsymbol{v}_{\Xb_1} c_{\Xb_1} \right\rvert^8,
\end{align}
where we denoted $\boldsymbol{v}_{\Xb_1} := \Sigma^{-1} (\Xb_i - \boldsymbol{\mu})$, $\Delta_{\mathbb{X}} := \hat{\Sigma} - \Sigma$, $\mathbb{X} := (\Xb_1,\ldots,\Xb_n)$ and $c_{\Xb_1}:=(X_{1j} - \mu_{j})(X_{1k}-\mu_k)$. Moreover, denote by $\langle \cdot, \cdot  \rangle_{F}$ the Frobenius inner product. By using this notation, we can continue in analyzing (\ref{eq:K1_v1}) by using the definition of a trace and the Cauchy-Schwarz inequality,
\begin{align}
	K_1&= 2^7 m_1^8 \E \left\lvert \operatorname{tr}(\Delta_{\mathbb{X}} \boldsymbol{v}_{\Xb_1} \boldsymbol{v}_{\Xb_1}^\top c_{\Xb_1} ) \right\rvert^8 = 2^7 m_1^8 \E \left\lvert \langle \Delta_{\mathbb{X}} , \boldsymbol{v}_{\Xb_1} \boldsymbol{v}_{\Xb_1}^\top c_{\Xb_1} \rangle_{F} \right\rvert^8 \nonumber \\
	& \leq 2^7 m_1^8 \E \left( \lVert \Delta_{\mathbb{X}} \rVert_F^8 \lVert \boldsymbol{v}_{\Xb_1} \boldsymbol{v}_{\Xb_1}^\top c_{\Xb_1} \rVert_F^8 \right) \nonumber \\
	& \leq 2^7 m_1^8 \sqrt{\E \lVert \Delta_{\mathbb{X}} \rVert_F^{16} \E \lVert \boldsymbol{v}_{\Xb_1} \boldsymbol{v}_{\Xb_1}^\top c_{\Xb_1} \rVert_F^{16}} = \sqrt{O(n^{-8})} = O(n^{-4}), \label{eq:K1_v2}
\end{align}
where we used that $\sqrt{n}(\hat{\sigma}_{jk} - \sigma_{jk}) \sim N(0,r)$ for large $n$, which yields
\begin{align*}
	\E \lVert \Delta_{\mathbb{X}} \rVert_F^{16}  = \E \left( \sum_{j=1}^p \sum_{k=1}^p |\hat{\sigma}_{jk} - \sigma_{jk}|^2 \right)^8 \leq \frac{r^8}{n^8} p^{16} \E |\chi^2|^8 = O(n^{-8}),
\end{align*}
and also the fact that $\E \lVert \boldsymbol{v}_{\Xb_1} \boldsymbol{v}_{\Xb_1}^\top c_{\Xb_1} \rVert_F^{16} < \infty$ since all moments of normal distribution are finite.

Further, due to the Cauchy-Schwarz inequality, we have
\begin{align}
	K_2 &= \frac{1}{2} m_2^8 \E \lvert (\Delta M_1)^2 (X_{1j} - \mu_{j})(X_{1k}-\mu_k) \rvert^8 = \frac{1}{2} m_2^8 \E \lvert (\boldsymbol{v}_{\Xb_1}^\top \Delta_{\mathbb{X}}  \boldsymbol{v}_{\Xb_1})^2 c_{\Xb_1} \rvert^8 \nonumber \\
	& \leq \frac{1}{2} m_2^8 \sqrt{\E (\boldsymbol{v}_{\Xb_1}^\top \Delta_{\mathbb{X}}  \boldsymbol{v}_{\Xb_1})^{32} \E c_{\Xb_1}^{16}} = \sqrt{O(n^{-32})}= O(n^{-16}),
	\label{eq:K2_v1}
\end{align}
where the last but one equality follows from using similar arguments as in the derivation of $K_1$ and from the fact that $\E c_{\Xb_1}^{16} < \infty$.

By substituting (\ref{eq:K1_v2}) and (\ref{eq:K2_v1}) into (\ref{def:K1_K2_hub}), we obtain
\begin{equation}
	\E |B_{jk}|^8 = O(n^{-4}),
\end{equation}
which was to be shown. 


\section{Proofs of statements in Section 2.3} \label{Sec:proofs}

To keep this Appendix self-contained, we restate each lemma and theorem. The numbering corresponds to that used in Section 2.3, while additional statements introduced only in the supplement are numbered with the prefix S. In what follows, $\lVert \cdot \rVert_2$ stands for the $\ell_2$-norm of a vector or the spectral norm of a matrix.

\renewcommand{\thetheorem}{2.1}
\begin{lemma}  \label{lemma:max_norm_N0I}
	Let $\Zb_1,\ldots, \Zb_n$ be a random sample from $N_p(\boldsymbol{0}, I)$. Then, for any $\Delta > 0$,
	\begin{equation} \label{prob:N0I}
		\operatorname{P}\left(\max_{j=1,\ldots,n} \lVert \Zb_j \rVert_1 \geq \sqrt{2 (p+\Delta) \log n}\right) = \smallO(1), \quad n \to \infty.
	\end{equation}
\end{lemma}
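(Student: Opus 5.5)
The plan is a union bound over the $n$ observations, combined with a Chernoff bound for the $\ell_1$-norm of a single standard normal vector. Put $t_n=\sqrt{2(p+\Delta)\log n}$. Since $\Zb_1,\ldots,\Zb_n$ are identically distributed,
\begin{equation*}
\operatorname{P}\left(\max_{j=1,\ldots,n}\lVert \Zb_j\rVert_1\geq t_n\right)\leq n\,\operatorname{P}\left(\lVert \Zb_1\rVert_1\geq t_n\right),
\end{equation*}
so it suffices to show that $\operatorname{P}(\lVert \Zb_1\rVert_1\geq t_n)=\smallO(n^{-1})$.

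Write $\lVert \Zb_1\rVert_1=\sum_{k=1}^p|Z_{1k}|$, where the $Z_{1k}$ are i.i.d.\ $N(0,1)$. The moment generating function factorizes over the components. For $\lambda>0$,
\begin{equation*}
\E e^{\lambda|Z_{1k}|}=2e^{\lambda^2/2}\Phi(\lambda)\leq 2e^{\lambda^2/2},
\end{equation*}
where $\Phi$ is the standard normal distribution function. Hence $\E e^{\lambda\lVert \Zb_1\rVert_1}\leq 2^p e^{p\lambda^2/2}$. Markov's inequality applied to $e^{\lambda\lVert \Zb_1\rVert_1}$ gives
\begin{equation*}
\operatorname{P}(\lVert \Zb_1\rVert_1\geq t)\leq 2^p\exp\left(\frac{p\lambda^2}{2}-\lambda t\right).
\end{equation*}
We take the minimizing choice $\lambda=t/p$, which yields
\begin{equation*}
\operatorname{P}(\lVert \Zb_1\rVert_1\geq t)\leq 2^p e^{-t^2/(2p)}.
\end{equation*}

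Now substitute $t=t_n$. Then $t_n^2/(2p)=(1+\Delta/p)\log n$, so $\operatorname{P}(\lVert \Zb_1\rVert_1\geq t_n)\leq 2^p n^{-1-\Delta/p}$. Combining this with the union bound gives
\begin{equation*}
\operatorname{P}\left(\max_{j=1,\ldots,n}\lVert \Zb_j\rVert_1\geq t_n\right)\leq 2^p n^{-\Delta/p}.
\end{equation*}
This tends to $0$ as $n\to\infty$ because $p$ is fixed and $\Delta>0$, which proves (\ref{prob:N0I}).

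The only step requiring care is the single-vector tail bound. The constant in the exponent has to be exactly $1/(2p)$, since it is what matches the factor $p$ in the threshold; the prefactor $2^p$ is harmless because $p$ does not grow with $n$. As a fallback, the same conclusion follows from $\lVert \Zb\rVert_1\leq\sqrt{p}\lVert \Zb\rVert_2$ and the standard tail bound $\operatorname{P}(\chi^2_p\geq x)\leq C_p x^{p/2-1}e^{-x/2}$ for large $x$. With $x=2(1+\Delta/p)\log n$ this gives a bound of order $n^{-\Delta/p}(\log n)^{p/2-1}$ after multiplying by $n$, which also tends to zero.
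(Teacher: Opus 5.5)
Your proof is correct, and after the common first step (the union bound) your main argument takes a different route from the paper's. The paper passes from the $\ell_1$- to the $\ell_2$-norm via $\lVert \Zb\rVert_1\leq\sqrt{p}\,\lVert \Zb\rVert_2$. This turns the single-vector event into $\{\chi^2_p\geq \frac{2(p+\Delta)}{p}\log n\}$. The paper then invokes the asymptotic equivalence $\operatorname{P}(\chi^2_p>t)\sim t^{p/2-1}e^{-t/2}/(2^{p/2-1}\Gamma(p/2))$ from Abramowitz--Stegun. The result is a bound of order $n^{-\Delta/p}(\log n)^{p/2-1}$, and for $p\geq 3$ it needs the extra remark that powers of $\log n$ are $\smallO(n^{\Delta/p})$. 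That is exactly your fallback.

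Your primary route instead works with the $\ell_1$-norm directly, through the exact moment generating function $\E e^{\lambda|Z|}=2e^{\lambda^2/2}\Phi(\lambda)$ and a Chernoff bound. This gives a fully non-asymptotic, self-contained estimate, $2^p n^{-\Delta/p}$ for every $n$. It has no logarithmic factor, uses no tabulated tail asymptotics, and does not require turning an asymptotic ``$\sim$'' into an inequality.

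Both approaches achieve the same exponent $t^2/(2p)$. The Cauchy--Schwarz reduction costs the paper only the polynomial factor $(t^2/p)^{p/2-1}$. Your method costs only the constant $2^p$, which can also be seen as a union bound over the $2^p$ sign vectors $s$, using $\lVert \Zb\rVert_1=\max_s s^\top \Zb$ with $s^\top\Zb\sim N(0,p)$.

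Your estimate is also essentially the same sub-Gaussian bound the paper later derives for the term $I_1^b$ in the proof of the upper-bound theorem, via Gaussian concentration for the $\sqrt{p}$-Lipschitz $\ell_1$-norm. Your route would therefore let the lemma and that step rest on a single tool.
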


\bigskip 

\textit{Proof of Lemma \ref{lemma:max_norm_N0I}.}

The proof is an extension of the univariate case, which can be found in \cite[Lemma~1]{Cui-2014} or \cite[Exercise 2.5.9]{Vershynin-2018}. 
Denote $c=p+\Delta$. By using the Cauchy-Schwarz inequality, we can write
\begin{align}
	& \operatorname{P}\left(\max_{j=1,\ldots,n} \sum_{k=1}^p |z_{jk}| \geq \sqrt{2 c \log n}\right) \leq \operatorname{P}\left(\bigcup_{j=1}^n \left\{ \sum_{k=1}^p |z_{jk}| \geq \sqrt{2 c \log n} \right\} \right) \nonumber \\ 
	&\quad  \leq \sum_{j=1}^n \operatorname{P}\left(\sum_{k=1}^p |z_{jk}| \geq \sqrt{2 c \log n}\right) = n \operatorname{P}\left(\sum_{k=1}^p |z_{1k}| \geq \sqrt{2 c \log n}\right) \nonumber \\
	&\quad \leq  n \operatorname{P}\left(\sqrt{p} \left( \sum_{k=1}^p z_{1k}^2 \right)^{1/2} \geq \sqrt{2 c\log n}\right)= n \operatorname{P} \left(p \sum_{k=1}^p z_{1k}^2  \geq 2 c\log n\right) \nonumber \\
	& \quad = n \operatorname{P}\left( \chi^2_p  \geq \frac{2c}{p}\log n\right). \label{eq:propN0I_v1}
\end{align}
Now, let's use the following tail approximation for $\chi^2_p$ \cite[relation 26.4.12, p. 941]{Abra-1972}
\begin{equation*}
	P(\chi^2_p > t) \sim \frac{t^{p/2-1} e^{-t/2}}{2^{p/2-1} \Gamma(p/2)}, \quad t \to \infty.
\end{equation*}
Then, (\ref{eq:propN0I_v1}) turns into
\begin{align}
	\operatorname{P}\left(\max_{j=1,\ldots,n} \sum_{k=1}^p |z_{jk}| \geq \sqrt{2 c \log n}\right) &\sim n \frac{(\frac{2c}{p} \log n)^{p/2-1} e^{-\frac{c}{p}\log n}}{2^{p/2-1} \Gamma(p/2)} \nonumber \\
	& = \mathcal{O} \left( n^{1-c/p} \left( \frac{2c}{p} \log n \right)^{p/2-1} \right) \nonumber \\
	& = \mathcal{O}\left( n^{-\Delta/p}(1+\smallO(1)) \right) = \smallO(1), \label{eq:propN0I_v2}
\end{align}
since $1-\frac{c}{p} = 1-\frac{p+\Delta}{p} = -\frac{\Delta}{p}<0$. For $p \leq 2$, (\ref{eq:propN0I_v2}) holds automatically, for $p\geq 3$, we use that $\log^a(n) = \smallO(n^b)$ for any $a,b>0$. In our case, $a=p/2-1 >0 $ and $b=c/p-1 >0$.

\qedsymbol

\medskip
\renewcommand{\thetheorem}{S1}
\begin{lemma} \label{lem:Lipschitz_of_CholeskyInv}
	Let $\Sigma=LL^\top>0$ be a covariance matrix and $S=S(n)=\hat{L}\hat{L}^\top >0$ its estimator based on a random sample of size $n$ such that $\lVert \Sigma^{-1}\rVert_2 \lVert \Sigma - S \rVert_2 \leq \alpha < 1$. Then
	\begin{equation}
		\lVert \hat{L}^{-1} - L^{-1} \rVert_2 \leq c(\Sigma) \lVert S-\Sigma \rVert_2 ,
	\end{equation}
	where $c(\Sigma)>0$ is a constant depending only on $\Sigma$.
\end{lemma}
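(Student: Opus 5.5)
The plan is to reduce the statement to a perturbation bound for the Cholesky factor itself and then pass to inverses. We write $S=\Sigma+E$ with $E=S-\Sigma$. The hypothesis $\lVert \Sigma^{-1}\rVert_2 \lVert E\rVert_2\le\alpha<1$ guarantees that $S$ is positive definite and has a unique Cholesky factor $\hat L$ with positive diagonal. It also yields the uniform inverse bound
\[
\lVert S^{-1}\rVert_2\le \frac{\lVert\Sigma^{-1}\rVert_2}{1-\alpha}.
\]
This follows from the Neumann series for $(I+\Sigma^{-1}E)^{-1}$. Since $\lVert \hat L^{-1}\rVert_2^2=\lVert S^{-1}\rVert_2$ and $\lVert L^{-1}\rVert_2^2=\lVert\Sigma^{-1}\rVert_2$, both inverse factors are bounded by constants depending only on $\Sigma$ and $\alpha$.

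Next we use the identity
\[
\hat L^{-1}-L^{-1}=-\hat L^{-1}(\hat L-L)L^{-1},
\]
which gives
\[
\lVert \hat L^{-1}-L^{-1}\rVert_2\le \lVert \hat L^{-1}\rVert_2\,\lVert L^{-1}\rVert_2\,\lVert \hat L-L\rVert_2 .
\]
It therefore suffices to show $\lVert \hat L-L\rVert_2\le c'(\Sigma)\lVert E\rVert_2$. For this we invoke the perturbation bound for the Cholesky factorization of \cite{Sun-1991} (see also \cite{Stewart-1990}). Under exactly the condition $\lVert\Sigma^{-1}\rVert_2\lVert E\rVert_2<1$, it states
\[
\frac{\lVert \hat L-L\rVert_F}{\lVert L\rVert_2}\le \frac{1}{\sqrt2}\,\frac{\kappa_2(\Sigma)\,\lVert E\rVert_F/\lVert\Sigma\rVert_2}{1-\lVert\Sigma^{-1}\rVert_2\lVert E\rVert_2}.
\]
We bound $\lVert E\rVert_F\le\sqrt p\,\lVert E\rVert_2$ and the denominator from below by $1-\alpha$. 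Combining these with the previous paragraph gives the lemma with
\[
c(\Sigma)=\frac{\sqrt{p}\,\lVert\Sigma^{-1}\rVert_2^{2}\,\lVert L\rVert_2}{\sqrt2\,(1-\alpha)^{3/2}}.
\]
Since $\alpha$ is fixed, this depends only on $\Sigma$ (and $p$).

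The main obstacle is the Cholesky perturbation step. If one wants to avoid citing Sun's theorem, the direct route goes through the equation
\[
L\,\Delta L^\top+\Delta L\,L^\top+\Delta L\,\Delta L^\top=E,\qquad \Delta L=\hat L-L.
\]
Multiplying by $L^{-1}$ on the left and $L^{-\top}$ on the right, with $G=L^{-1}\Delta L$ lower triangular, gives
\[
G^\top+G+GG^\top=L^{-1}EL^{-\top}.
\]
Taking the "lower half" operator $\Phi$ (which halves the diagonal) and using $\lVert\Phi(M)\rVert_F\le\lVert M\rVert_F/\sqrt2$ for symmetric $M$, one obtains a quadratic inequality for $\lVert G\rVert_F$. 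The smaller root of that inequality is linear in $\lVert E\rVert_2$ precisely when $\lVert\Sigma^{-1}\rVert_2\lVert E\rVert_2\le\alpha<1$. I would first try to simply cite Sun's bound and fall back on this fixed-point argument only if the constants need to be checked.
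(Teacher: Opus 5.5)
Your proof is correct. It shares its key ingredient with the paper's proof, namely Sun's perturbation bound for the Cholesky factor, but it passes from $\hat L-L$ to $\hat L^{-1}-L^{-1}$ by a different and much shorter route.

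The paper writes $\hat L$ and $L$ in polar form and splits $\hat L^{-1}-L^{-1}$ into an inverse-square-root part $S^{-1/2}-\Sigma^{-1/2}$ and an orthogonal-factor part. It then combines Higham's perturbation bound for matrix square roots, the Horn--Johnson bound for $S^{-1}-\Sigma^{-1}$, and a Weyl-type bound on $\lVert \hat L\rVert_2$. Sun's theorem enters only through the orthogonal factor.

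Your identity $\hat L^{-1}-L^{-1}=-\hat L^{-1}(\hat L-L)L^{-1}$, together with $\lVert \hat L^{-1}\rVert_2^2=\lVert S^{-1}\rVert_2\le \lVert \Sigma^{-1}\rVert_2/(1-\alpha)$, replaces all of this with a single line. It gives a transparent explicit constant and avoids the polar-factor bookkeeping altogether. That bookkeeping needs some care: since $\hat L\hat L^\top=S$, the polar form is $\hat L=S^{1/2}U_S$, with the symmetric factor on the left.

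The only cost is the factor $\sqrt{p}$ from $\lVert E\rVert_F\le \sqrt{p}\,\lVert E\rVert_2$. It does not appear in the paper's constant because the paper quotes Sun's bound in the spectral norm. Since $p$ is fixed this is immaterial, and both constants depend on $\alpha$ as well as on $\Sigma$.

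One side claim should be corrected. In your fallback, take $M=L^{-1}EL^{-\top}$. The inequality $\lVert G\rVert_F\le (\lVert M\rVert_F+\lVert G\rVert_F^2)/\sqrt{2}$ has real roots only when $\lVert M\rVert_F\le 1/2$. So it does not cover the whole range $\lVert\Sigma^{-1}\rVert_2\lVert E\rVert_2\le\alpha<1$, contrary to what you state.

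A self-contained argument that does cover the full range:
\begin{itemize}
\item From $(I+G)(I+G)^\top=I+M$ you get $(I+G)^\top-(I+G)^{-1}=(I+G)^{-1}M$.
\item Compare the strictly upper triangular parts and the diagonals of both sides, using $\lvert d-1\rvert\le\lvert d-d^{-1}\rvert$ for $d>0$. This gives $\lVert G\rVert_F\le \lVert (I+G)^{-1}M\rVert_F$.
\item Since $\lVert (I+G)^{-1}\rVert_2^2=\lVert (I+M)^{-1}\rVert_2\le 1/(1-\alpha)$, you obtain $\lVert G\rVert_F\le \lVert M\rVert_F/\sqrt{1-\alpha}$.
\end{itemize}
Because your main line simply cites Sun's theorem, as the paper does, this does not affect the proof itself.
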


This Lemma states that the inverse of Cholesky decomposition is locally Lipschitz on a~set of positive definite matrices. Therefore, if $\lVert \Sigma - S \rVert_2 =\smallO(1)$, then the Lemma implies that $\lVert \hat{L}^{-1} - L^{-1} \rVert_2 = \smallO(1)$.

\medskip 
\textit{Proof of Lemma \ref{lem:Lipschitz_of_CholeskyInv}.} Consider a polar decomposition of the Cholesky factors $\hat{L}$ and $L$, i.e. $\hat{L} = U_{S} S^{1/2}$ and $L = U_\Sigma \Sigma^{1/2}$, where $U_{S}$ and $U_\Sigma$ are unitary matrices (for details, see, for example, \cite[Section 6.8.4]{Higham-2008}). Then, $\hat{L}^{-1} = S^{-1/2} U_{S}^\top$, $L^{-1} = \Sigma^{-1/2}U_\Sigma^\top$ and by using triangle inequality, submultiplicativity of the spectral norm and the fact that $U_{S}$ is unitary, we can write
\begin{align}
	\lVert \hat{L}^{-1} - L^{-1} \rVert_2 &= \lVert S^{-1/2} U_{S}^\top - \Sigma^{-1/2}U_\Sigma^\top \rVert_2 \nonumber \\
	& \leq \lVert (S^{-1/2} - \Sigma^{-1/2}) U_{S}^\top \rVert_2 + \lVert \Sigma^{-1/2} (U_{S}^\top - U_\Sigma^\top) \rVert_2 \nonumber \\
	& \leq \lVert S^{-1/2} - \Sigma^{-1/2} \rVert_2 \lVert U_{S}^\top \rVert_2 + \lVert \Sigma^{-1/2} \rVert_2 \lVert U_{S}^\top - U_\Sigma^\top \rVert_2 \nonumber \\
	&= \lVert {S}^{-1/2} - \Sigma^{-1/2} \rVert_2 + \lVert \Sigma^{-1/2} \rVert_2 \lVert U_{S}^\top - U_\Sigma^\top \rVert_2. \label{eq:ChL_v1}
\end{align}
Since $U_{S} = \hat{L}S^{1/2}$ and $U_\Sigma=L \Sigma^{-1/2}$, and by using the theorem assumptions, we can continue analyzing (\ref{eq:ChL_v1}) and write
\begin{align}
	\lVert \hat{L}^{-1} - L^{-1} \rVert_2 &\leq \lVert S^{-1/2} - \Sigma^{-1/2} \rVert_2 + \lVert \Sigma^{-1/2} \rVert_2 \lVert S^{-1/2} \hat{L}^\top - \Sigma^{-1/2} L^\top \rVert_2 \nonumber \\
	&= \lVert S^{-1/2} - \Sigma^{-1/2} \rVert_2 + \lVert \Sigma^{-1/2} \rVert_2 \lVert (S^{-1/2}-\Sigma^{-1/2})\hat{L}^\top + \Sigma^{-1/2}(\hat{L}^\top - L^{\top}) \rVert_2 \nonumber \\
	& \leq \left( 1 + \lVert \Sigma^{-1/2} \rVert_2 \lVert \hat{L}^\top \rVert_2 \right) \lVert S^{-1/2} - \Sigma^{-1/2} \rVert_2 +   \lVert \Sigma^{-1/2} \rVert_2^2 \lVert \hat{L}^\top - L^{\top} \rVert_2 . \label{eq:ChL_v2} 
\end{align}

Recall that for large enough $n$, $S$ satisfies the stability condition 
\begin{equation} \label{cond:stability}
	\lVert \Sigma^{-1} \rVert_2 \lVert \Sigma - S \rVert_2 < 1.
\end{equation}
For any real positive definite matrices $A$ and $B$, it holds that \cite[Theorem 6.2]{Higham-2008}
\begin{equation} \label{bound_for_squareRoot}
	\lVert A^{-1/2} - B^{-1/2} \rVert_2 \leq \frac{\lVert A^{-1} - B^{-1}\rVert_2}{\sqrt{\lambda_{min}(A^{-1})} + \sqrt{\lambda_{min}(B^{-1})}},
\end{equation}
where $\lambda_{min}(M)$ stands for the smallest eigenvalue of matrix $M$. Further, for any nonsingular matrices $A$ and $B$ satisfying $\lVert A^{-1}(B-A) \rVert_2 <1$, it holds \cite[p. 381]{Horn-2013}
\begin{equation} \label{bound_for_inverse}
	\lVert A^{-1} - B^{-1} \rVert_2 \leq \frac{\lVert B^{-1} \rVert_2^2 \lVert B - A \rVert_2}{1-\lVert B^{-1} (B - A )\rVert_2} \leq \frac{\lVert B^{-1} \rVert_2^2 \lVert B - A \rVert_2}{1-\lVert B^{-1}\rVert_2 \lVert B - A \rVert_2}.
\end{equation}
By combining (\ref{bound_for_squareRoot}) and (\ref{bound_for_inverse}) for $A = S$ and $B=\Sigma$ and using $\lambda_{min}(\Sigma^{-1}) = 1/\lVert \Sigma \rVert_2$, we obtain
\begin{align}
	\lVert S^{-1/2} - \Sigma^{-1/2} \rVert_2 &\leq \frac{\lVert S^{-1} - \Sigma^{-1}\rVert_2}{\sqrt{\lambda_{min}(S^{-1})} + \sqrt{\lambda_{min}(\Sigma^{-1})}} \leq \frac{\lVert S^{-1} - \Sigma^{-1}\rVert_2}{\sqrt{\lambda_{min}(\Sigma^{-1})}} \nonumber \\
	& \leq \frac{1}{\sqrt{\lambda_{min}(\Sigma^{-1})}}  \frac{\lVert \Sigma^{-1} \rVert_2^2 \lVert \Sigma - S \rVert_2}{1-\lVert \Sigma^{-1}\rVert_2 \lVert \Sigma - S \rVert_2} \leq  \lVert \Sigma \rVert_2^{1/2} \frac{\lVert \Sigma^{-1} \rVert_2^2 \lVert \Sigma - S \rVert_2}{1-\lVert \Sigma^{-1}\rVert_2 \lVert \Sigma - S \rVert_2} \label{eq:ChL_v3b}.
\end{align}

For a real symmetric matrix $A$, it holds that
\begin{equation} \label{real_sym_eig}
	\lVert A \rVert_2 = \max_i |\lambda_i (A)| = \max \{ |\lambda_{min}(A)|, |\lambda_{max}(A)| \} \geq \lambda_{max}(A).
\end{equation}
Using the definition of the spectral norm, 
Weyl's inequality \cite[Theorem 4.3.1]{Horn-2013}, and the relations (\ref{real_sym_eig}) and $\lVert L \rVert_2 = \lVert \Sigma \rVert_2^{1/2}$, we obtain
\begin{equation} \label{eq:bound_Lhat}
	\begin{split}
	\lVert \hat{L} \rVert_2 &= \lambda_{max}(S) \leq \lambda_{max}(\Sigma) + \lambda_{max}(S-\Sigma) \\
	& \leq \lVert L \rVert_2 + \lVert S-\Sigma \rVert_2 = \lVert \Sigma \rVert_2^{1/2} + \lVert S-\Sigma \rVert_2.
		\end{split}
\end{equation}
Furthermore, it follows from \cite[Theorem 1.4]{Sun-1991} that
\begin{align}
	\lVert \hat{L}^\top - L^{\top} \rVert_2 = \lVert \hat{L} - L \rVert_2 \leq \lVert L \rVert_2 \frac{\lVert \Sigma^{-1} \rVert_2}{\sqrt{2(1-\lVert \Sigma^{-1} \rVert_2 \lVert S-\Sigma \rVert_2)}} \lVert S-\Sigma \rVert_2. \label{eq:ChL_v4}
\end{align}
Substituting (\ref{eq:ChL_v3b}), (\ref{eq:ChL_v4}), and the assumptions of the Lemma into (\ref{eq:ChL_v2}) yields
\begin{align*}
	&\lVert \hat{L}^{-1} - L^{-1} \rVert_2 \leq \Bigg( \frac{\left( 1+\lVert \Sigma^{-1/2}\rVert_2 \left(\lVert \Sigma \rVert_2^{1/2}+\lVert S-\Sigma \rVert_2 \right) \right)\lVert \Sigma^{-1} \rVert_2 \lVert \Sigma \rVert_2^{1/2}}{ 1-\lVert \Sigma^{-1} \rVert_2 \lVert \Sigma - S \rVert_2} + \\
	& \hspace{3.5cm} + \frac{\lVert \Sigma^{-1/2} \rVert_2^2 }{\sqrt{2(1-\lVert \Sigma^{-1} \rVert_2 \lVert \Sigma - S \rVert_2))}} \Bigg) \lVert \Sigma^{-1} \rVert_2 \lVert S-\Sigma \rVert_2  \\
	&\quad \leq \Bigg( \frac{\left( 1+\lVert \Sigma^{-1/2}\rVert_2 \left(\lVert \Sigma \rVert_2^{1/2}+\alpha \lVert \Sigma^{-1} \rVert_2 \right) \right)\lVert \Sigma^{-1} \rVert_2 \lVert \Sigma \rVert_2^{1/2}}{ 1-\alpha}+ \frac{\lVert \Sigma^{-1/2} \rVert_2^2 }{\sqrt{2(1-\alpha))}} \Bigg) \times \\
	& \hspace{6cm} \times \lVert \Sigma^{-1} \rVert_2 \lVert S-\Sigma \rVert_2  \\
	&= c(\Sigma) \lVert S-\Sigma \rVert_2,
\end{align*}
which was to be proved.

\qedsymbol

\bigskip
\renewcommand{\thetheorem}{2.2}
\begin{theorem} \label{theorem:upper_bound}
	Let $\beta \in (0,1)$ and $p \in \mathbb{N}$ be fixed numbers. If the parameters $\ab=\ab(n)$ and $\Sigma=LL^\top >0$ in model~(\ref{model}) satisfy the condition
	\begin{equation} \label{cond:a_UB}
		\liminf_{n \to \infty} \frac{\lVert L^{-1}\ab \rVert_1}{\sqrt{p\log n}} > \sqrt{2}(1+\sqrt{1-\beta}),
	\end{equation}
	then the estimator (\ref{def:eta})--(\ref{cond:delta}) satisfies
	\begin{equation} \label{eq:Hamming_risk_T2}
		\liminf_{n \to \infty} \sup_{\etab \in H_{n,\beta}} \E_{\etab} |\etab - \hat{\etab}| = 0.
	\end{equation}
\end{theorem}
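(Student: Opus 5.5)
We will bound the Hamming risk by splitting it into false positives and false negatives:
\[
\E_{\etab}|\etab-\hat{\etab}| = \sum_{i:\eta_i=0}\Prob{}{\hat R(\Xb_i)>t_n} + \sum_{i:\eta_i=1}\Prob{}{\hat R(\Xb_i)\le t_n},
\]
with $t_n=\sqrt{2(p+\delta)\log n}$. The first step handles the estimated Cholesky factor. On the event $\Omega_n=\{\lVert S-\Sigma\rVert_2\le \epsilon_n\}$, with $\epsilon_n$ small enough that the stability condition of Lemma~\ref{lem:Lipschitz_of_CholeskyInv} holds, that lemma gives $\lVert\hat L^{-1}-L^{-1}\rVert_2\le c(\Sigma)\epsilon_n$. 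Hence
\[
|\hat R(\Xb)-R(\Xb)|\le \sqrt{p}\,c(\Sigma)\epsilon_n\lVert\Xb\rVert_2 .
\]
The complement $\Omega_n^c$ enters the risk multiplied by at most $n$, so we need $n\,\Prob{}{\Omega_n^c}\to0$. For this we use the decomposition (\ref{cond_S}) together with Markov's inequality at the eighth moment. By Rosenthal's inequality and (A1), $\E|\frac1n\sum(g_{jk}-\sigma_{jk})|^8=O(n^{-4})$. By (A3), $\E|R_{jk}|^8=O(n^{-4})$. By (A2), $b_{jk}=O(n^{-1})$. Therefore $\Prob{}{|s_{jk}-\sigma_{jk}|>\epsilon_n}=O(n^{-4}\epsilon_n^{-8})$. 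Choosing $\epsilon_n=n^{-\gamma}$ with $\gamma<3/8$ makes $n\Prob{}{\Omega_n^c}\to0$. This is exactly why eighth moments were assumed.

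The second step treats the false positives. Here $\Xb_i=L\Zb_i$, and the perturbation is at most $C n^{-\gamma}\lVert\Zb_i\rVert_2$. It therefore suffices to control $n\Prob{}{\lVert\Zb\rVert_1\ge t_n-Cn^{-\gamma}\lVert\Zb\rVert_2}$. On $\{\lVert\Zb\rVert_2\le \log n\}$ the shift is $o(1)$. The complementary event has probability $e^{-c\log^2 n}$, which is negligible even after multiplying by $n$. So we need $n\Prob{}{\lVert\Zb\rVert_1\ge t_n-o(1)}\to0$. This is a sharp form of Lemma~\ref{lemma:max_norm_N0I}, and here lies the main obstacle. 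The Cauchy--Schwarz/$\chi^2_p$ bound used in that lemma is adequate only in the regime $\Delta$ fixed. With $\delta\to0$ it gives
\[
n\cdot n^{-(p+\delta)/p}(\log n)^{p/2-1},
\]
which does not tend to zero. Instead we will bound the tail of $\lVert\Zb\rVert_1=\sup_{s\in\{\pm1\}^p}s^\top\Zb$ by a union over the $2^p$ sign vectors: $\Prob{}{s^\top\Zb\ge u}=\bar\Phi(u/\sqrt p)$. This gives
\[
n\,2^p\bar\Phi(t_n/\sqrt p)\lesssim n\cdot n^{-(p+\delta)/p}=n^{-\delta/p}=e^{-\delta\log n/p}\to0,
\]
precisely by condition (\ref{cond:delta}). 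The $o(1)$ shift costs only a factor $e^{O(t_n\cdot o(1))}$, which we absorb by taking $\gamma$ so that $t_n n^{-\gamma}\log n\to0$.

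The third step treats the false negatives. There are at most $c_1n^{1-\beta}$ anomalies. For each one,
\[
\hat R(\Xb_i)\ge \lVert L^{-1}\ab\rVert_1-\lVert\Zb_i\rVert_1-\text{perturbation},
\]
and the perturbation now includes the term $Cn^{-\gamma}\lVert\ab\rVert_2$. This term is $o(1)$ provided $\lVert\ab\rVert_2$ grows at most polynomially slower than $n^{\gamma}$. We may reduce to this case, since larger $\ab$ only helps, via a monotonicity/truncation argument. By (\ref{cond:a_UB}), $\lVert L^{-1}\ab\rVert_1\ge(1+\epsilon)\sqrt{2p\log n}\,(1+\sqrt{1-\beta})$ eventually. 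A miss therefore requires
\[
\lVert\Zb_i\rVert_1\ge \sqrt{2p\log n}\,\bigl(\sqrt{1-\beta}+\epsilon'\bigr).
\]
By the same sign-vector bound, this event has probability at most $2^p n^{-(\sqrt{1-\beta}+\epsilon')^2}$. Multiplying by $c_1n^{1-\beta}$ gives $o(1)$, since $(\sqrt{1-\beta}+\epsilon')^2>1-\beta$. All bounds are uniform over $\etab\in H_{n,\beta}$, which yields (\ref{eq:Hamming_risk_T2}).
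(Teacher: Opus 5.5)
\paragraph{Comparison with the paper's proof.}
Your argument is correct. In its key step it takes a different and sharper route than the paper's. The paper keeps a fixed $\epsilon>0$ throughout. It bounds $n\operatorname{P}(\lVert(\hat L^{-1}-L^{-1})\Xb_i\rVert_1>\epsilon)$ using Lemma~\ref{lem:Lipschitz_of_CholeskyInv}, Markov at the fourth power and Cauchy--Schwarz; this is where $\E\lVert\hat\Sigma-\Sigma\rVert_2^8=O(n^{-4})$ enters. It bounds $n\operatorname{P}(\lVert\Zb\rVert_1>t_n-\epsilon)$, and likewise the false-negative term, by Gaussian concentration of the $\sqrt p$-Lipschitz $\ell_1$-norm around its mean $p\sqrt{2/\pi}$.

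That route is short and generic, but centring at the mean plus the fixed $\epsilon$ shifts the threshold by a constant. This costs a factor $\exp\{C\sqrt{\log n}\}$. The false-positive bound it actually delivers is therefore of order $n^{-\delta/p}\exp\{C\sqrt{\log n}\}$. That tends to zero only if $\delta\sqrt{\log n}$ stays above a constant of order $p^{3/2}$. This is stronger than (\ref{cond:delta}) and fails for the choice $\delta=1/\sqrt{\log n}$ used in the simulations; the factor $(1+\smallO(1))$ multiplying $\log n$ in the paper's exponent hides this.

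Your representation $\lVert\Zb\rVert_1=\max_{s\in\{\pm1\}^p}s^\top\Zb$ gives the tail bound $2^p\bar{\Phi}(u/\sqrt p)$. It is sharp up to the factor $2^p$ and has no centring loss. Together with your vanishing perturbation $\epsilon_n=n^{-\gamma}$, it yields $O(n^{-\delta/p})$, so (\ref{cond:delta}) is exactly what is needed. For the false negatives the margin is of order $\sqrt{\log n}$, so either tool works there.

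Working on $\Omega_n$ also repairs two loose points in the paper:
\begin{itemize}
\item The paper applies Lemma~\ref{lem:Lipschitz_of_CholeskyInv} as if its stability hypothesis held surely. Consistency only gives it with probability tending to one, and the failure event must be controlled at rate $\smallO(1/n)$, which is precisely your $n\operatorname{P}(\Omega_n^c)\to0$.
\item In $I_2^b$ the paper treats the random quantity $\lVert\hat L^{-1}\ab\rVert_1$ as a constant inside a probability. On $\Omega_n$ you handle the dependence of $\hat L$ on $\Xb_i$ by deterministic inequalities instead.
\end{itemize}
A side benefit is that your route does not really need eighth moments. With $q>2$ moments in (A1)--(A3), Rosenthal and Markov give $n\operatorname{P}(\Omega_n^c)=O(n^{1-q/2+q\gamma})\to0$ for small $\gamma>0$. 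So the remark that this is ``exactly why eighth moments were assumed'' undersells your own argument.

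\paragraph{Two minor points.}
First, the $\chi^2_p$ bound behind Lemma~\ref{lemma:max_norm_N0I} fails only for $p\ge3$ when $\delta\log n$ grows slowly (roughly slower than $\log\log n$). For $p\le2$ it already gives $O(n^{-\delta/p})$. So your claim that it \emph{does not} tend to zero should read that it \emph{need not}.

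Second, the monotonicity/truncation reduction in your third step is unnecessary. Because $\hat L$ depends on $\Xb_i$, it is also not obviously valid as stated. Instead note that
\[
\lVert\ab\rVert_2\le\lVert L\rVert_2\lVert L^{-1}\ab\rVert_2\le\lVert L\rVert_2\lVert L^{-1}\ab\rVert_1 .
\]
Hence on $\Omega_n$ you have $\lVert\hat L^{-1}\ab\rVert_1\ge(1-Cn^{-\gamma})\lVert L^{-1}\ab\rVert_1$ for every $\ab$. This is a relative error that is absorbed into $\epsilon'$; in that step the perturbation only has to be $\smallO(\sqrt{\log n})$ anyway, not $\smallO(1)$.
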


\medskip 
\textit{Proof of Theorem \ref{theorem:upper_bound}.}
The maximum Hamming risk of the estimator $\hat{\etab}=(\hat{\eta}_1,\ldots,\hat{\eta}_p)$ satisfies
\begin{align}
	\sup_{\etab \in H_{n,\beta}} \E_\etab |\hat{\etab} - \etab| &= \sup_{\etab \in H_{n,\beta}} \E_\etab \sum_{i=1}^n |\hat{\eta}_i -\eta_i| = \sup_{\etab \in H_{n,\beta}} \left( \sum_{\substack{i=1 \\ \eta_i=0}}^n \E_\etab |\hat{\eta}_i -\eta_i| + \sum_{\substack{i=1 \\ \eta_i=1}}^n \E_\etab |\hat{\eta}_i -\eta_i| \right) \nonumber \\
	&= \sup_{\etab \in H_{n,\beta}} \Bigg( \sum_{\substack{i=1 \\ \eta_i=0}}^n \operatorname{P}_{\boldsymbol{0}} \left(\lVert \hat{L}^{-1}\Xb_i \rVert_1 > \sqrt{2(p+\delta)\log n}\right) + \nonumber \\
	& \hspace{3cm} + \sum_{\substack{i=1 \\ \eta_i=1}}^n \operatorname{P}_{\boldsymbol{a}}\left(\lVert \hat{L}^{-1}\Xb_i \rVert_1 \leq \sqrt{2(p+\delta)\log n}\right) \Bigg) \nonumber \\
	&=:  I_1 + I_2. \label{def:I1_I2}
\end{align}
First, let's have a look at the term $I_1$. Choose $\epsilon >0$ and use the standard triangle inequality for $\hat{L}^{-1} = (\hat{L}^{-1}-L^{-1}) + L^{-1}$, then
\begin{align}
	I_1 &= \sup_{\etab \in H_{n,\beta}} \sum_{\substack{i=1 \\ \eta_i=0}}^n \operatorname{P}_{\boldsymbol{0}}\left(\lVert \hat{L}^{-1}\Xb_i \rVert_1 > \sqrt{2(p+\delta)\log n}\right) \nonumber \\
	&\leq \sup_{\etab \in H_{n,\beta}} \sum_{\substack{i=1 \\ \eta_i=0}}^n \operatorname{P}_{\boldsymbol{0}}\left(\lVert (\hat{L}^{-1}-L^{-1})\Xb_i \rVert_1  + \lVert L^{-1}\Xb_i \rVert_1 > \sqrt{2(p+\delta)\log n}\right) \nonumber \\
	& \leq \sup_{\etab \in H_{n,\beta}} \sum_{\substack{i=1 \\ \eta_i=0}}^n \operatorname{P}_{\boldsymbol{0}}\left(\lVert (\hat{L}^{-1} - L^{-1})\Xb_i \rVert_1 >\epsilon\right) \nonumber \\
	& \hspace{3cm} + \operatorname{P}_{\boldsymbol{0}}\left(\lVert L^{-1}\Xb_i \rVert_1 > \sqrt{2(p+\delta)\log n} - \epsilon\right)  \nonumber \\
	& =: I_1^a + I_1^b. \label{I1_v1}
\end{align}
Then, look at the first term on the right-hand side of (\ref{I1_v1}).
Recall that 
\begin{equation} \label{eq:bound_for_l1_norm}
	\lVert \boldsymbol{v} \rVert_1 \leq \sqrt{p} \lVert \boldsymbol{v} \rVert_2, \quad \boldsymbol{v} \in \mathbb{R}^p,
\end{equation}
which can be seen, for example, from the Cauchy-Schwarz inequality. By using the reverse triangle inequality, relation (\ref{eq:bound_for_l1_norm}), and the definition of the spectral norm, we obtain
\begin{align} \label{eq:LhatX_LX_v1}
	\lVert (\hat{L}^{-1} - L^{-1}) \Xb_i \rVert_1 \leq \sqrt{p} \lVert (\hat{L}^{-1} - L^{-1}) \Xb_i \rVert_2 \leq \sqrt{p} \lVert \hat{L}^{-1} - L^{-1} \rVert_{2} \lVert \Xb_i \rVert_2 .
\end{align}
By using Lemma \ref{lem:Lipschitz_of_CholeskyInv} (its assumption holds for large enough $n$ from the consistency of $\hat{\Sigma}$), relation (\ref{eq:LhatX_LX_v1}) gives 
\begin{align} \label{eq:LhatX_LX_v2}
	\lVert (\hat{L}^{-1} - L^{-1}) \Xb_i \rVert_1 \leq \sqrt{p} C_{\Sigma} \lVert \hat{\Sigma} - \Sigma \rVert_{2} \lVert \Xb_i \rVert_2 ,
\end{align}
where $C_{\Sigma}\geq 0$ is a constant depending only on $\Sigma$.
By using (\ref{eq:LhatX_LX_v2}) and the fact that $\Xb_1,\ldots,\Xb_n$ are iid, and applying Markov's inequality and the Cauchy-Schwarz inequality, we obtain 
\begin{align} \label{eq:I1a_v1}
	I_1^a &= \sup_{\etab \in H_{n,\beta}} \sum_{\substack{i=1 \\ \eta_i=0}}^n \Prob{\boldsymbol{0}}{ \lVert (\hat{L}^{-1} - L^{-1}) \Xb_i \rVert_1 >\epsilon} \leq n \Prob{\boldsymbol{0}}{ \lVert (\hat{L}^{-1} - L^{-1}) \Xb_i \rVert_1>\epsilon} \nonumber \\
	& \leq n\Prob{\boldsymbol{0}}{  \lVert \hat{\Sigma} - \Sigma \rVert_{2} \lVert \Xb_1 \rVert_2  > \frac{\epsilon}{\sqrt{p} C_{\Sigma}}} = n\Prob{\boldsymbol{0}}{  \lVert \hat{\Sigma} - \Sigma \rVert_{2}^4 \lVert \Xb_1 \rVert_2^4  > \frac{\epsilon^4}{p^2 C_{\Sigma}^4}} \nonumber \\
	& \leq \frac{n p^2C_{\Sigma}^4 }{\epsilon^4} \E{} \left( \lVert \hat{\Sigma} - \Sigma \rVert_{2}^4 \lVert \Xb_1 \rVert_2^4 \right) \leq \frac{n p^2C_{\Sigma}^4 }{\epsilon^4} \left( \E{} \lVert \hat{\Sigma} - \Sigma \rVert_{2}^8 \right)^{1/2} \left( \E{} \lVert \Xb_1 \rVert_2^8 \right)^{1/2}.
\end{align}
Since $\E{} \lVert \Xb_1 \rVert_2^8$ is finite and independent of $n$, it remains to show that $n\left( \E{} \lVert \hat{\Sigma} - \Sigma \rVert_{2}^8 \right)^{1/2} = \smallO(1)$.
Since the~spectral norm is bounded from above by the Frobenius norm, we have
\begin{align} \label{eq:ES_Sigma8}
	\E{} \lVert \hat{\Sigma} - \Sigma \rVert_{2}^8 &\leq \E{} \lVert \hat{\Sigma} - \Sigma \rVert_{F}^8 = \E{} \left( \sum_{i,j=1}^p (\hat{\sigma}_{ij} - \sigma_{ij} )^2 \right)^4 \nonumber \\
	& \leq p^4 \max_{j,k} \E{} (\hat{\sigma}_{jk}-\sigma_{jk})^8 =: p^4 \E{} (\hat{\sigma}_{j'k'}-\sigma_{j'k'})^8,
\end{align}
where $\hat{\sigma}_{jk}$, resp. $\sigma_{jk}$, stands for the $j,k$-th element of $\hat{\Sigma}$, resp. $\Sigma$, and where we denoted $\{ j',k' \} = \arg \max_{\{j,k\}} \E{} (\hat{\sigma}_{jk}-\sigma_{jk})^8$. The Rosenthal inequality \cite[Theorem~2.9]{Petrov-1995} states that for independent random variables $Y_1,\ldots,Y_n$ such that $\E{}Y_i = 0$ and $r \geq 2$, it holds that 
\begin{equation} \label{Rosenthal}
	\E{} \left|\sum_{i=1}^n Y_i \right|^r \leq C_r \left( \sum_{i=1}^n \E{} |Y_i|^r + \left(\sum_{i=1}^n \E{ Y_i^2}  \right)^{r/2} \right),
\end{equation}
where $C_r$ is a positive constant depending only on $r$.
Recall that $\sigma_{j'k'} = \E{} (g_{j'k'}(\Xb_1)).$
For $r=8$ and considering $Y_i/n$ instead of $Y_i$, inequality (\ref{Rosenthal}) takes the form
\begin{equation} \label{eq:Es_mu8}
	\E \left|\frac{1}{n}\sum_{i=1}^n Y_i \right|^8 = n^{-8} \E \left( \sum_{i=1}^n Y_i \right)^8 \leq C_8 \left( n^{-7} \E |Y_1|^8 + n^{-4}  (\E Y_1^2)^4 \right) .
\end{equation}
Now, consider $Y_i = g_{j'k'}(\Xb_i) - \sigma_{j'k'}$, $i=1,\ldots,n$, then both $\E |Y_1|^8$ and $\E Y_1^2$ are finite, and so $\E \left|\frac{1}{n}\sum_{i=1}^n Y_i \right|^8 = O(n^{-4})$. Further, by using the inequality $|a+b+c|^8 \leq 3^7 (|a|^8 + |b|^8 + |c|^8), \, a,b,c \in \mathbb{R}$, and combining (\ref{eq:Es_mu8}) with the assumptions that $b_{j'k'}(n) = O(n^{-1})$ and $\E |R_{j'k'}|^8 = O(n^{-4})$, we get
\begin{align}
	\E |\hat{\sigma}_{j'k'} - \sigma_{j'k'}|^8 &= \E \left|\frac{1}{n}\sum_{i=1}^n Y_i + b_{j'k'}(n) + R_{j'k'} \right|^8 \nonumber \\
	& \leq 3^7 \left( \E \left|\frac{1}{n}\sum_{i=1}^n Y_i \right|^8 +\E |b_{j'k'}(n)|^8 + \E |R_{j'k'}|^8 \right)  \nonumber \\
	&= O(n^{-4}) + O(n^{-8}) + O(n^{-4}) =  O(n^{-4}). \label{eq:Es_mu8_On8}
\end{align}
Substituting (\ref{eq:Es_mu8_On8}) into (\ref{eq:ES_Sigma8}) yields $\E \lVert \hat{\Sigma} - \Sigma \rVert_{2}^8 = O(n^{-4})$, and substituting this result into (\ref{eq:I1a_v1}) gives
\begin{align} \label{I1a_is_o1}
	I_1^a = O(n (n^{-4})^{1/2}) = O(n^{-1}) = \smallO(1).
\end{align}

Now, let's have a look at the second term on the right-hand side of (\ref{I1_v1}).
The Gaussian concentration inequality \cite[Theorem 5.6]{Massart-2013}
states that, for an $L$-Lipschitz function $f: \mathbb{R}^p \to \mathbb{R}$,
\begin{equation} \label{eq:tail_ineq_for_Lipschitz}
	\operatorname{P}\left(f(\Zb) - \E{} (f(\Zb)) > t\right) \leq e^{-\frac{t^2}{2L^2}}, \quad \Zb \sim N_p(\boldsymbol{0},I_p), \quad  t >0.
\end{equation}
Furthermore, $L^{-1}\Xb_i \sim N_p(0,I_p)$, and therefore $\E{} \lVert L^{-1}\Xb_i \rVert_1 = p \sqrt{\frac{2}{\pi}}$ \cite[Exercise 2.5.1]{Vershynin-2018}. 
Since the $\ell_1$-norm is $\sqrt{p}$-Lipschitz (see relation (\ref{eq:bound_for_l1_norm})), applying (\ref{eq:tail_ineq_for_Lipschitz}) and using the fact that the number of $\eta_i$ equal to 0 is at most $n$, we obtain from (\ref{I1_v1}) that
\begin{align}
	I_1^b &\leq n \operatorname{P}\left(\lVert L^{-1}\Xb_i \rVert_1 - p \sqrt{\frac{2}{\pi}} > \sqrt{p(2+\delta)\log n} - \epsilon - p \sqrt{\frac{2}{\pi}}\right) \nonumber \\
	& \leq n \exp \left( -\frac{1}{2p} \left( \sqrt{p(2+\delta)\log n} - \epsilon - p \sqrt{\frac{2}{\pi}}\right)^2 \right) \nonumber \\
	& = n \exp \left( -\frac{1}{2p} p(2+\delta)\log n (1+\smallO(1)) \right) \sim n n^{-(2+\delta)/2} = n^{-\delta/2} = \smallO(1). \label{eq:I1b}
\end{align}
Substituting (\ref{I1a_is_o1}) and (\ref{eq:I1b}) into (\ref{I1_v1}) yields
\begin{equation} \label{eq:I1_is_o1}
	I_1 \leq I_1^a + I_1^b = \smallO(1), \quad n \to \infty.
\end{equation}

Similar arguments will be used for the second term on the right-hand side of (\ref{def:I1_I2}). By applying the standard triangle inequality to $\xb=(\xb+\yb)+(-\yb), \, \xb,\yb, \in \mathbb{R}^p,$ we get
\begin{equation} \label{ineq:reverse_triangle}
	\lVert \xb+\yb\rVert_1 \geq \lVert \xb \rVert_1 - \lVert \yb \rVert_1.
\end{equation}
Combining (\ref{ineq:reverse_triangle}) and the standard triangle inequality gives, for $\epsilon>0$
\begin{align}
	I_2 &= \sup_{\etab \in H_{n,\beta}} \sum_{\substack{i=1 \\ \eta_i=1}}^n \Prob{\ab}{ \lVert \hat{L}^{-1} \ab + \hat{L}^{-1} (\Xb_i - \ab) \rVert_1  \leq \sqrt{p(2+\delta)\log n}} \nonumber \\
	&\leq \sup_{\etab \in H_{n,\beta}} \sum_{\substack{i=1 \\ \eta_i=1}}^n \Prob{\ab}{ \lVert \hat{L}^{-1} \ab \rVert_1 - \lVert \hat{L}^{-1} (\Xb_i - \ab) \rVert_1  \leq \sqrt{p(2+\delta)\log n}}  \nonumber \\
	&=\sup_{\etab \in H_{n,\beta}} \sum_{\substack{i=1 \\ \eta_i=1}}^n \Prob{\ab}{ \lVert \hat{L}^{-1} (\Xb_i - \ab) \rVert_1  \geq -\sqrt{p(2+\delta)\log n} + \lVert \hat{L}^{-1} \ab \rVert_1 }  \nonumber \\
	&= \sup_{\etab \in H_{n,\beta}} \sum_{\substack{i=1 \\ \eta_i=1}}^n \Prob{\ab}{ \lVert ((\hat{L}^{-1}-L^{-1}) + L^{-1}) (\Xb_i - \ab) \rVert_1  \geq -\sqrt{p(2+\delta)\log n} + \lVert \hat{L}^{-1} \ab \rVert_1 } \nonumber \\
	& \leq \sup_{\etab \in H_{n,\beta}} \sum_{\substack{i=1 \\ \eta_i=1}}^n \operatorname{P}_{\ab} \Big( \lVert (\hat{L}^{-1}-L^{-1})(\Xb_i - \ab) \rVert_1 + \lVert L^{-1} (\Xb_i - \ab) \rVert_1  \geq \nonumber \\
	& \hspace{8cm} \geq -\sqrt{p(2+\delta)\log n} + \lVert \hat{L}^{-1} \ab \rVert_1 \Big)  \nonumber \\
	& \leq \sup_{\etab \in H_{n,\beta}} \sum_{\substack{i=1 \\ \eta_i=1}}^n \Prob{\ab}{ \lVert (\hat{L}^{-1}-L^{-1})(\Xb_i - \ab) \rVert_1 >\epsilon } + \nonumber \\
	& \hspace{4cm} + \Prob{\ab}{ \lVert L^{-1} (\Xb_i - \ab) \rVert_1  \geq -\sqrt{p(2+\delta)\log n} + \lVert \hat{L}^{-1} \ab \rVert_1 -\epsilon} \nonumber \\
	&=: I_2^a + I_2^b. \label{I2_v1} 
\end{align}
By applying the same arguments as for $I_1^a$ with $\boldsymbol{Y}_i = \Xb_i - \ab$ in place of $\Xb_i$ and using the bound $\sum_{i=1}^n \eta_i \leq n$, we get a result analogous to (\ref{I1a_is_o1}),
\begin{equation} \label{I2a_is_o1}
	I_2^a = \smallO(1).
\end{equation}

Note that for all sufficiently large $n$, we can write $\sqrt{(2+\delta)\log n}$ in the form $\sqrt{2\log n} (1+\delta_1)$ with some $\delta_1=\smallO(1)$. Furthermore, assumption (\ref{cond:a_UB}) implies that for some positive $\delta_2 = \smallO(1)$ and all sufficiently large $n$,
$$ \lVert L^{-1}\ab \rVert_1 > \sqrt{2}(1+\sqrt{1-\beta}) \sqrt{p\log n} (1+\delta_2),$$
where we may take $\delta_2$ such that $\delta_2 > \delta_1$ and $n^{-c(\delta_2-\delta_1)} = \smallO(1)$ for any $c>0$. Now, by applying the reverse triangle inequality (\ref{ineq:reverse_triangle}), we obtain 
\begin{align} \label{eq:lower_bound_Lhatinva}
	\begin{split}
	\lVert \hat{L}^{-1} \ab \rVert_1 &\geq \lVert L^{-1} \ab \rVert_1 - \lVert (\hat{L}^{-1} - L^{-1}) \ab \rVert_1 \\
	&> \sqrt{2p \log n} (1+\sqrt{1-\beta})(1+\delta_2) + \smallO(1).
		\end{split}
\end{align}
For $\lVert \hat{L}^{-1} \ab \rVert_1$ satisfying (\ref{eq:lower_bound_Lhatinva}) and sufficiently small $\epsilon$, it holds that
\begin{align*}
	-\sqrt{p(2+\delta)\log n} + \lVert \hat{L}^{-1} \ab \rVert_1 -\epsilon &= -\sqrt{2p\log n}(1+\delta_1) + \lVert \hat{L}^{-1} \ab \rVert_1 -\epsilon \\ 
	& > -\sqrt{2p\log n}(1+\delta_2) + \lVert \hat{L}^{-1} \ab \rVert_1 -\epsilon  >0,
\end{align*}
so we can apply (\ref{eq:tail_ineq_for_Lipschitz}) to $I_2^b$ with $\Zb_i = L^{-1}\boldsymbol{Y}_i$. 
Using the bound $\sum_{i=1}^n \eta_i \leq c_1n^{1-\beta}$ and arguments analogous to those used in the derivation of (\ref{eq:I1b}), we obtain 
\begin{align} \label{eq:I2b_v1}
	I_2^b \leq c_1 n^{1-\beta} \exp \left\{ -\frac{1}{2p} \left( - \sqrt{p(2+\delta)\log n }(1+\smallO(1))  + \lVert \hat{L}^{-1} \ab \rVert_1 \right)^2 \right\},
\end{align}
and hence, (\ref{eq:I2b_v1}) turns into
\begin{align} \label{eq:I2b_is_o1}
	I_2^b &\sim  \mathcal{O} \left( n^{1-\beta} \exp \left\{ -\frac{1}{2p}\left( -\sqrt{2p\log n}(1+\delta_1) + \sqrt{2p \log n} (1+\sqrt{1-\beta})(1+\delta_2)  \right)^2 \right\} \right) \nonumber \\
	&= \mathcal{O} \left( n^{1-\beta} \exp \left\{ -\log n \left( (1+\sqrt{1-\beta})(1+\delta_2) -(1+\delta_1) \right)^2 \right\}\right) \nonumber \\
	&= \mathcal{O} \left( n^{1-\beta} n^{-\left( (1+\sqrt{1-\beta})(1+\delta_2) -(1+\delta_1) \right)^2} \right) = \mathcal{O} \left( n^{1-\beta} n^{-\left( \sqrt{1-\beta} + (\delta_2-\delta_1) \right)^2} \right) = \smallO(1) . 
\end{align}
Substituting (\ref{I2a_is_o1}) and (\ref{eq:I2b_is_o1}) into (\ref{I2_v1}) yields
\begin{equation} \label{eq:I2_is_o1}
	I_2 = I_2^a + I_2^b = \smallO(1).
\end{equation}
Finally, by substituting (\ref{eq:I1_is_o1}) and (\ref{eq:I2_is_o1}) into (\ref{def:I1_I2}), we obtain 
$$\sup_{\etab \in H_{n,\beta}} \E_{\etab} |\hat{\etab} - \etab| = I_1 + I_2 = \smallO(1) + \smallO(1) = \smallO(1),$$
which completes the proof.

\qedsymbol

\bigskip
\renewcommand{\thetheorem}{2.3}
\begin{theorem} \label{theorem:lower_bound}
	Let $\beta \in (0,1)$ and $p \leq \lfloor \frac{8}{\beta} (1+\sqrt{1-\beta})-4 \rfloor$ be fixed numbers. If the parameters $\ab=\ab(n)$ and $\Sigma=LL^\top$ in model~(\ref{model}) satisfy the condition
	\begin{equation} \label{cond:a_LB}
		\liminf_{n \to \infty} \frac{\lVert L^{-1}\ab \rVert_1}{\sqrt{p\log n}} < \sqrt{2}(1+\sqrt{1-\beta}),
	\end{equation}
	then
	\begin{equation}
		\liminf_{n \to \infty} \inf_{\tilde{\etab}} \sup_{\etab \in H^\pm_{n,\beta}} \E_{\etab} |\etab - \tilde{\etab}| > 0,
	\end{equation}
	where the infimum is taken over all estimators $\tilde{\etab}$ of $\etab$ based on $\Xb_1,\ldots,\Xb_n$.
\end{theorem}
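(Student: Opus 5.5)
The plan follows the lower-bound argument of \cite{Butucea-2018} (Section~5 there, and \cite{Cui-2014}). First, reduce the minimax risk over $H^\pm_{n,\beta}$ to a Bayes risk under a product prior. Let $\eta_1,\ldots,\eta_n$ be i.i.d.\ Bernoulli$(\pi_n)$ with $\pi_n = n^{-\beta}$, and call the resulting law $\mu$. For any $\tilde{\etab}$,
\[
\sup_{\etab\in H^\pm_{n,\beta}} \E_\etab|\etab-\tilde{\etab}| \ \ge\ \int \E_\etab|\etab-\tilde{\etab}|\,d\mu(\etab) \;-\; n\,\mu\bigl(\etab\notin H^\pm_{n,\beta}\bigr).
\]
Because $\sum_i\eta_i\sim\mathrm{Bin}(n,n^{-\beta})$ has mean $n^{1-\beta}$ and $c_0<1<c_1$, a Chernoff bound gives $\mu(\etab\notin H^\pm_{n,\beta})\le 2\exp(-c\,n^{1-\beta})$. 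Hence the correction term is $o(1)$. This is the only place the two-sided sparsity class is used.

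Second, under the product prior the observations $(\eta_i,\Xb_i)$ are i.i.d. The Bayes Hamming risk therefore splits into $n$ identical one-observation problems, and we may restrict to $\tilde\eta_i$ depending only on $\Xb_i$. Each of these is a simple-versus-simple test of $N_p(\boldsymbol 0,\Sigma)$, with prior weight $1-\pi_n$, against $N_p(\ab,\Sigma)$, with prior weight $\pi_n$. By the Neyman--Pearson lemma the Bayes rule thresholds the likelihood ratio. That ratio depends on $\Xb_i$ only through $\ab^\top\Sigma^{-1}\Xb_i$. After standardising, the problem becomes univariate: $N(0,1)$ versus $N(A,1)$ with $A=\lVert L^{-1}\ab\rVert_2$, and optimal cut-off $t^*=A/2+\log((1-\pi_n)/\pi_n)/A\approx A/2+\beta\log n/A$. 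This yields the exact expression
\[
\inf_{\tilde\etab}\int \E_\etab|\etab-\tilde\etab|\,d\mu \;=\; n\Bigl[(1-\pi_n)\bar\Phi(t^*)+\pi_n\Phi(t^*-A)\Bigr]\ \ge\ n^{1-\beta}\,\Phi\bigl(\beta\log n/A - A/2\bigr).
\]
The remaining task is to show that the right-hand side stays bounded away from zero along a subsequence realising the $\liminf$ in (\ref{cond:a_LB}).

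Third, and this is the step I expect to be the main obstacle, translate the hypothesis on the $\ell_1$-norm into a bound on $A$. The elementary relations $\lVert v\rVert_1/\sqrt p\le\lVert v\rVert_2\le\lVert v\rVert_1$ only give $A<\sqrt{2p\log n}\,(1+\sqrt{1-\beta})$ along that subsequence. This is too large by up to a factor $\sqrt p$ compared with the univariate critical value $\sqrt{2\log n}(1+\sqrt{1-\beta})$ at which $n^{1-\beta}\Phi(\beta\log n/A-A/2)$ stops tending to infinity. So the Gaussian tail cannot simply be plugged in. I would write $A=\gamma\sqrt{\log n}$ and compute the exponent of $n$ in $n^{1-\beta}\Phi(\beta\log n/A-A/2)$. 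This uses the Mills-ratio asymptotics in the regime $\beta\log n/A-A/2\to-\infty$ and the trivial bound $\Phi\ge 1/2$ otherwise. I would then identify the range of $\gamma$ over which the exponent $1-\beta-(\gamma/2-\beta/\gamma)^2$ is nonnegative. The admissible $\gamma$ extend up to $\sqrt2(1+\sqrt{1-\beta})\sqrt p$. Requiring $(\gamma/2-\beta/\gamma)^2\le 1-\beta$ at that endpoint produces an explicit constraint linking $p$ and $\beta$. I expect, but have not verified, that this constraint reduces to the hypothesis $p\le\lfloor\tfrac{8}{\beta}(1+\sqrt{1-\beta})-4\rfloor$. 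If the direct comparison fails, my fallback is to use the freedom in choosing $\ab$ within the lower-bound construction. The sup over $H^\pm_{n,\beta}$ is for fixed $\ab$, but the worst case over directions with a given $\ell_1$-norm is attained when $L^{-1}\ab$ has equal entries, so that $\lVert\cdot\rVert_2=\lVert\cdot\rVert_1/\sqrt p$, and this makes the univariate reduction tight. Finally, pass to the $\liminf$ in $n$ to conclude that the minimax risk is bounded below by a positive constant.
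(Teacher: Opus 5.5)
Your first two steps are sound and essentially match the paper's: a Bernoulli$(n^{-\beta})$ product prior, negligible prior mass outside $H^\pm_{n,\beta}$, and a reduction to $n$ one-observation tests whose Bayes risk depends on $\ab$ only through $A=\lVert L^{-1}\ab\rVert_2$.

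The gap is your third step, and it cannot be closed. With the correct Gaussian tail (your exponent is missing a factor $\tfrac12$), set $A=\gamma\sqrt{\log n}$. Up to logarithmic factors, the false-negative part $n^{1-\beta}\Phi(\beta\log n/A-A/2)$ is of order $n^{1-\beta-\frac12(\gamma/2-\beta/\gamma)^2}$, and the false-positive part $n\bar\Phi(t^*)$ is of order $n^{1-\frac12(\gamma/2+\beta/\gamma)^2}$. Both exponents are negative as soon as $\gamma>\sqrt2(1+\sqrt{1-\beta})$, whatever $p$ is. So the Bayes risk you computed tends to zero once $A$ exceeds the univariate critical value. Your endpoint requirement at $\gamma=\sqrt{2p}\,(1+\sqrt{1-\beta})$ can hold only for $p=1$, so it never becomes $p\le\lfloor\frac8\beta(1+\sqrt{1-\beta})-4\rfloor$.

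The fallback is not available either: $\ab=\ab(n)$ and $\Sigma$ are given in the hypothesis, so you cannot choose $L^{-1}\ab$ to be balanced. The obstruction is genuine, not an artefact of the method. Take $p=2$ (always admissible, since the bound on $p$ is at least $4$), $\Sigma=I$, and $\ab=(c\sqrt{\log n},0)^\top$ with $\sqrt2(1+\sqrt{1-\beta})<c<2(1+\sqrt{1-\beta})$. The hypothesis holds because $\lVert\ab\rVert_1/\sqrt{2\log n}=c/\sqrt2$. Yet the rule $\tilde\eta_i=\mathds{1}\bigl(X_{i1}>\sqrt{2(1+\epsilon)\log n}\bigr)$ with small $\epsilon>0$ has maximal risk at most $n^{-\epsilon}+c_1n^{1-\beta-(c-\sqrt{2(1+\epsilon)})^2/2}\to0$. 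So the statement as written is false for every $p\ge2$.

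The paper's proof does not get around this. It writes $\lVert L^{-1}\ab\rVert_1=\gamma\sqrt{p\log n}$ and splits into two regimes.

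For $\gamma\le\sqrt{\beta/2}$ it uses the false-negative term $J_1$, with $A\le\gamma\sqrt{p\log n}\le\sqrt{\beta p/2}\,\sqrt{\log n}$. That branch is sound. The hypothesis on $p$ is exactly the condition $\sqrt{\beta p/2}\le\sqrt2(1+\sqrt{1-\beta})$, so this is where the constraint you hoped to recover actually comes from.

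For $\sqrt{\beta/2}<\gamma<\sqrt2(1+\sqrt{1-\beta})$ it uses the false-positive term $J_2$. In bounding $\Phi\bigl(-(\log(n^\beta-1)+\tfrac12A^2)/A\bigr)$ from below, it replaces the term $A/2$, which can be as large as $\tfrac12\gamma\sqrt{p\log n}$, by the constant $\sqrt p/2$. That step is invalid; in the example above it would give $J_2\to\infty$ while the Bayes risk tends to $0$. It is precisely where the $\ell_1$ versus $\ell_2$ discrepancy you spotted disappears.

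Your first two steps do prove a corrected statement, for every $p$ and with no restriction linking $p$ and $\beta$: replace $\lVert L^{-1}\ab\rVert_1/\sqrt{p\log n}$ by $\lVert L^{-1}\ab\rVert_2/\sqrt{\log n}$, or by $\lVert L^{-1}\ab\rVert_1/\sqrt{\log n}$ without the $\sqrt p$. Your balanced-direction idea recovers the stated normalisation only for shifts with $\lvert(L^{-1}\ab)_1\rvert=\cdots=\lvert(L^{-1}\ab)_p\rvert$.

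Finally, a subsequence realising the $\liminf$ only shows that the $\limsup$ of the minimax risk is bounded away from zero. For a $\liminf$ conclusion you need the bound on the ratio for all large $n$, i.e.\ a $\limsup$ hypothesis.
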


\medskip 
\textit{Proof of Theorem \ref{theorem:lower_bound}.}
The beginning of the proof proceeds along the same lines as that of Theorem~3 in \cite{Cui-2014}, but we recall the argument here to keep the proof self-contained. Denote the maximum risk of $\tilde{\etab}$ by $\mathcal{R}(\tilde{\etab},\etab)$, i.e. $\mathcal{R}(\tilde{\etab},\etab) = \sup_{\etab \in H_{n,\beta}^\pm} \E_{\etab} |\tilde{\etab} - \etab|$.  In order to prove the lower bound on $\inf_{\tilde{\etab}} \mathcal{R}(\tilde{\etab},\etab)$, first, bound $\mathcal{R}(\tilde{\etab},\etab)$ by the Bayes risk with a properly chosen prior distribution of $\etab$. Denote the prior density function of $\etab$ by $\pi(\etab)$ and its domain by $\mathcal{E}^n = \left\{ \etab \in \{0,1\}^n \right\}$. Then, the maximum risk of $\tilde{\etab}$ is always greater than or equal to the Bayes risk
\begin{equation} \label{eq:bound_by_Bayes_risk}
	\max_{\etab \in \mathcal{E}^n} \mathcal{R}(\tilde{\etab},\etab) \geq \int_{\mathcal{E}^n} \mathcal{R}(\tilde{\etab},\etab) \pi(\etab) d \etab. 
\end{equation}
Under model (\ref{model}), the components $\eta_1,\ldots,\eta_n$ of $\etab$ can be considered iid, and the most natural choice of their prior distribution is $Bernoulli(r)$ with $r=n^{1-\beta}/n = n^{-\beta}$. Then, the prior distribution of $\eta_j$ can be written as $\pi_j = r\delta_1 + (1-r)\delta_0$, $j=1,\ldots,n$, where $\delta_x$ is the Dirac measure, which has the total mass of 1 at the point~$x$. Since $\eta_1,\ldots,\eta_n$ are considered independent, the prior distribution of the vector $\etab$ has the product form
\begin{equation}
	\pi(\etab) = \prod_{j=1}^n \pi_j (\eta_j), \quad \quad \pi_j = r\delta_1 + (1-r)\delta_0.
\end{equation}
Denote $B_{n,\beta} := \left\{ \etab \in H_{n,\beta}^\pm \right\} \subseteq \mathcal{E}^n$ and define the probability measure $\hat{\pi}$ on $B_{n,\beta}$ by the relation 
\begin{equation*}
	\hat{\pi}(A) := \pi(A|B_{n,\beta})=\frac{\pi(A \cap B_{n,\beta})}{\pi(B_{n,\beta})}, \quad A \subset \mathcal{E}^n.
\end{equation*}
It holds that (see, for example, Lemma 2 in \cite{Cui-2014}) that
\begin{equation} \label{eq:bound_piB}
	\pi(B_{n,\beta}) = \pi(\etab \in H_{n,\beta}^\pm) = \pi \left( \sum_{j=1}^n \eta_j \in \left[c_0 n^{1-\beta},c_1 n^{1-\beta} \right] \right) \geq 1-\smallO(n^{-1}).
\end{equation}
Therefore, by using (\ref{eq:bound_by_Bayes_risk}) as $n \to \infty$, we can estimate the minimax risk from bellow as follows
\begin{align}
&	\inf_{\tilde{\etab}} \sup_{\etab \in H^\pm_{n,\beta}} \E_{\etab} |\etab - \tilde{\etab}| \geq \inf_{\tilde{\etab}} \int_{B_{n,\beta}} \E_{\etab} |\etab - \tilde{\etab}| d \hat{\pi}(\etab) \nonumber \\
& \qquad = \inf_{\tilde{\etab}} \frac{1}{\pi(B_{n,\beta})} \int_{B_{n,\beta}}  \E_{\etab} |\etab - \tilde{\etab}| d \pi(\etab) \nonumber \\
	&\qquad = (1+\smallO(n^{-1})) \inf_{\tilde{\etab}}  \left( \int_{\mathcal{E}^n} \E_{\etab} |\etab - \tilde{\etab}| d \pi(\etab) - \int_{B^c_{n,\beta}} \E_{\etab} |\etab - \tilde{\etab}| d \pi(\etab) \right) \nonumber \\
	& \qquad \geq (1+\smallO(n^{-1})) \left( \inf_{\tilde{\etab}} \int_{\mathcal{E}^n} \E_{\etab} |\etab - \tilde{\etab}| d \pi(\etab) - \sup_{\tilde{\etab}} \int_{B^c_{n,\beta}} \E_{\etab} |\etab - \tilde{\etab}| d \pi(\etab) \right) \nonumber \\
	&\qquad =: (1+\smallO(n^{-1})) (K_1 - K_2). \label{def:K1_K2}
\end{align}
It is shown in \cite[p. 23]{Cui-2014} that $K_2=\sup_{\tilde{\etab}} \int_{B^c_{n,\beta}} \E_{\etab} |\etab - \tilde{\etab}| d \pi(\etab) \leq \smallO(1)$, and is therefore negligible. Now, consider the main term $K_1$. When $n \to \infty$,
\begin{align}
	K_1=\inf_{\tilde{\etab}} \int_{\mathcal{E}^n} \E_{\etab} |\etab - \tilde{\etab}| d \pi(\etab) &= \inf_{\tilde{\etab}}  \int_{\mathcal{E}^n} \left( \sum_{j=1}^n \E_{\eta_j} |\eta_j - \tilde{\eta}_j|\right)d \pi(\etab)\nonumber \\
	 &= \inf_{\tilde{\etab}}  \sum_{j=1}^n \E_{\pi}\E_{\eta_j} |\eta_j - \tilde{\eta}_j|\nonumber \\
	& \geq \sum_{j=1}^n \inf_{\tilde{\etab}}   \E_{\pi}\E_{\eta_j} |\eta_j - \tilde{\eta}_j| = n \inf_{\tilde{\eta_1}}  \E_{\pi_1}\E_{\eta_1} |\eta_1 - \tilde{\eta}_1|,
\end{align}
where $\eta_1$ has the prior distribution $\pi_1 = Bernoulli(r)$ and $\tilde{\eta}_1$ is an arbitrary estimate of $\eta_1$ based on the first observation $\Xb_1$ in model (\ref{model}). Since $\tilde{\eta}_1$ takes only the values 0 and 1, it may be viewed as a (nonrandomized) test function and hence
\begin{equation} \label{eq:Bayes_risk}
	K_1/n = \inf_{\tilde{\eta_1}}  \E_{\pi_1}\E_{\eta_1} |\eta_1 - \tilde{\eta}_1| = \inf_{\tilde{\eta_1}} \left[ r \E_{\ab}(1-\tilde{\eta}_1) + (1-r) \E_{\boldsymbol{0}}(\tilde{\eta}_1) \right]
\end{equation}
is the minimal Bayes risk in the testing problem $H_0: \boldsymbol{v} = \boldsymbol{0}$ vs. $H_1: \boldsymbol{v} = \boldsymbol{a}$ in the model $\Xb_1 = \boldsymbol{v} + \boldsymbol{\varepsilon}_1$, $\boldsymbol{\varepsilon}_1 \sim N_p(\boldsymbol{0},I)$ with $\boldsymbol{v} = \boldsymbol{a} \eta_1$, $\ab \in (0,\infty)^p.$ It is known (see, for example, \cite{Lehmann-2005}) that 
the minimum of the Bayes risk (\ref{eq:Bayes_risk}) is attained by the test function
\begin{equation}
	\eta^{opt} (\Xb_1) = \mathds{1}\left(\frac{rf_a(\Xb_1)}{(1-r)f_0(\Xb_1)} >1\right),
\end{equation}
where $f_\ab(\Xb_1)$ and $f_{\boldsymbol{0}}(\Xb_1)$ are densities of $N_p(\ab,\Sigma)$ and $N_p(\boldsymbol{0},\Sigma)$, i.e.
\begin{align}
	f_{\ab}(\xb) &= (2\pi)^{-p/2}|\Sigma|^{-1/2} \exp{\left\{ -\frac{1}{2}(\xb - \ab)^\top \Sigma^{-1} (\xb - \ab)\right\} }, \\
	f_{\boldsymbol{0}}(\xb) &= (2\pi)^{-p/2}|\Sigma|^{-1/2} \exp{\left\{ -\frac{1}{2}\xb^\top \Sigma^{-1} \xb\right\} }.
\end{align}
Then, we may write 
\begin{align}
	K_1 &= n\inf_{\tilde{\eta_1}} \left[ r \Prob{\ab}{\tilde{\eta_1} = 0} + (1-r) \Prob{\boldsymbol{0}}{\tilde{\eta_1}=1} \right] \nonumber \\
	& = nr \Prob{\ab}{\eta^{opt}(\Xb_1) = 0} + n(1-r) \Prob{\boldsymbol{0}}{\eta^{opt}(\Xb_1) =1} \nonumber \\
	& = nr \Prob{\ab}{\frac{r f_\ab (\Xb_1)}{(1-r)f_{\boldsymbol{0}}(\Xb_1)}\leq 1} + n(1-r) \Prob{\boldsymbol{0}}{\frac{r f_\ab (\Xb_1)}{(1-r)f_{\boldsymbol{0}}(\Xb_1)}> 1} \nonumber \\
	& =: J_1 + J_2. \label{eq:def_J1_J2}
\end{align}
By substituting the ratio
\begin{align*} 
	\frac{f_{\ab}(\xb)}{f_{\boldsymbol{0}}(\xb)} &=  \exp \left\{ -\frac{1}{2}(\xb - \ab)^\top \Sigma^{-1} (\xb - \ab) + \frac{1}{2}\xb^\top \Sigma^{-1} \xb \right\} = \exp \left\{ -\frac{1}{2} \ab^\top \Sigma^{-1}\ab + \ab^\top \Sigma^{-1} \xb \right\}
\end{align*}
and $r = n^{-\beta}$ into (\ref{eq:def_J1_J2}), we obtain 
\begin{align} \label{eq:J2_v1}
	J_2 &= n(1-n^{-\beta}) \Prob{\boldsymbol{0}}{\frac{n^{-\beta}}{1-n^{-\beta}}  \exp \left\{ -\frac{1}{2} \ab^\top \Sigma^{-1}\ab + \ab^\top \Sigma^{-1} \xb \right\} > 1} \nonumber \\  
	&= n(1-n^{-\beta}) \Prob{\boldsymbol{0}}{\exp \left\{ -\frac{1}{2} \ab^\top \Sigma^{-1}\ab + \ab^\top \Sigma^{-1} \xb \right\} > n^\beta -1} \nonumber \\  
	&= n(1-n^{-\beta}) \Prob{\boldsymbol{0}}{\ab^\top \Sigma^{-1} \Xb_1 > \log\left(n^\beta-1 \right) + \frac{1}{2} \ab^\top \Sigma^{-1}\ab} ,
\end{align}
where $\Xb_1 \sim N_p(\boldsymbol{0},\Sigma)$ and so 
\begin{equation*}
	\frac{1}{\sqrt{\ab^\top \Sigma^{-1}\ab}} \ab^{\top} \Sigma^{-1} \Xb_1 \sim N(0,1).
\end{equation*}
Therefore, (\ref{eq:J2_v1}) turns into
\begin{align} \label{eq:J2_v2}
	J_2 &= n(1-n^{-\beta}) \Phi \left( - \frac{1}{\sqrt{\ab^\top \Sigma^{-1}\ab}} \left( \log\left(n^\beta-1 \right) + \frac{1}{2} \ab^\top \Sigma^{-1}\ab \right)\right) ,
\end{align}
where $\Phi$ stands for the cumulative distribution function of $N(0,1)$. Since $\sqrt{\ab^\top \Sigma^{-1}\ab} = \lVert L^{-1}\ab \rVert_2$, the well-known relation between the $\ell_1$ and $\ell_2$ norms implies that
\begin{equation} \label{eq:bound_sqrt_aSgimainva}
	\frac{1}{\sqrt{p}} \lVert L^{-1} \ab \rVert_1 \leq \sqrt{\ab^\top \Sigma^{-1}\ab}  \leq \lVert L^{-1} \ab \rVert_1.
\end{equation}
Using the asymptotic relation $\log (n^\beta -1) \sim \beta \log(n)$ as $n \to \infty$, together with (\ref{eq:bound_sqrt_aSgimainva}) and condition (\ref{cond:a_LB}), which states that
$$ \lVert L^{-1} \ab \rVert_1 =\gamma \sqrt{p\log n}, \quad \quad 0 < \gamma < \sqrt{2}(1+\sqrt{1-\beta}),$$ 
we obtain from (\ref{eq:J2_v2})
\begin{align}\label{eq:J2_v3}
	J_2 &> n(1-n^{-\beta}) \Phi \left( - \frac{\sqrt{p}}{\lVert L^{-1} \ab \rVert_1} \left( \beta \log n  + \frac{1}{2} \lVert L^{-1} \ab \rVert_1 \right)\right) \nonumber \\
	& > n(1-n^{-\beta}) \Phi \left( - \frac{1}{\gamma \sqrt{\log n}} \left( \beta \log\left(n \right) + \frac{1}{2} \gamma \sqrt{p\log n} \right)\right) \nonumber \\
	&= n(1-n^{-\beta}) \Phi \left( - \frac{\beta}{\gamma} \sqrt{\log n} - \frac{1}{2} \sqrt{p} \right).
\end{align}
Moreover, by applying $n(1-n^{-\beta}) \sim n$, $\sqrt{p}/2 = \smallO(\sqrt{\log n})$ as $n \to \infty$, and the fact \cite[Proposition 2.1.2]{Vershynin-2018} that 
\begin{equation} \label{eq:N01_tail}
	\Phi(-t) \sim \frac{e^{-t^2/2}}{\sqrt{2 \pi} t}, \quad t \to \infty,
\end{equation}
we further obtain from (\ref{eq:J2_v3}), that
\begin{align}\label{eq:J2_v4}
	J_2 &> n \frac{1}{\sqrt{2\pi \log n} \beta/\gamma} \exp \left\{ -\frac{1}{2} \frac{\beta^2}{\gamma^2} \log n \right\} \asymp  \frac{1}{\sqrt{\log n}} n^{1-\beta^2/(2\gamma^2)} .
\end{align}
Since $$1-\frac{\beta^2}{2\gamma^2} > 0 \Leftrightarrow \gamma > \sqrt{\frac{\beta}{2}},$$
we see that
\begin{equation} \label{eq:J2_v5}
	J_2 > 0 \quad \quad \text{for } \sqrt{\frac{\beta}{2}} < \gamma < \sqrt{2}(1+\sqrt{1-\beta}).
\end{equation}

Now, consider the case 
\begin{equation} \label{cond:gamma2}
	\lVert L^{-1} \ab \rVert_1 =\gamma \sqrt{p\log n}, \quad \quad 0 < \gamma \leq \sqrt{\frac{\beta}{2}}.
\end{equation}
By using the same arguments as above and the fact that $\frac{1}{\sqrt{\ab^\top \Sigma^{-1}\ab}} \ab^{\top} \Sigma^{-1} (\Xb_1-\ab) \sim N(0,1)$, we obtain 
\begin{align} \label{eq:J1_v1}
	J_1 > n^{1-\beta} \Phi \left( \frac{\beta \log n}{\gamma \sqrt{p\log n}} - \frac{\gamma}{2} \sqrt{p\log n} \right) = n^{1-\beta} \Phi \left( \left( \frac{\beta}{\gamma p} - \frac{\gamma}{2} \right) \sqrt{p\log n}  \right). 
\end{align}
Moreover, due to (\ref{cond:gamma2}), we can write
\begin{align} \label{eq:J1_v2}
	J_1 > n^{1-\beta} \Phi \left( \left( \frac{\sqrt{2\beta}}{p} - \frac{\sqrt{\beta}}{2\sqrt{2}} \right) \sqrt{p\log n}  \right) =  n^{1-\beta} \Phi \left( \left( \frac{1}{p} - \frac{1}{4} \right) \sqrt{2\beta p\log n}  \right).
\end{align}
It is immediate that $J_1 >0$ for $p\leq 4$. For $p > 4$, we have $1/p-1/4 <0$ and applying (\ref{eq:N01_tail}) again gives
\begin{align} 
	J_1 &> n^{1-\beta} \frac{1}{\sqrt{2\pi} \left( -\frac{1}{ p} + \frac{1}{4} \right) \sqrt{2 \beta p\log n} }\exp \left( -\left( \frac{1}{ p} - \frac{1}{4} \right)^2 \beta p\log n  \right) \nonumber \\
	& \asymp \frac{1}{\sqrt{\log n}} n^{1-\beta - \left( \frac{1}{ p} - \frac{1}{4} \right)^2 \beta p}  >0,
\end{align} 
since $\log^s (n) = \smallO(n^q)$, $n \to \infty$, for any $s>0$ and $q>0$. In our case, $s=\frac{1}{2}$ and $q = 1-\beta - \left( \frac{1}{ p} - \frac{1}{4} \right)^2 \beta p >0$ for $p \in \bigg[4, \lfloor \frac{8}{\beta} (1+\sqrt{1-\beta})-4 \rfloor \bigg) =: [4,p_+)$, where the upper bound on $p$ is guaranteed by the Theorem assumption. Combining the above results, we obtain 
\begin{equation} \label{eq:J1_v4}
	J_1 >0, \quad \quad \text{for } 0 < \gamma \leq \sqrt{\frac{\beta}{2}} \quad \text{and} \quad p=1,\ldots, p_+.
\end{equation}

Substituting (\ref{eq:J2_v5}) and (\ref{eq:J1_v4}) into (\ref{eq:def_J1_J2}) yields $K_1>0$. Therefore,
$$ \inf_{\tilde{\etab}} \mathcal{R}(\tilde{\etab},\etab) >0,$$
which completes the proof.

\qedsymbol

\section{Additional figures for Section 4}
	\label{sec:Sfigures}
	In this section, we provide QQ-plots and autocorrelation function plots from the real data analyses presented in Section 4 of the main paper.

\subsection{Daily step counts}
The following figures are provided to assess the normality and independence of the ARIMA residuals analyzed in Section 4.1 of the main paper. The specific ARIMA model used is indicated in each figure caption.

\clearpage 
\begin{figure}[h!]
		\centering
		\includegraphics[width=0.8\textwidth]{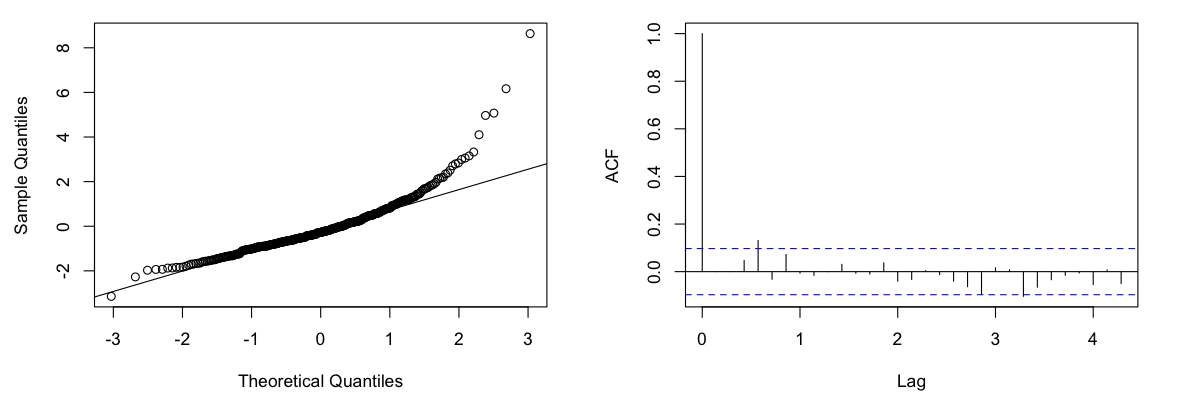}
	\caption{Participant with ID 999. QQ-plot and the autocorrelation function for model residuals. Time period 409 days long. Time series modeled by ARIMA(1,1,2)(0,0,1)[7].}
	\label{fig:steps_id999}
\end{figure}

\begin{figure}[h!]
		\centering
		\includegraphics[width=0.8\textwidth]{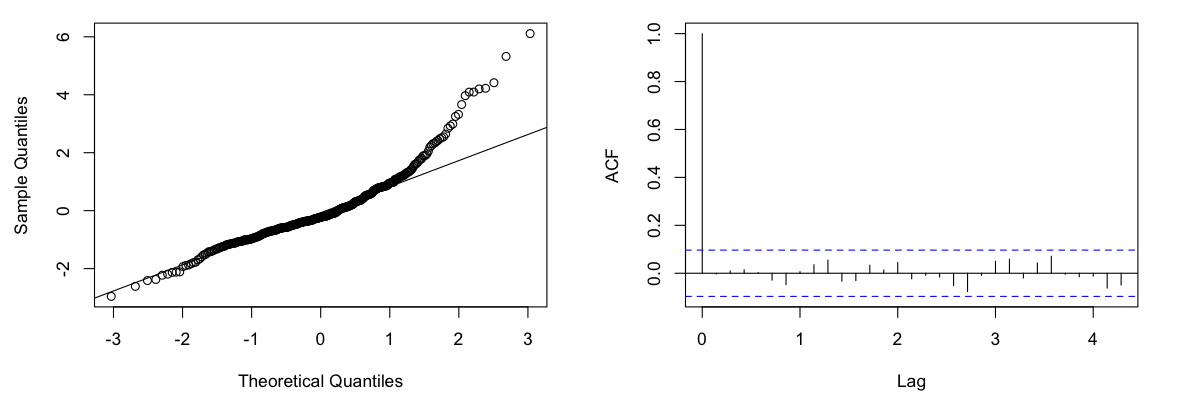}
	\caption{Participant with ID 1429. QQ-plot and the autocorrelation function for model residuals. Time period 412 days long. Time series modeled by ARIMA(1,1,1)(0,0,1)[7]. }
	\label{fig:steps_id1429}
\end{figure}

\begin{figure}[h!]
	\centering
		\includegraphics[width=0.8\textwidth]{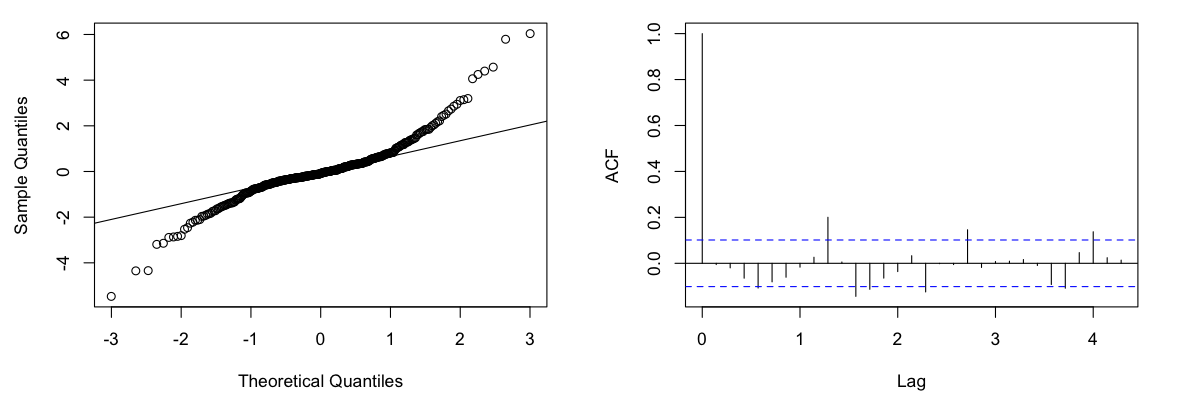}
	\caption{Participant with ID 2110. QQ-plot and the autocorrelation function for model residuals. Time period 373 days long. Time series modeled by ARIMA(5,1,0)(2,0,0)[7]. }
	\label{fig:steps_id2110}
\end{figure}

\subsection{Air pollution data}
The figure below illustrates diagnostic checks of the VAR residuals considered in Section 4.2 of the main paper. 

\begin{figure}[h]
	\centering
	\begin{subfigure}[h]{\textwidth}
		\centering
		\includegraphics[width=0.85\textwidth]{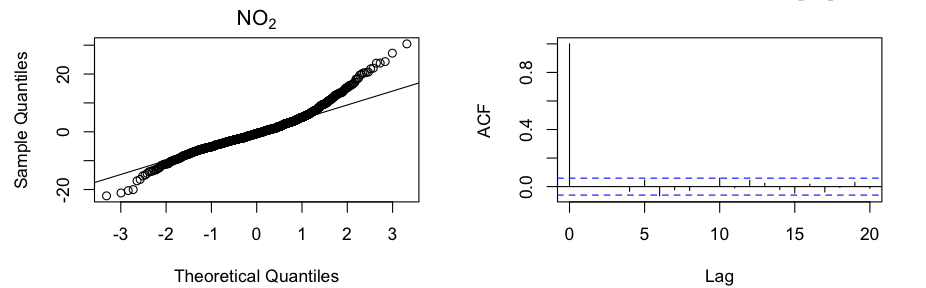}
	\end{subfigure}
	~
	\begin{subfigure}[h]{\textwidth}
		\centering
		\includegraphics[width=0.85\textwidth]{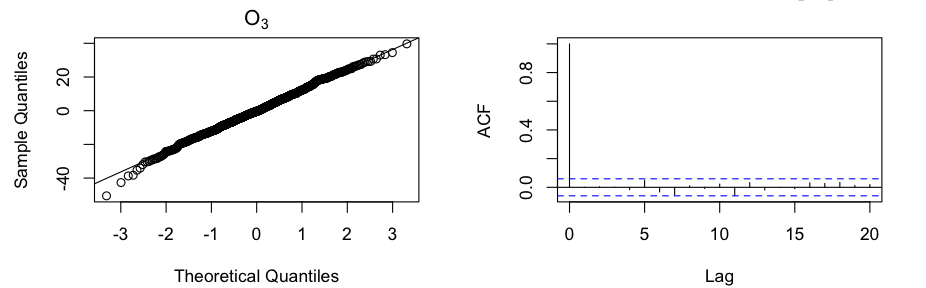}
	\end{subfigure}
	~ 
	\begin{subfigure}[h]{\textwidth}
		\centering
		\includegraphics[width=0.85\textwidth]{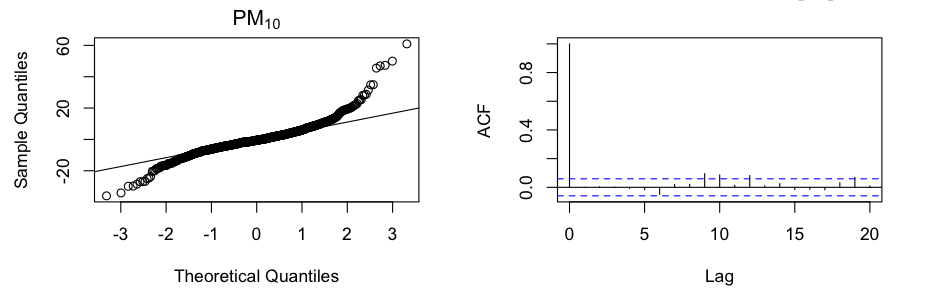}
	\end{subfigure}
	~
	\begin{subfigure}[h]{\textwidth}
		\centering
		\includegraphics[width=0.85\textwidth]{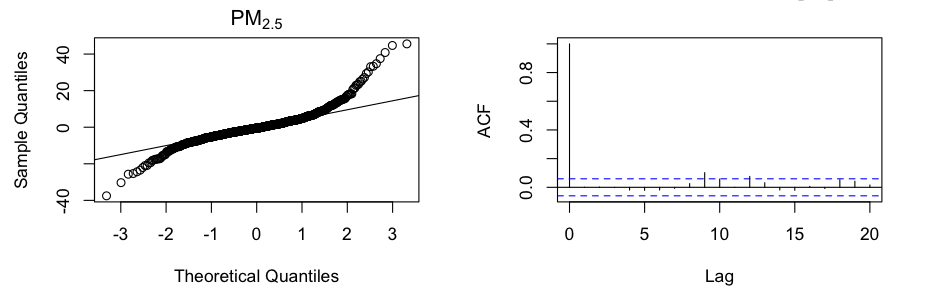}
	\end{subfigure}
	~
	\begin{subfigure}[h]{\textwidth}
		\centering
		\includegraphics[width=0.85\textwidth]{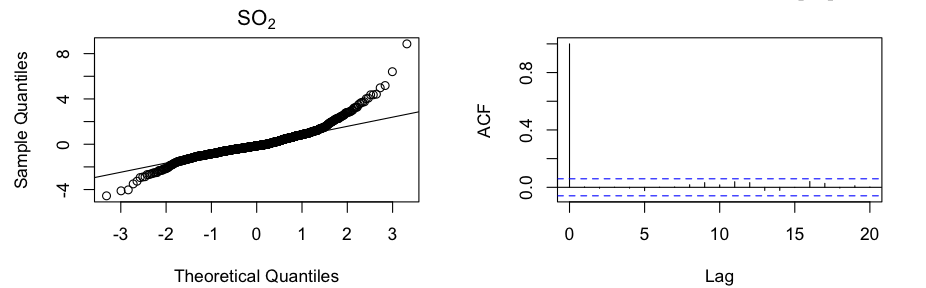}
	\end{subfigure}  
	\caption{QQ-plot and autocorrelation function of the residuals from the VAR(4) model for pollutants NO$_2$, O$_3$, PM$_{10}$, PM$_{2.5}$ and SO$_2$. }
	\label{fig:QQplot_ACF_4_pollutants}
\end{figure}

\end{document}